\documentclass[10pt,thm-restate]{article}

\usepackage[pdftex]{graphicx}
\usepackage{pgfplots}
\pgfplotsset{compat=1.14}
\usepackage{tikz-3dplot}
\usepackage{xspace}
\usepackage{tikz}
\usepackage{multirow}
\usepackage{marvosym}
\usepackage{enumitem}
\usepackage{fullpage}
\usepackage{hyperref}
\usepackage{amsthm}
\usepackage{amssymb}
\usepackage{mathtools}
\usepackage{bm}
\usepackage{amsfonts}
\usepackage{booktabs}
\usepackage{subcaption}
\usepackage{relsize}
\usepackage[T1]{fontenc}

\usepackage{amsmath}
\usepackage{hyperref}
\usepackage{pgfplots}
\pgfplotsset{compat=1.16}
\usepackage{tikz-3dplot}
\usepackage{xspace}
\usepackage{tikz}
\pgfdeclarelayer{background}
\pgfsetlayers{background,main}

\usepackage{float}
\usepackage{multirow}
\usepackage{marvosym}
\usepackage{enumitem}
\usepackage{booktabs}
\usepackage{subcaption}
\usepackage{tcolorbox}
\usepackage{mdframed}
\usepackage{minibox}
\usepackage{xcolor,colortbl}
\usepackage{makecell}
\usepackage{arydshln}

\usepackage{thmtools, thm-restate}

\theoremstyle{plain}                  
\newtheorem{theorem}{Theorem}
\newtheorem{lemma}[theorem]{Lemma}
\newtheorem{proposition}[theorem]{Proposition}

\newtheorem{definition}[theorem]{Definition}
\newtheorem{example}[theorem]{Example}

\makeatletter
\newcommand{\customsize}{%
  \@setfontsize\customfontsize{10.5pt}{12.5pt}% Adjust the values as desired
  \selectfont
}

%%%%%%%%%%%%%%%%%%%%%%%%%%%%%
% Mathematical Notation
%%%%%%%%%%%%%%%%%%%%%%%%%%%%

\newcommand{\banz}{Banzhaf}
\newcommand{\shap}{Shapley}

\newcommand{\pptwodnf}{\text{PP2DNF}\xspace}
\newcommand{\vars}{vars}
\newcommand{\defeq}{\stackrel{\text{def}}{=}}
\newcommand{\at}{at}
\newcommand{\Dom}{\mathsf{Dom}}
%\newcommand{\const}{\mathsf{Const}}

%%%%%%%%%%%%%%%%%%%%%%%%%%%%%
% Pseudocode
%%%%%%%%%%%%%%%%%%%%%%%%%%%%
\newcommand{\FOREACH}{\textbf{foreach}\xspace}%
\newcommand{\DO}{\textbf{do}\xspace}%
\newcommand{\STAB}{\makebox[1ex][r]{}}%
\newcommand{\TAB}{\makebox[2.5ex][r]{}}%
\newcommand{\IF}{\textbf{if}\xspace}%
\newcommand{\RETURN}{\textbf{return}\xspace}%
\newcommand{\LET}{\textbf{let}\xspace}%
\newcommand{\ELSE}{\textbf{else}\xspace}%
\newcommand{\SWITCH}{\textbf{switch}\xspace}%
\newcommand{\CASE}{\textbf{case}\xspace}%
\newcommand{\DEFAULT}{\textbf{default}\xspace}%
%

%%%%%%%%%%%%%%%%%%%%%%%%%
% Problems
%%%%%%%%%%%%%%%%%%%%%%%%
\newcommand{\compbanz}{\textsc{RankBan}\xspace}
\newcommand{\countbis}{\#\textsc{BIS}\xspace}

\newcommand{\countnsat}{\#\textsc{NSat}\xspace}
\newcommand{\adaban}{\textsc{AdaBan}\xspace}

%%%%%%%%%%%%%%%%%%%%%%%%%
% Complexity classes
%%%%%%%%%%%%%%%%%%%%%%%%

\newcommand{\fptas}{\textsc{FPTAS}\xspace}

\newcommand{\fpsharpp}{\textsc{FP}^{\#\textsc{P}}\xspace}

%%%%%%%%%%%%%%%%%%%%%%%%%%%%%%%
% Calligraphic symbols
%%%%%%%%%%%%%%%%%%%%%%%%%%%%%%%

\newcommand{\revision}[1]{{#1}}

%%%%%%%%%%%%%%%%%%%%%%%%%
%Comments
%%%%%%%%%%%%%%%%%%%%%%%
\newcommand{\nop}[1]{}

\newcommand{\scream}[1]{\textcolor{blue}{$\ll$\textsf{#1 --TBD}$\gg$}}

\date{}
\title{Banzhaf Values for Facts in Query Answering}

\author{
\begin{tabular}{ccc}
\hspace{0.5cm} Omer Abramovich\hspace{0.5cm}   &\hspace{0.5cm} Daniel Deutch\hspace{0.5cm} &\hspace{0.5cm}  Nave Frost\hspace{0.5cm} \\
\hspace{0.5cm} \customsize{Tel Aviv University}\hspace{0.5cm}   &\hspace{0.5cm}   \customsize{Tel Aviv University}\hspace{0.5cm}  &\hspace{0.5cm}  \customsize{eBay Research}\hspace{0.5cm}  \\
\hspace{0.5cm} omera1@mail.tau.ac.il\hspace{0.5cm}   &\hspace{0.5cm}   \customsize{danielde@post.tau.ac.il}\hspace{0.5cm}  &\hspace{0.5cm}   \customsize{nafrost@ebay.com} 
\hspace{0.5cm} 
\end{tabular}\\\\
\begin{tabular}{cc}
\hspace{0.5cm}Ahmet Kara\hspace{0.5cm} &\hspace{0.5cm} Dan Olteanu\hspace{0.5cm}\\
\hspace{0.5cm}\customsize{University of Zurich}\hspace{0.5cm} &  \hspace{0.5cm}\customsize{University of Zurich}\hspace{0.5cm}  \\
\hspace{0.5cm}\customsize{kara@ifi.uzh.ch}\hspace{0.5cm} &  \hspace{0.5cm}\customsize{olteanu@ifi.uzh.ch}\hspace{0.5cm}  
\end{tabular}
}

\begin{document}
\maketitle

\begin{abstract}
 Quantifying the contribution of database facts to query answers has been studied as means of explanation. The Banzhaf value, originally developed in Game Theory, is a natural measure of fact contribution, yet its efficient computation for select-project-join-union queries is challenging.
In this paper, we introduce three algorithms to compute the Banzhaf value of database facts: an exact algorithm, an anytime deterministic approximation algorithm with relative error guarantees, and an algorithm for ranking and top-$k$. They have three  key building blocks: compilation of query lineage into an equivalent function that allows efficient Banzhaf value computation; dynamic programming computation of the Banzhaf values of variables in a Boolean function using the Banzhaf values for constituent functions; and a mechanism to compute efficiently lower and upper bounds on Banzhaf values for any positive DNF function.

We complement the algorithms with a dichotomy for the Banzhaf-based ranking problem: given two facts, deciding whether the Banzhaf value of one is greater than of the other is tractable for hierarchical queries and intractable for non-hierarchical queries.

We show experimentally that our algorithms significantly outperform exact and approximate algorithms from prior work, most times up to two orders of magnitude. Our algorithms can also cover challenging problem instances that are beyond reach for prior work.
\end{abstract}

\section{Introduction}
Explaining the answer to a relational query is a fundamental problem in data management \cite{DBS-006,DBLP:conf/pods/GreenT17,chapman2009not,DBLP:journals/pvldb/MeliouRS14,DBLP:journals/pvldb/LeeLG20,whyprovenance,DBLP:journals/vldb/HerschelDL17,cuiweinerwidom,DBLP:conf/sigmod/MiaoZGR19}.
One main approach to explanation is based on attribution, where each tuple from the input database is assigned a score reflecting its contribution to the query answer. A measure that quantifies the contribution of a fact to the query answer is the {\em Banzhaf} value~\cite{Penrose:Banzhaf:1946,Banzhaf:1965}. It has found applications in various domains. Most prominently, it is used as a measure of voting power in the analysis of voting in the Council of the European Union~\cite{Banzhaf:Voting:2012}. It was shown to provide more robust data valuation across subsequent runs of stochastic gradient descent than alternative scores such as the Shapley value~\cite{pmlr-v206-wang23e}. It is used for understanding feature importance in training tree ensemble models, where it is preferable over the Shapley value as it can be computed faster and it can be numerically more robust~\cite{Banzhaf:FeatureImportance:2022}. In Banzhaf random forests~\cite{Banzhaf:RandomForests:2018}, it is used to evaluate the importance of each feature across several possible feature sets used for training random forests. It is also used as a measure of risk analysis in terrorist networks~\cite{Banzhaf:Networks:2023}.

This paper starts a systematic investigation of both theoretical and practical facets of three computational problems for Banzhaf-based fact attribution in query answering: exact computation, approximation, and ranking. 
Our contribution is fourfold. 

\smallskip

\smallskip

{\em 1. Exact Banzhaf Computation.}
We introduce \textsc{ExaBan}, an algorithm that computes the exact Banzhaf scores for the contributions of facts in the answers to positive relational queries (Select-Project-Join-Union in SQL). Its input is the query lineage, which is a Boolean positive function whose variables are the database facts. Its output is the Banzhaf value of each variable. It relies on the compilation of the lineage into a d-tree, a data structure previously used for efficient computation in probabilistic databases~\cite{FinkHO:AdaProb:2013}. The compilation recursively decomposes the function into a disjunction or conjunction of (independent) functions over disjoint sets of variables, or into a disjunction of (mutually exclusive) functions with disjoint sets of satisfying variable assignments. Our use of d-tree is justified by the observation that if we have the Banzhaf values for independent or mutually exclusive functions, we can then compute the Banzhaf values for the conjunction or disjunction of these functions. 
In our experiments with over 300 queries and three widely-known datasets (TPC-H, IMDB, Academic), \textsc{ExaBan} consistently outperforms the state-of-the-art solution~\cite{DeutchFKM:SIGMOD:2022}, which we adapted to compute Banzhaf instead of Shapley values. The performance gap is up to two orders of magnitude on those workloads for which the prior work finishes within one hour, while \textsc{ExaBan} also succeeds to terminate within one hour for 41.7\%-99.2\% (for the different datasets) of the cases for which prior work failed. 

\smallskip

{\em 2. Anytime Deterministic Banzhaf Approximation.}
We also introduce \adaban, an algorithm that computes  approximate Banzhaf values of facts. \adaban\ is an {\em approximation algorithm} in the sense that it computes an interval $[\ell,u]$ that contains the exact Banzhaf value of a given fact. It is {\em deterministic} in the sense that the exact value is guaranteed to be contained in the approximation interval\footnote{This is in stark contrast to randomized approximation schemes, where the exact value is contained in the approximation interval with a probability $\delta\in(0,1)$.}. 
It is {\em anytime} in the sense that it can be stopped at any time and provides a correct approximation interval for the exact Banzhaf value. Each decomposition step  cannot enlarge the approximation interval. 
Given any error $\epsilon\in[0,1]$ and an approximation interval $[\ell,u]$ computed by \adaban, if $(1-\epsilon)u \leq (1+\epsilon)\ell$, then any value in the interval $[(1-\epsilon)u,(1+\epsilon)\ell]$ is a (relative) $\epsilon$-approximation of the exact Banzhaf value. \adaban\ provably reaches the desired approximation error\footnote{In contrast, the randomized approximation schemes cannot guarantee that by executing one more iteration step the approximation interval does not enlarge.} after a number of steps. {\em In the worst case}, any deterministic approximation algorithm needs exponentially many steps in the number of facts\footnote{Otherwise, it would contradict the hardness of exact Banzhaf value computation~\cite{LivshitsBKS:LMCS:2021} that is attained by \textsc{AdaBan} for $\epsilon=0$.}.
Yet in practical settings including our experiments, \adaban's behavior is much better than the theoretical worst case. For instance, \textsc{AdaBan} takes up to one order of magnitude less time than \textsc{ExaBan} to reach $\epsilon=0.1$.

\textsc{AdaBan} has two main ingredients:  (1) the incremental decomposition of the query lineage into a d-tree, and (2) a mechanism to compute lower and upper bounds on the Banzhaf value for a variable in any positive DNF function. 

The first ingredient builds on \textsc{ExaBan}. Unlike \textsc{ExaBan}, \textsc{AdaBan} does not exhaustively compile the lineage into a d-tree before computing the Banzhaf values. Instead, it intertwines the incremental compilation of the lineage with the computation of approximation intervals for the Banzhaf value.
If an interval reaches the desired approximation error, then \textsc{AdaBan} stops the computation; otherwise, it further expands the d-tree. Thus, it may finish after much fewer decomposition steps than \textsc{ExaBan}. This is the main reason behind \textsc{AdaBan}'s speedup over \textsc{ExaBan}, as reported in our experiments.

The second ingredient is the computation of approximation intervals. \textsc{AdaBan} can derive lower and upper bounds on the Banzhaf value for any variable in positive DNF functions at the leaves of a d-tree. While the bounds may be arbitrarily loose, they can be computed in time linear in the function size. Given approximation intervals at the leaves of a d-tree, \textsc{AdaBan} computes an approximation interval for the entire d-tree, and thus for the query lineage.

\smallskip

{\em 3. Banzhaf-based Ranking and Top-$k$ Facts.} 
We also introduce \textsc{IchiBan}, an algorithm that ranks facts and selects the top-$k$ facts based on their Banzhaf values. \textsc{IchiBan} is a natural generalization of \adaban: It incrementally refines the  approximation intervals for the Banzhaf values of all facts until the intervals are separated or become the same Banzhaf value. Two intervals are separated when the lower bound of one becomes larger than the upper bound of the other. \textsc{IchiBan} also supports approximate ranking, where the approximation intervals are ordered by their middle points.
\nop{Like \textsc{ExaBan} and \textsc{AdaBan}, \textsc{IchiBan} shares the function decomposition at each step across all facts and keeps track of approximation intervals for several facts at any  time.} 

The top-$k$ problem is to find $k$ facts whose Banzhaf values are the largest across all facts in the database. To obtain such top-$k$ facts, we proceed similarly to ranking. We start by incrementally tightening the approximation intervals for the Banzhaf values of all facts. Once the approximation interval for a fact is below the lower bound of at least $k$ other facts, we discard that fact from our computation. Alternatively, we can stop the execution when the overlapping approximation intervals reach a given error, at the cost of allowing approximate top-$k$.

Our experiments show that when \textsc{IchiBan} is prompted to produce approximate ranking or top-$k$ results, in practice it achieves near-perfect results. This is true even in cases where previous work ~\cite{DeutchFKM:SIGMOD:2022}, which gives no top-$k$ correctness guarantees, produces inaccurate results.  Furthermore, \textsc{IchiBan} is by up to an order of magnitude faster than computing the exact Banzhaf values.

\smallskip

{\em 4. Dichotomy for Banzhaf-based Ranking.} 
Our fourth contribution is a dichotomy for the complexity of the ranking problem in case of self-join-free Boolean conjunctive queries: Given two facts, deciding whether the Banzhaf value of one fact is greater than the Banzhaf value of the other fact is tractable (i.e., in polynomial time) for hierarchical queries and intractable (i.e., not in polynomial time) for non-hierarchical queries. This dichotomy coincides with the dichotomy for the exact computation of Banzhaf values~\cite{LivshitsBKS:LMCS:2021}. This is surprising, since ranking facts does not require in principle their exact Banzhaf values but just an approximation sufficient to rank them (as done in \textsc{IchiBan}). The tractability for ranking is implied by the tractability for exact computation (since we can first compute the exact Banzhaf values of all facts in polynomial time and then sort the facts by their Banzhaf values), yet the intractability for ranking is {\em not} implied by the intractability for exact computation. Our intractability result relies on the conjecture that an efficient (i.e., polynomial in the inverse of the error and in the graph size) approximation for counting the independent sets in a bipartite graph is not possible~\cite{DyerGGJ03,CurticapeanDFGL19}.

\smallskip

The paper is organized as follows. Sec.~\ref{sec:prelim} introduces the notions of Banzhaf value, Boolean functions, relational databases and queries, and query lineage. Sec.~\ref{sec:algorithm} introduces the algorithms for exact  and approximate computation of Banzhaf values. Sec.~\ref{sec:ranking_top_k} introduces our algorithm for Banzhaf-based top-$k$ and ranking and our dichotomy for  ranking. 
Sec.~\ref{sec:experiments} details our experimental findings. Sec.~\ref{sec:related} contrasts our contributions to prior work on approximate computation and attribution by Shapley values. Sec.~\ref{sec:conc} concludes. Full proofs of formal statements are deferred to the Appendix.

%%%%%%%%%%
\nop{
Nevertheless, we show that for Banzhaf values in query answering, the dichotomy holds for the ranking problem as well. Namely, we show that for any non-hierarchical conjunctive query, and under some commonly made complexity assumptions, there is no PTIME algorithm for the problem. Specifically, the complexity assumption that we use is that there is no Fully Polynomial Time Approximation Scheme (FPTAS) for counting the number of independent sets in a bipartite graph \scream{cite hardness}. In a nutshell, the reduction involves a construction \scream{complete}.

\paragraph*{Anytime Approximation Algorithm for Banzhaf values} In lieu of exact computation or exact ranking, we aim to design approximation algorithms. We start by generating the query provenance, which for SPJU queries takes the form of boolean formulas in DNF \cite{}. The Banzhaf value of the variables occurring in the formula with respect to its satisfaction is then exactly the Banzhaf value of the corresponding tuple. Then, we note that in the different context of probabilistic databases \cite{pdb}, the work of \cite{olteanu2008using} has proposed an {\em  anytime approximation algorithm}, namely an algorithm whose precision gradually increases over time. The algorithm in  \cite{olteanu2008using} is based on the notion of a {\em d-tree}, a particular way of representing boolean expressions that is favorable for probability computation and -- as we show here -- also for Banzhaf value computation. D-trees take the form of disjunction/conjunction of formulas that either do not share variables or are mutually exclusive \scream{Revisit}, and their leaves may include arbitrary formulas. This means that d-trees can be gradually expanded, starting with a single node that includes the provenance in DNF and at each step expanding a node into a sub-tree that follows one of the above forms.  

Here, we present a novel method for computing and propagating lower and upper bounds on Banzhaf Values, {\em in parallel for all variables of the provenance formula} through a (possibly partially expanded) d-tree. The main idea is to bound the number of {\em critical truth assignments} with respect to {\em every} variable $x$ , namely truth assignments to all variables in which $x$ is set to false, the assignment does not satisfy the provenance formula, but when flipping the truth value of $x$ to $true$, the resulting assignment does satisfy the formula. These bounds on counts of critical sets then translate into gradually improving bounds on the Banzhaf values of database facts with respect to the query, i.e. an anytime approximation algorithm in this setting.

\paragraph*{From Anytime Approximation to top-$k$} When analyzing contribution of data items in computation, analysts are often \cite{} interested in the top influencing facts rather than in computing/approximating the contribution of all facts. Our anytime approximation algorithm can serve for a fast top-$k$ algorithm, since it improves the bounds in parallel for all facts in the database. We can thus run it until the computed intervals are sufficiently disjoint to allow the identification with certainty of the top-$k$ facts. The top-$k$ facts are then reported along with their estimated Banzhaf values based on the intervals.  

As we empirically observe, some queries are significantly ``easier" than others for the anytime approximation algorithm.  Two variants of the solutions are then of interest. The first follows a {\em hybrid} \scream{Find other name} approach: try, for $t$ seconds (where $t$ is configurable), to gradually expand the tree and improve the bounds for all facts. If by then the tree is fully expanded, the algorithm reports the exact Banzhaf values for all facts. Otherwise, it resorts to identifying the top-$k$ facts and continue improving the bounds only for them. Note that the 
top-$k$ computation does not need to start from ``scratch", but rather its starting point is the partially expanded tree obtained when the time allocated for exact computation has expired. The second variant is referred to as {\em $\epsilon$-top-$k$}, where we stop the anytime approximation algorithm already when we can identify the top-$k$ facts up to an error of $\epsilon$, i.e. the Banzhaf value of every fact that was not included may be worse by at most $\epsilon$ than the actual $k$'th Banzhaf value. This allows early stopping for cases where identifying the top-$k$ facts via approximation is difficult, e.g. due to ties.

\paragraph*{Banzhaf vs. Shapley Values for Ranking} We revisit our results when replacing the attribution measure from Banzhaf to Shapley values. We show our algorithms for anytime approximation and identification of top-$k$ facts may be extended to support ranking of Shapley values as well; it is open whether our intractability results also hold for Shapley. We further shed some light on the relationshop between the two measures in query answering, as follows. In game theory, there are cases where Banzhaf and Shapley values satisfy {\em ordinal equivalence}, namely the orderings of players based on Shapley and Banzhaf values coincide. We investigate   
ordinal equivalence for these two measures in the context of query answering, and show a negative result: even if we restrict attention to hierarchical Counjuctive Queries that further only join relations that are linked through foreign keys, ordinal equivalence may not be guaranteed. Empirically, however, we show that the two rankings (based on Banzhaf and Shapley) are often quite similar. \scream{Bottom line here?}

\paragraph*{Experimental Study} We conduct an extensive experimental study of our solutions. We analyze the performance of top-$k$ and hybrid computation for a wide range of queries over the IMDB and Academic datasets and compare it to an exact computation algorithm, that fully expands the d-tree. \scream{Finish when experimental section is done.}  
}

\section{Preliminaries}
\label{sec:prelim}
We denote by $\mathbb{N}$ the set of natural numbers 
including $0$. For $n \in \mathbb{N}$, we denote 
$[n]  \defeq  \{1,2, \ldots, n\}$. In case $n = 0$, we have 
$[n] = \emptyset$. 

\paragraph{Boolean Functions}
Given a set $\bm X$ of Boolean variables, 
a {\em Boolean function} over $\bm X$ is a function 
$\varphi: \bm X \rightarrow \{0,1\}$ defined recursively as: a variable in $\bm X$; a conjunction $\varphi_1\wedge\varphi_2$ or a disjunction $\varphi_1\vee\varphi_2$ of two Boolean functions $\varphi_1$ and $\varphi_2$; or a negation $\neg(\varphi_1)$ of a Boolean function $\varphi_1$.
A {\em literal} is a variable or its negation.
\nop{We also use $\vars(\varphi)=\bm X$ to denote 
the set of variables in $\varphi$.}
The size of $\varphi$, denoted 
by $|\varphi|$, is the number of symbols in $\varphi$. For a variable $x \in \bm X$ and
a constant $b \in \{0,1\}$, $\varphi[x:=b]$ denotes the function that results from replacing $x$ by $b$ in  $\varphi$. 
An {\em assignment} for $\varphi$ is a function 
$\theta: \bm X \rightarrow \{0,1\}$. 
We also denote an assignment $\theta$ by the set 
$\{x\mid \theta(x)=1\}$ of its variables mapped to $1$. 
The Boolean 
value of $\varphi$ under the assignment $\theta$
is denoted by $\varphi[\theta]$.
If $\varphi[\theta] =1$, then
$\theta$ is a {\em satisfying assignment} or 
{\em model} of $\varphi$. 
We denote  the number of 
models of $\varphi$ by $\#\varphi$.
A function is {\em positive} if its literals are positive. \nop{It is in  \pptwodnf if it is positive, in disjunctive normal form (DNF), and its set of variables is partitioned into two disjoint sets $\bm Y$ and $\bm Z$ such that each clause is the conjunction of a variable from $\bm Y$ and a variable from $\bm Z$.}

\begin{definition}[Banzhaf Value of Boolean Variable]
\label{def:banz}
Given a Boolean function $\varphi$ over $\bm X$, 
the {\em Banzhaf value} of a variable $x \in \bm X$ in $\varphi$ is:

\begin{align}
\label{eq:banzhaf_value}
\banz(\varphi, x) \defeq \sum_{\bm Y \subseteq \bm X\setminus \{x\}}
\varphi[\bm Y \cup \{x\}] - \varphi[\bm Y]
\end{align}
\end{definition}
{\em Normalized} versions of the Banzhaf value $\banz(\varphi, x)$ can be obtained by dividing it by (1) the number $2^{|\bm X|-1}$ of all possible assignments of the variables in $\bm X$ except $x$ ({\em Penrose–Banzhaf power}), or by  (2) the sum $\sum_{y \in \bm X} \banz(\varphi, y)$ of the Banzhaf values of all variables ({\em Penrose–Banzhaf index})~\cite{KirschL10}.
In this paper, we use the definition in Eq.~\eqref{eq:banzhaf_value}, but our results immediately apply 
to the normalized versions as well.

\begin{example}
\label{ex:basic_function_banzhaf}
\rm
Consider the Boolean function 
$\varphi = x_1 \vee (x_2 \wedge \neg x_3)$.
The following table shows all possible assignments
$\bm Y$ for $\varphi$ and the Boolean value 
of $\varphi$ under $\bm Y$. For simplicity,
we identify variables by their indices, e.g., $x_1$ is identified by $1$.
\begin{center}
\begin{tabular}{c|cccccccc}
$\theta$ & $\emptyset$ & $\{1\}$ & $\{2\}$ & $\{3\}$ & $\{1,2\}$ & $\{1,3\}$ &$\{2,3\}$ & $\{1,2,3\}$\\ 
\hline
$\varphi[\theta]$ & $0$ & $1$ & $1$ & $0$ & $1$ & $1$& $0$ & $1$ \\
\end{tabular}
\end{center}
Recall the set notation for an assignment; e.g., $\bm Y = \{2,3\}$ means that $x_2=x_3=1$ and $x_1=0$.
To compute the Banzhaf value of $x_1$, we sum up the differences 
$\varphi[\bm Y \cup \{x_1\}] - \varphi[\bm Y]$ for all 
$\bm Y \subseteq \{x_2,x_3\}$:
\begin{align*}
\banz(\varphi, x_1) =&\ \big(\varphi[\{1\}] - \varphi[\emptyset]\big) + 
\big(\varphi[\{1,2\}] - \varphi[\{2\}]\big) + \\ 
&\ \big(\varphi[\{1,3\}] - \varphi[\{3\}]\big) +
\varphi[\{1,2,3\}] - \varphi[\{2,3\}] \\
= &\ 1 + 0 + 1 + 1 = 3
\end{align*}
Similarly, $\banz(\varphi, x_2) = 1$ and $\banz(\varphi, x_3) = -1$.
The latter is negative, because $x_3$ appears  negated in $\varphi$. 
\end{example}

An alternative characterization of the Banzhaf value, adapted from prior work~\cite{LivshitsBKS:LMCS:2021}, is the difference between the numbers of the models of the function where $x$ is set to $1$ and respectively to $0$.

\begin{proposition}
\label{prop:alternative_banzhaf}
The following holds for any Boolean function $\varphi$ over $\bm X$ and variable $x \in \bm X$:
\begin{align}
\label{eq:alternative_banzhaf}
\banz(\varphi, x) = \#\varphi[x:=1] - \#\varphi[x:=0]
\end{align}
\end{proposition}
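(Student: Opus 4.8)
The plan is to prove the identity by splitting the summand in Definition~\ref{def:banz} and recognizing each of the two resulting sums as a model count of a restriction of $\varphi$. First I would use the fact that a sum of differences is the difference of the sums to rewrite the Banzhaf value as
\begin{align*}
\banz(\varphi,x) = \sum_{\bm Y\subseteq \bm X\setminus\{x\}} \varphi[\bm Y\cup\{x\}] \;-\; \sum_{\bm Y\subseteq \bm X\setminus\{x\}} \varphi[\bm Y],
\end{align*}
so that it suffices to identify the first sum with $\#\varphi[x:=1]$ and the second with $\#\varphi[x:=0]$.

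The key step is to line up the subsets $\bm Y \subseteq \bm X \setminus \{x\}$ with the assignments of the restricted functions $\varphi[x:=b]$, each of which is a Boolean function over $\bm X\setminus\{x\}$. For any such $\bm Y$, the substitution semantics give $\varphi[x:=1][\bm Y] = \varphi[\bm Y \cup \{x\}]$ and $\varphi[x:=0][\bm Y] = \varphi[\bm Y]$: forcing $x$ to $1$ (resp. $0$) and then evaluating on $\bm Y$ is the same as evaluating $\varphi$ on the full assignment that extends $\bm Y$ by setting $x$ to $1$ (resp. that keeps $x$ at $0$, since $x\notin\bm Y$). Hence each summand $\varphi[\bm Y\cup\{x\}]$ is exactly the $\{0,1\}$-indicator that $\bm Y$ is a model of $\varphi[x:=1]$, and summing over all $\bm Y\subseteq\bm X\setminus\{x\}$ counts precisely the models of $\varphi[x:=1]$, i.e. $\#\varphi[x:=1]$. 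The analogous argument gives $\sum_{\bm Y}\varphi[\bm Y]=\#\varphi[x:=0]$, and substituting both back into the display above yields the claim.

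The argument is essentially definition-chasing, so I expect no deep obstacle; the only point requiring care is the substitution bookkeeping --- making explicit that $\#\psi$ for a function $\psi$ over $\bm X\setminus\{x\}$ is literally the number of subsets $\bm Y\subseteq\bm X\setminus\{x\}$ on which $\psi$ evaluates to $1$, and that restricting $\varphi$ via $x:=b$ and then evaluating on $\bm Y$ coincides with evaluating $\varphi$ on the full assignment in which $x$ takes the value $b$. Once this correspondence is stated cleanly, the two model counts align termwise with the two halves of the Banzhaf sum and the identity follows immediately.
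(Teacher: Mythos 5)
Your proposal is correct and follows essentially the same route as the paper's proof: split the sum of differences from Definition~\ref{def:banz} into two sums, then identify each sum with a model count via the correspondence that $\bm Y \cup \{x\}$ is a model of $\varphi$ iff $\bm Y$ is a model of $\varphi[x:=1]$, and $\bm Y$ is a model of $\varphi$ iff $\bm Y$ is a model of $\varphi[x:=0]$. The only difference is presentational; you make the substitution bookkeeping slightly more explicit, which the paper leaves implicit.
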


\begin{example}
\rm
Consider again the function $\varphi = x_1 \vee (x_2 \wedge \neg x_3)$
from Example~\ref{ex:basic_function_banzhaf}.
We compute the Banzhaf value of the variable $x_1$ using
Eq.~\eqref{eq:alternative_banzhaf}.
The function 
$\varphi[x_1:=1] = 1 \vee (x_2 \wedge \neg x_3)$
evaluates to $1$ under any assignment for the variables $x_2$ and $x_3$, 
hence $\#\varphi[x_1:=1] = 4$.
The only model of the function 
$\varphi[x_1:=0] = 0 \vee (x_2 \wedge \neg x_3)$
is $\{x_2\}$, hence $\#\varphi[x_1:=0] = 1$.
We obtain $\banz(\varphi, x_1) = 4 - 1 = 3$, which is the same 
as the value computed in 
Example~\ref{ex:basic_function_banzhaf}.
\end{example}

\paragraph{Databases}
Let a countably infinite set $\Dom$ of constants. 
A {\em database schema}  $\bm S$ is a finite 
set of  {\em relation symbols}, with each 
relation symbol $R$ having a fixed {\em arity}\nop{ $ar(R)$}. 
A database $D$ over ${\bm S}$ associates with each
relation symbol $R$ of arity $k$ a finite $k$-ary relation $R^D \subseteq \Dom^{k}$. 
We identify a database $D$ with
its finite set of {\em facts} $R(c_1, \ldots , c_k)$, stating that the 
$k$-ary relation  $R^D$
contains the tuple $(c_1, \ldots , c_k)$.
\nop{
A database $D$ over $\bm S$ is a finite set 
of {\em facts} of the form $R(c_1, \ldots, c_\ell)$
where $R$ is a relation symbol from $\bm S$
with arity $\ell$ 
and each $c_i$ is a value from $\Dom$.
}
Following prior work, we assume 
that the database is partitioned into a set $D_n$ 
of {\em endogenous}
and a set $D_x$ of {\em exogenous} facts \cite{LivshitsBKS:LMCS:2021}.

\paragraph{Queries}
A {\em conjunctive query} (CQ) over database schema 
$\bm S$ has the form:
%
%\begin{align*}
$Q = \exists \bm Y \bigwedge_{j \in [m]} R_j(\bm Y_j)$, 
%\end{align*}
%
where $R_j$ is a relation symbol from $\bm S$,  
each $\bm Y_j$ is a tuple of \nop{$ar(R_j)$} variables and constants, and $\bm Y$ is a set of variables included in $\bigcup_{j \in [m]} \bm Y_j$.  
To distinguish variables in queries from variables in Boolean functions, we denote the query variables by uppercase letters and the function variables by lowercase letters.
All variables in $\bm Y$
are {\em bound}, whereas the variables included in 
the set $\bigcup_{j \in [m]}{\bm Y_j}$ but not in $\bm Y$
are {\em free}.
Each $R_j(\bm Y_j)$ is an {\em atom} of $Q$.
We denote by $at(X)$ the set of atoms with the query variable $X$. 
A {\em Boolean} query is a query without free variables.

A CQ is {\em hierarchical} if for any two variables $X$
and $Y$, one of the following conditions holds:
$at(X) \subset at(Y)$, $at(X) \supseteq at(Y)$, 
or $at(X) \cap at(Y) = \emptyset$. A CQ is
{\em self-join free} if there are no two atoms 
with the same relation symbol. 

\begin{example}
\label{ex:query}
\rm
The query $Q = \exists X,Y,Z,V,U\ R(X,Y,Z)\wedge  S(X,Y,V)$ $\wedge T(X,U)$ is hierarchical: $\at(V)\subset\at(Y)\subset\at(X)$, $\at(U)\subset\at(X)$, and $\at(U)\cap\at(Y)=\emptyset$.
The query  $Q = \exists X,Y\ R(X) \wedge  S(X,Y) \wedge T(Y)$ is non-hierarchical: the sets $\at(X)$ $=$ $\{R(X), S(X,Y)\}$ and $\at(Y)$ $=$ $\{T(Y), S(X,Y)\}$ are neither disjoint nor one
is included in the other.
\end{example}

A {\em union of conjunctive queries} (UCQ) has the form 
$Q=Q_1 \vee \cdots \vee Q_n$ where $Q_1,\ldots,Q_n$ are CQs.
The query $Q$ is Boolean if $Q_1,\ldots,Q_n$ are Boolean. 
Given a non-Boolean query $Q$ with free variables $X_1,\ldots,X_n$, a {\em residual} query of $Q$ is a Boolean query, where each free variable $X_i$ is replaced by a constant $a_i$ for $i\in[n]$. We denote this residual query by $Q[a_1/X_1,\ldots,a_n/X_n]$.

Selection conditions of the form $X\ \theta\  \texttt{const}$, where $X$ is a query variable, $\texttt{const}$ is a constant, and the comparison $\theta$ is any of $<, \leq, =, \neq, \geq, >, \geq$, are also supported for practical reasons. UCQs with selections correspond to select-project-join-union queries in SQL. 

\paragraph{Query Lineage}
Let a database $D = D_n \cup D_x$. Each endogenous fact $f$ in $D_n$ is associated with a propositional variable denoted by $v(f)$.
Given a Boolean UCQ $Q$ and a database $D$, the lineage of $Q$ over $D$, denoted by $\varphi_{Q,D}$, is a positive Boolean function in DNF over the variables $v(f)$ of facts $f$ in $D_n$. 
Each clause is a conjunction of $m$ variables, where $m$ is the number of atoms in $Q$. 
We define lineage recursively on the structure of $Q$ (we skip $D$ from the subscript):
\begin{align*}
     \varphi_{Q_1\wedge Q_2} \defeq &\ \varphi_{Q_1}\wedge \varphi_{Q_2}&
    \varphi_{Q_1\vee Q_2} \defeq &\ \varphi_{Q_1}\vee \varphi_{Q_2}\\[5pt]
    \varphi_{\exists X Q} \defeq &\ \bigvee_{a\in\Dom} 
    \varphi_{Q[a/X]} &
    \varphi_{R(t)} \defeq &\
    \begin{cases}
        v(R({\bm t})) & \text{if } R({\bm t})\in D_n\\
        1 & \text{if } R({\bm t})\in D_x\\
        0    & \text{otherwise}
    \end{cases}
    %\hspace{7em} 
\end{align*}
where  $Q[a/X]$ is $Q$ where the variable $X$
is set to the constant $a$. If $Q$ is the conjunction (disjunction)
of subqueries, the lineage of $Q$ is the conjunction (disjunction)
of the lineages of the subqueries. In case of an
existential quantifier $\exists X$, the lineage is the disjunction 
of the lineages of the residual queries obtained
by replacing $X$ with each value in the domain.
If $Q$ is an atom $R(\bm t)$ where all variables are already replaced 
by constants, we check whether $R(\bm t)$ is a fact in the database. 
If it is not, then the Boolean constant $0$ is added to the lineage. Otherwise, we have two cases.
If $R({\bm t})$ is an endogenous fact, then the variable $v(R({\bm t}))$
associated with $R({\bm t})$ is added to the lineage.
If $R({\bm t})$ is an exogenous fact, then the constant $1$ is added instead to the lineage. This means that exogenous facts are not in the lineage, even though they are used to create the lineage.

The lineage for any non-Boolean query $Q$ is defined using the case of Boolean queries. Each tuple in the result of $Q$ defines a residual query of $Q$, which is Boolean and for which we can compute the lineage as defined above. In other words, the lineage of $Q$ is given by the set of lineages of the tuples in the result of $Q$.

\begin{example}
\label{ex:lineage}
\rm
Reconsider the first query $Q$ from Example \ref{ex:query} and the database $D=\{R(1,2,3),$ $S(1,2,4),$ $S(1,2,5),$ $T(1,6)\}$,
where all facts are endogenous. There are two groundings of the query in the database, obtained by replacing $X,Y,Z,V,U$ with $1,2,3,4,6$ respectively or $1,2,3,5,6$ respectively. Each grounding is intuitively an alternative reason for the query satisfaction and yields a clause in the lineage. Thus, the lineage is $\varphi_{Q,D} = [v(R(1,2,3)) \wedge v(S(1,2,4)) \wedge v(T(1,6)) ]$
$ \vee [v(R(1,2,3)) \wedge v(S(1,2,5)) \wedge v(T(1,6))]$.

\end{example}

\paragraph{Banzhaf Values of Database Facts}
We use the Banzhaf value of an endogenous database fact $f$ as a measure of contribution of $f$ to the result of a given query.
An equivalent formulation is via the query lineage: We want the Banzhaf value of the variable $v(f)$ associated with $f$ in the lineage of the query. 

Consider a Boolean query $Q$, a database $D = (D_n,D_x)$, and an endogenous fact $f \in D_n$. Let $v(f)$ be the variable associated to $f$. We define:
\begin{align}
\label{eq:banz_fact}
\banz(Q,D,f) \defeq \banz(\varphi_{Q,D},v(f))
\end{align}
Since the function $\varphi_{Q,D}$ is positive,
it follows from Eq.~\eqref{eq:banzhaf_value}
that $\banz(Q,D,f)$ is the number of subsets 
$D' \subseteq (D_n \setminus \{f\})$
such that $Q(D' \cup D_x) = 0$ and 
$Q(D' \cup D_x \cup \{f\}) = 1$.

For a non-Boolean query $Q$ with free variables $\bm Z$,
the Banzhaf value of $f$ is defined with respect
to a tuple $\bm t$ in the result of $Q$:
%
%$$\banz(Q,D,f,\bm t) \defeq \banz(\varphi_{Q[\bm t/\bm Z],D},v(f))$$
$$\banz(Q,D,f,\bm t) \defeq \banz(Q[\bm t/\bm Z],D,f)$$
where $Q[\bm t/\bm Z]$ is the Boolean residual query of $Q$, where the tuple of free variables $\bm Z$ is replaced by the tuple $\bm t$ of constant values.

\begin{example}
\rm
Consider again the lineage 
$\varphi_{Q,D}$ from Example~\ref{ex:lineage}. 
We have 
$\varphi_{Q,D}[v(R(1,2,3)):= 1] - 
\varphi_{Q,D}[v(R(1,2,3)):= 0]
= 2-0 = 2$ and 
$\varphi_{Q,D}[v(S(1,2,4)):= 1] - 
\varphi_{Q,D}[v(S(1,2,4)):= 0]
= 2-1 = 1$. Hence, 
$\banz(\varphi_{Q,D}, v(R(1,2,3))) = 
\banz(Q,D,R(1,2,3)) = 2$
and 
$\banz(\varphi_{Q,D}, v(S(1,2,4))) = 
\banz(Q,D,S(1,2,4)) = 1$.
\end{example}
\section{Banzhaf Computation}
\label{sec:algorithm}

This section introduces our algorithmic framework for computing the exact or approximate Banzhaf value for a fact (variable) in a query lineage (Boolean positive DNF function). Sec.~\ref{sec:exact} gives our exact algorithm, which allows us to introduce the building blocks of decomposition trees and formulas for Banzhaf value computation that exploit the independence and mutual exclusion of functions. 
Then, Sec.~\ref{sec:approximate} extends the exact algorithm to an anytime deterministic approximation algorithm, which incrementally refines approximation intervals for the Banzhaf values until the desired error is reached.

\subsection{Exact Computation}
\label{sec:exact}

The main idea of our exact algorithm is as follows. Assume we have the Banzhaf value for a variable $x$ in a function $\varphi_1$. Then, we can compute efficiently the Banzhaf value for $x$ in a function $\varphi = \varphi_1\text{ op }\varphi_2$, where op is one of the logical connectors OR ($\vee$) or AND ($\wedge$) and in case the functions $\varphi_1$ and $\varphi_2$ are independent, i.e., they have no variable in common, or mutually exclusive, i.e., 
they have no satisfying assignment in common. The following formulas make this argument precise, where we keep track of both the Banzhaf value for $x$ in $\varphi$ and also of the model count $\#\varphi$ for $\varphi$:

\begin{itemize}
    \item If $\varphi = \varphi_1 \wedge \varphi_2$ and $\varphi_1$ and 
    $\varphi_2$ are independent, then:
\begin{align}
\#\varphi =&\ \#\varphi_1\cdot \#\varphi_2
\label{eq:ind_and_count} \\
\banz(\varphi,x) =&\ \banz(\varphi_1, x)\cdot \#\varphi_2
\label{eq:ind_and_banz}
\end{align}

\item If $\varphi = \varphi_1 \vee \varphi_2$ and $\varphi_1$ and $\varphi_2$ are independent, then:  
\begin{align}
\#\varphi &=  
 \#\varphi_1 \cdot 2^{n_2} +  2^{n_1} \cdot \#\varphi_2 - \#\varphi_1 \cdot \#\varphi_2
\label{eq:ind_or_count}
\\
\banz(\varphi, x) &=  \banz(\varphi_1,x) \cdot (2^{n_2}- \#\varphi_2),
\label{eq:ind_or_banz}
\end{align}
where $n_i$ is the number of variables in $\varphi_i$ for $i \in [2]$.

\item If $\varphi = \varphi_1 \vee \varphi_2$, and $\varphi_1$ and $\varphi_2$ are mutually exclusive and  over the same variables, then: 
\begin{align}
\#\varphi& = 
\#\varphi_1 +  
\#\varphi_2
\label{eq:mut_excl_count}\\
\banz(\varphi,x)& = 
\banz(\varphi_1, x) +  
\banz(\varphi_2, x)
\label{eq:mut_excl_banz}
\end{align}
\end{itemize}
The derivations of these formulas are given in Appendix~\ref{app:algorithm}.

For functions representing the lineage of hierarchical queries, it is known that they can be decomposed efficiently into independent functions down to trivial functions of one variable~\cite{olteanu2008using}. For such functions, Eq.~\eqref{eq:ind_and_count} to \eqref{eq:ind_or_banz} are then sufficient to compute efficiently the Banzhaf values. For non-hierarchical queries, however, this is not the case. A common general approach, which is widely used in probabilistic databases~\cite{suciu2011probabilistic} and exact Shapley computation~\cite{DeutchFKM:SIGMOD:2022}, and borrowed from knowledge compilation~\cite{darwiche2002knowledge}, is to decompose, or {\em compile}, the query lineage into an equivalent Boolean function, where all logical connectors are between functions that are either independent or mutually exclusive. While in the worst case this necessarily leads to a blow-up in the number of decomposition steps (unless P=NP), it turns out that in many practical cases (including our own experiments), this number remains reasonably small.

\nop{If the input function is a composition of independent and mutually exclusive functions, then we can compute the Banzhaf values as stated above. Remarkably, the lineage of {\em hierarchical} queries admits a linear-size representation as nested conjunctions and disjunctions of independent functions~\cite{olteanu2008using}, so Eq.~\eqref{eq:ind_and_count} to \eqref{eq:ind_or_banz} are sufficient to compute efficiently the Banzhaf values. This is however not applicable for the lineage of non-hierarchical queries.

The algorithm sketched above can be immediately extended to compute the Banzhaf values for all variables together, thereby sharing the cost of the decomposition. In Sec.~\ref{sec:approximate} we turn this exact algorithm into an approximate one.
}

In this paper, we compile the query lineage into a {\em decomposition tree}~\cite{FinkHO:AdaProb:2013}. Such trees have inner nodes that are the logical operators enhanced with information about independence and mutual exclusiveness of their children: $\otimes$ stands for independent-or, $\odot$ for independent-and, and $\oplus$ for mutual exclusion.

\begin{definition}~\cite{FinkHO:AdaProb:2013}
A {\em decomposition tree}, or d-tree for short, is defined recursively as follows:

\begin{itemize}

\item Every function $\varphi$ is a d-tree for $\varphi$.

\item If $T_\varphi$ and $T_\psi$ are d-trees for independent functions $\varphi$ and respectively $\psi$, then 
\begin{center}
\begin{minipage}{0.2\linewidth}
\tikz {
 \node at (3.6,-1)  (n4) {$\otimes$};
\node at (3.2,-1.75)  (n3) {$T_\varphi$} edge[-] (n4);
\node at (4.0,-1.75)  (n3) {$T_\psi$} edge[-] (n4);
}
\end{minipage}
\hspace{1em}
and
\hspace{1em}
\begin{minipage}{0.2\linewidth}
\tikz {
 \node at (3.6,-1)  (n4) {$\odot$};
\node at (3.2,-1.75)  (n3) {$T_\varphi$} edge[-] (n4);
\node at (4.0,-1.75)  (n3) {$T_\psi$} edge[-] (n4);
}
\end{minipage}
\end{center}
are d-trees for $\varphi \vee \psi$ and respectively $\varphi \wedge \psi$.

\item If $T_\varphi$ and $T_\psi$ are d-trees for mutually exclusive functions $\varphi$ and respectively $\psi$, then  
\begin{center}
\begin{minipage}{0.2\linewidth}
\tikz {
 \node at (3.6,-1)  (n4) {$\oplus$};
\node at (3.2,-1.75)  (n3) {$T_\varphi$} edge[-] (n4);
\node at (4.0,-1.75)  (n3) {$T_\psi$} edge[-] (n4);
}
\end{minipage}
\end{center}
is a d-tree for $\varphi \vee \psi$.
\end{itemize}
A d-tree, whose leaves are Boolean constants or literals, is {\em complete}.  
\end{definition}

Any Boolean function can be compiled into a complete d-tree by decomposing it into conjunctions or disjunctions of independent functions or into disjunctions of mutually exclusive functions. The latter is always possible via Shannon expansion: Given a function $\varphi$ and a variable $x$, $\varphi$ can be equivalently expressed as the disjunction of two mutually exclusive functions defined over the same variables as $\varphi$:
$\varphi = (x \wedge \varphi[x:=1]) \vee (\neg x \wedge \varphi[x:=0])$.
This expression yields the d-tree: 
$(x \odot \varphi[x:=1]) \oplus (\neg x \odot \varphi[x:=0])$.
The details of d-tree construction are given in prior work~\cite{FinkHO:AdaProb:2013}. In a nutshell, it first attempts to partition the function into independent functions using a standard algorithm for finding connected components in a graph representation of the function. If this fails, then it applies Shannon expansion on a variable that appears most often in the function (other heuristics are possible, e.g., pick variables whose conditioning allow for independence partitioning). The functions $\varphi[x:=1]$ and $\varphi[x:=0]$ are subject to standard simplifications for conjunctions and disjunctions with the constants $0$ and $1$. \nop{Shannon expansion is the most expensive step, as it may create two functions whose number of literals is that of the input function minus one.}
In the worst case, d-tree compilation may (unavoidably) require a number of Shannon expansion steps exponential in the number of variables.

\begin{example}
\label{ex:d-tree}
\rm
We construct a d-tree for the Boolean function 
$\varphi = (x \wedge y) \vee (x \wedge z)$.
We first observe that 
the two conjunctive clauses are not independent, so 
we apply Shannon expansion on $x$ and decompose the function into 
the two mutually exclusive functions 
$\varphi_1 = x \wedge \varphi[x:=1] = x\wedge (y\vee z)$ and 
$\varphi_0 = \neg x \wedge \varphi[x:=0] = 0$. 
The left branch representing $\varphi_1$ can be further decomposed into independent functions until we obtain a complete d-tree:

\begin{minipage}{\linewidth}
\begin{center}
\tikz {
 \node at (0,0)  (root) {$\oplus$};
\node at (-1.5,-0.75)  (c1) {$\odot$} edge[-] (root);
\node at (1.5,-0.75)  (c2) {$0$} edge[-] (root);
\node at (-3,-1.5)  (c21) {$x$} edge[-] (c1);
\node at (0,-1.5)  (c22) {$\otimes$} edge[-] (c1);
\node at (-1.5,-2.25)  (c31) {$y$} edge[-] (c22);
\node at (1.5,-2.25)  (c32) {$z$} edge[-] (c22);
}
\end{center}
\end{minipage}

\noindent
Alternatively, we can factor out $x$ to obtain the function $x \wedge (y \vee z)$, and compile it into the d-tree $x \odot (y \otimes z)$. Our algorithm computing d-trees does this whenever a variable occurs in all clauses. 
\end{example}

Fig.~\ref{alg:exactban} gives our algorithm $\textsc{ExaBan}$ that computes the exact Banzhaf value for any variable $x$ in an input function $\varphi$. It takes as input a complete d-tree for $\varphi$ and uses Eq.~\eqref{eq:ind_and_count} to~\eqref{eq:mut_excl_banz} to express the Banzhaf value of a variable $x$ in a function $\varphi$ represented by a d-tree $T_\varphi$ using the Banzhaf values of $x$ in sub-trees $T_{\varphi_1}$ and $T_{\varphi_2}$.

%%%%%%%%%%%%%%%%%%%%%%%%%%%%%%%
\begin{figure}[t]
\begin{center}
  \renewcommand{\arraystretch}{1.15}
  \begin{tabular}{@{\hskip 0.1in}l}
  \toprule
  $\textsc{ExaBan}$(d-tree $T_{\varphi}$ for function $\varphi$, variable $x$) \\ outputs $(\banz(\varphi,x), \#\varphi)$\\[0.2ex]
  \midrule
    $B$ := $0$; \TAB \# := $0$; \TAB \texttt{//initialization} \\
 \SWITCH $T_\varphi$\\
 \TAB\CASE $x$:  \STAB $B$ := $1$; $\#$ := $1$\\
 \TAB\CASE $\neg x$: $B$ := $-1$; $\#$ := $1$\\
 \TAB\CASE $1$ or a literal not $x$ nor $\neg x$: $B$ := $0$; $\#$ := $1$\\
 \TAB\CASE $0$: $B$ := $0$; $\#$ := $0$\\
 \TAB\CASE $T_{\varphi_1} \text { op } T_{\varphi_2}$:  \\ 
  \TAB\TAB $(B_i, \#_i)$ := $\textsc{ExaBan}(T_{\varphi_i}, x)$ for $i \in [2]$\\
  \TAB\TAB $n_i$ := number of variables in $T_{\varphi_i}$ for $i \in [2]$\\
  \TAB\TAB\SWITCH \text{ op }\\
  \TAB\TAB\TAB\CASE $\odot$: \TAB \texttt{//wlog if $x$ is in $\varphi$,then it is in $\varphi_1$} \\
  \TAB\TAB\TAB $B$ := $B_1 \cdot \#_2$; \TAB $\# $ := $\#_1 \cdot \#_2$ \\ 
 \TAB\TAB\TAB\CASE $\otimes$: \TAB \texttt{//wlog if $x$ is in $\varphi$,then it is in $\varphi_1$} \\
 \TAB\TAB\TAB\TAB $B$ := $B_1\cdot (2^{n_2} - \#_2)$; \TAB $\#$ := $\#_1\cdot 2^{n_2} + 2^{n_1} \cdot \#_2 - \#_1 \cdot \#_2$\\
 \TAB\TAB\TAB\CASE $\oplus$: \TAB \texttt{//wlog $\varphi_1$ and $\varphi_2$ have same variables}\\
  \TAB\TAB\TAB\TAB $B$ := $B_1+ B_2 $; \TAB $\#$ := $\#_1+ \#_2$\\
  \RETURN $(B,\#)$ \\
  \bottomrule
  \end{tabular}
\end{center}
  \caption{Computing the exact Banzhaf value for a variable $x$ and the model count over a complete d-tree.}
  \label{alg:exactban}
\end{figure}
%%%%%%%%%%%%%%%%%%%%%%%%%%%%%%%

\begin{proposition}
\label{prop:exaban_correct}
For any positive DNF function $\varphi$, complete d-tree $T_{\varphi}$ for $\varphi$, and variable $x$ in $\varphi$, 
it holds 
$$\textsc{ExaBan}(T_{\varphi}, x) = (\banz(\varphi,x), \#\varphi).$$
\end{proposition}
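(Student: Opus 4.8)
The plan is to prove the proposition by structural induction on the complete d-tree $T_\varphi$, carrying as a \emph{joint} induction hypothesis the correctness of both returned components, i.e. $\textsc{ExaBan}(T_\varphi,x) = (\banz(\varphi,x),\#\varphi)$. Keeping the two quantities together is essential: every Banzhaf assignment in the inductive step refers to the model count of the sibling subtree (e.g. $B := B_1\cdot \#_2$), so the correctness of the Banzhaf output depends on the correctness of the count output and vice versa. Before starting, I would also drop the hypothesis that $x$ occurs in $\varphi$ and prove the statement for every complete d-tree and every variable $x$: recursive calls descend into subtrees in which $x$ may be absent, so a self-contained invariant is needed. This causes no difficulty, because by the alternative characterization \eqref{eq:alternative_banzhaf}, if $x$ does not occur in $\varphi$ then $\varphi[x:=1]=\varphi[x:=0]=\varphi$ and hence $\banz(\varphi,x)=0$.

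For the base cases (the leaves of a complete d-tree) I would check each leaf branch directly against \eqref{eq:alternative_banzhaf}. If $T_\varphi = x$, then $\varphi[x:=1]=1$ and $\varphi[x:=0]=0$, so $\banz(\varphi,x)=1-0=1$ and $\#\varphi=1$, matching the returned $(1,1)$; if $T_\varphi=\neg x$, symmetrically $\banz(\varphi,x)=0-1=-1$ and $\#\varphi=1$, matching $(-1,1)$. If $T_\varphi$ is the constant $1$ or a literal other than $x$ and $\neg x$, then $x$ does not occur, so $\banz(\varphi,x)=0$ while $\#\varphi=1$ (a unique model over zero or one variables), matching $(0,1)$; and the constant $0$ has no model, giving $(0,0)$. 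These are exactly the four assignments in the algorithm.

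For the inductive step $T_\varphi = T_{\varphi_1}\text{ op }T_{\varphi_2}$, the induction hypothesis yields $B_i=\banz(\varphi_i,x)$ and $\#_i=\#\varphi_i$ for $i\in[2]$. The d-tree definition fixes both the meaning of the operator and the accompanying structural guarantee: $\odot$ pairs independent $\varphi_1,\varphi_2$ under $\wedge$, $\otimes$ pairs independent $\varphi_1,\varphi_2$ under $\vee$, and $\oplus$ pairs mutually exclusive $\varphi_1,\varphi_2$ over the same variables under $\vee$. Each operator branch then instantiates exactly the correctness equations established in Sec.~\ref{sec:exact}: \eqref{eq:ind_and_count} and \eqref{eq:ind_and_banz} for $\odot$, \eqref{eq:ind_or_count} and \eqref{eq:ind_or_banz} for $\otimes$ (with $n_i$ the number of variables of $\varphi_i$), and \eqref{eq:mut_excl_count} and \eqref{eq:mut_excl_banz} for $\oplus$. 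Substituting the induction hypothesis into these equations reproduces precisely the values computed by the algorithm, so both output components are correct.

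The one point requiring care, and the main obstacle, is the ``without loss of generality $x$ is in $\varphi_1$'' convention in the independent cases $\odot$ and $\otimes$, since \eqref{eq:ind_and_banz} and \eqref{eq:ind_or_banz} are asymmetric in the two children. As $\varphi_1$ and $\varphi_2$ are independent, $x$ occurs in at most one of them, so I would split into three subcases. If $x$ occurs in $\varphi_1$, the equations apply verbatim. If $x$ occurs in $\varphi_2$, commutativity of $\wedge$ and $\vee$ lets me swap the two children and apply the same equations, matching the algorithm's convention of listing the $x$-containing child first. If $x$ occurs in neither, then $B_1=B_2=0$ by the induction hypothesis, so each Banzhaf assignment evaluates to $0$, which agrees with $\banz(\varphi,x)=0$ established at the outset. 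The $\oplus$ equations are symmetric in the children and hold over a common variable set, so no case split is needed there. Collecting the base and inductive cases completes the induction.
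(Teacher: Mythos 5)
Your proposal is correct and follows essentially the same route as the paper's proof: the paper likewise strengthens the statement to a self-contained inductive invariant (its Lemma~\ref{lem:exaban_correct} quantifies over all subtrees of $T_\varphi$, where $x$ may be absent, which plays the same role as your dropping the hypothesis that $x$ occurs in $\varphi$), verifies the four leaf cases via Eq.~\eqref{eq:alternative_banzhaf}, and discharges the inductive step by substituting the joint induction hypothesis into Eqs.~\eqref{eq:ind_and_count}--\eqref{eq:mut_excl_banz}, treating the ``$x$ not in $\xi$'' subcase separately exactly as you do. The only cosmetic differences are that the paper spells out only the $\odot$ case (declaring $\otimes$ and $\oplus$ analogous) and leaves the child-swapping justification implicit in its ``without loss of generality,'' which you make explicit via commutativity.
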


\begin{example}
\label{ex:d-tree_banzhaf}
\rm
We next show the trace of the computation of \textsc{ExaBan} for the input d-tree from Ex.~\ref{ex:d-tree} and the variable $x$. Each node of the d-tree is labelled by the pair of the Banzhaf value and the model count computed for the subtree rooted at that node:

\begin{minipage}{\linewidth}
\begin{center}
\tikz {
 \node at (0,0)  (root) {$\oplus$};
\node at (-1.5,-0.75)  (c1) {$\odot$} edge[-] (root);
\node at (1.5,-0.75)  (c2) {$0$} edge[-] (root);
\node at (-3,-1.5)  (c21) {$x$} edge[-] (c1);
\node at (0,-1.5)  (c22) {$\otimes$} edge[-] (c1);
\node at (-1.5,-2.25)  (c31) {$y$} edge[-] (c22);
\node at (1.5,-2.25)  (c32) {$z$} edge[-] (c22);

%%%%%%%%%%%%%%%%%%%%%%%%
 \node at (0.5,0)  (x) {$(3,3)$};
\node at (-1,-0.75)  (x) {$(3,3)$};
\node at (2,-0.75)  (x) {$(0,0)$};
\node at (-2.5,-1.5)  (x) {$(1,1)$};
\node at (0.5,-1.5)  (x) {$(0,3)$};
\node at (-1,-2.25)  (x) {$(0,1)$};
\node at (2,-2.25)  (x) {$(0,1)$};
}
\end{center}
\end{minipage}

\noindent
The values $(3,3)$ at the left child node of the root are computed as follows. 
This node is an independent-and ($\odot$). The variable $x$ is in the left subtree. \textsc{ExaBan} computes the Banzhaf value $3$ of 
$x$ by multiplying the Banzhaf value $1$ at the left child node with the 
model count $3$ at the right child node.
The model count of $3$ is obtained by multiplying the model counts 
at the child nodes. 
The function represented by the tree rooted at this $\odot$-node is 
$\varphi_1 = x \wedge (y \vee z)$.
Indeed, every model of the function must satisfy $x$ and at least one 
of $y$ and $z$, which implies $\#\varphi_1 = 3$.
Using Eq.~\eqref{eq:alternative_banzhaf}, we have 
$\banz(\varphi_1,x)= \varphi_1[x:=1]- \varphi_1[x:=0] = 3-0 = 3$.
\end{example}

\textsc{ExaBan} can be immediately generalized to compute the Banzhaf values for any number of variables $x_1,\ldots,x_n$. For all variables, it uses the same d-tree and shares the computation of the counts $\#_i$. \nop{Fig.~\ref{alg:exactban_star} gives the generalization of the independent-and case. The other cases are treated similarly.}

\nop{
\begin{figure}
\begin{center}
  \renewcommand{\arraystretch}{1.15}
  \begin{tabular}{@{\hskip 0.1in}l}
  \toprule
  $\textsc{ExaBan}$(d-tree $T_\varphi$ for function $\varphi$, variables $x_1, \ldots, x_n$) \\ 
  outputs $(\banz(\varphi,x_1), \ldots , \banz(\varphi,x_n)$,  $\#\varphi$)\\[0.2ex]
  \midrule
$\ldots$ \\
\CASE $T_{\varphi_1} \text{ op } T_{\varphi_2}$: \\
  \TAB $(B^{(1)}_i, \ldots , B^{(n)}_i, \#_i)$ := $\textsc{ExaBan}(T_{\varphi_i}, x_1,\ldots,x_n)$ for $i \in [2]$\\
  \TAB\SWITCH \text{ op } \\
  \TAB\TAB\CASE $\odot$: \texttt{//wlog if any $x_j$ is in $\varphi$,then it is in $\varphi_1$}\\
  \TAB\TAB\TAB \FOREACH $j = 1, \ldots, n$ \DO $D^{(j)}$ := $B^{(j)}_1 \cdot \#_2$ \\
  \TAB\TAB\TAB $\#$ := $\#_1 \cdot \#_2$ \\
  $\ldots$\\
  \RETURN $(D^{(1)}, \ldots , D^{(n)}, \#)$ \\
  \bottomrule
  \end{tabular}
  \end{center}
\caption{Computing exact Banzhaf values for several variables and the model count over a complete d-tree.}
  \label{alg:exactban_star}
\end{figure}
}
%%%%%%%%%%%%%%%%%%%%%%
\subsection{Anytime Deterministic Approximation}
\label{sec:approximate}

As explained in Sec.~\ref{sec:exact}, to obtain exact Banzhaf values
for the variables in a function, we first compile the function into a complete d-tree and then compute in a bottom-up traversal of the d-tree the exact Banzhaf values and model counts at each node of the d-tree. 
Approximate computation does not require in general a complete d-tree for the function. In this section, we introduce an anytime deterministic approximation algorithm, called \adaban, that {\em gradually} expands the d-tree and computes after each expansion step upper and lower bounds on the Banzhaf values and model counts for the new leaves. It then uses the bounds to compute an approximation interval for the partial d-tree. If the approximation interval meets the desired error, it stops. Otherwise, it continues with the function compilation and bounds computation at another leaf in the d-tree. Eventually, the approximation interval becomes tight enough to meet the allowed error.
Unlike \textsc{ExaBan}, \adaban\  merges the construction of the d-tree with the computation of the bounds so it can intertwine them at each expansion step.

Sec.~\ref{sec:bounds_banz_count} explains how to efficiently compute upper and lower bounds for positive DNF functions, albeit without any error guarantee. Sec.~\ref{sec:refine_bounds_d-tree} introduces \adaban, which uses such bounds to compute  approximation intervals and incrementally refine them.

\subsubsection{Efficient Computation of Lower and Upper Bounds for Positive DNF Functions}
\label{sec:bounds_banz_count}

We introduce two procedures $L$ (for lower bound) and $U$ (for upper bound) that map any positive DNF function $\varphi$ to positive DNF functions that enjoy the following four desirable properties: (1) $L(\varphi)$ and $U(\varphi)$ admit linear-time computation of model counting; (2) $L(\varphi)$ and $U(\varphi)$ can be synthesized from $\varphi$ in time linear in the size of $\varphi$; (3) the number of models of $L(\varphi)$ is less than or equal to the number of models of $\varphi$, which in turn is less than or equal to the number of models of $U(\varphi)$; and (4) lower and upper bounds on the Banzhaf value of $x$ in $\varphi$ can be obtained by applying $L$ and $U$ to the functions $\varphi[x:=0]$ and $\varphi[x:=1]$.
 
The co-domain of $L$ and $U$ is the class of iDNF functions~\cite{FinkHO:AdaProb:2013}, which are positive DNF functions where every variable occurs once. Whereas the first three aforementioned properties are already known to hold for iDNF functions~\cite{FinkHO:AdaProb:2013}, the fourth one is new and key to our approximation approach. 

For the first property, we note that since each variable in an iDNF function only occurs once, we can decompose the function in linear time into a complete d-tree with $\odot$ or $\otimes$ as inner nodes and literals or constants at leaves. Then, we can traverse the d-tree bottom up and
use Eq.~\eqref{eq:ind_and_count} and \eqref{eq:ind_or_count} to compute at 
each node the model count for the function represented by the subtree rooted at that node. Overall, model counting for iDNF functions takes linear time.

For the second property, we explain the procedures $L$ and $U$ for a given DNF function $\varphi$. The iDNF function $L(\varphi)$ is any subset of the clauses such that no two selected clauses share variables.
The iDNF function $U(\varphi)$ is a transformation of $\varphi$, where we keep one occurrence of each variable and eliminate all other occurrences.

The third and fourth properties follow by Prop.~\ref{prop:banz_count_bounds}:

\begin{proposition}
\label{prop:banz_count_bounds}
For any positive DNF function $\varphi$ and variable $x$ in $\varphi$, 
it holds: 
\begin{align*}
%\label{eq:count_bounds}
\#L(\varphi) &\leq \#\varphi \leq \#U(\varphi) \\
\#L(\varphi[x:=1]) - \#U(\varphi[x:=0]) &\ \leq  \banz(\varphi,x) \\
&\ \leq   \#U(\varphi[x:=1]) - \#L(\varphi[x:=0])
\end{align*}
\end{proposition}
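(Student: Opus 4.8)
The plan is to reduce both the model-count line and the two Banzhaf lines to a single \emph{model-count sandwich}, $\#L(\psi)\le\#\psi\le\#U(\psi)$ for every positive DNF function $\psi$, combined with the alternative characterization $\banz(\varphi,x)=\#\varphi[x:=1]-\#\varphi[x:=0]$ from Prop.~\ref{prop:alternative_banzhaf}. I would prove the sandwich first; the second and third lines of the statement then fall out by applying it to the residual functions $\varphi[x:=1]$ and $\varphi[x:=0]$ and using that $a-b$ is increasing in $a$ and decreasing in $b$.

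For the sandwich I would argue entirely through logical implication, with all counts taken over a fixed common variable set. The lower bound is immediate: $L(\varphi)$ is the disjunction of a subset of the clauses of $\varphi$, so $L(\varphi)\Rightarrow\varphi$ as Boolean functions and its models form a subset of those of $\varphi$; hence $\#L(\varphi)\le\#\varphi$, and this holds for \emph{any} admissible choice of $L(\varphi)$. For the upper bound, $U(\varphi)$ is obtained by deleting from each clause all but one occurrence of every variable; deleting literals only weakens a conjunction, so each clause $C$ of $\varphi$ implies its image $C'$ in $U(\varphi)$, and taking the disjunction gives $\varphi\Rightarrow U(\varphi)$, whence $\#\varphi\le\#U(\varphi)$. (Should a clause lose all of its literals, $U(\varphi)$ collapses to the constant $1$, which only loosens the upper bound and keeps it valid.)

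To get the Banzhaf bounds, I would first note that $\varphi[x:=1]$ (which merely strips $x$ from the clauses containing it) and $\varphi[x:=0]$ (which merely deletes those clauses) are again positive DNF functions, so the sandwich applies to both. Substituting $\#L(\varphi[x:=1])\le\#\varphi[x:=1]\le\#U(\varphi[x:=1])$ and $\#L(\varphi[x:=0])\le\#\varphi[x:=0]\le\#U(\varphi[x:=0])$ into $\banz(\varphi,x)=\#\varphi[x:=1]-\#\varphi[x:=0]$ and using the monotonicity of subtraction yields the lower bound $\#L(\varphi[x:=1])-\#U(\varphi[x:=0])$ and, symmetrically, the upper bound $\#U(\varphi[x:=1])-\#L(\varphi[x:=0])$.

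The only delicate point — and the step I expect to require the most care — is variable-set bookkeeping. For the arithmetic of Prop.~\ref{prop:alternative_banzhaf} to be meaningful, every count must be over the common set $\bm X\setminus\{x\}$; but setting $x:=0$ can make some variables disappear from $\varphi[x:=0]$, and $L$ may discard entire clauses, so the images $L(\cdot)$ and $U(\cdot)$ can mention strictly fewer variables. I would resolve this by counting each function over the ambient set $\bm X\setminus\{x\}$, which scales every count by the same power of two (one factor per absent variable) and therefore preserves all the inequalities above. With this convention in place the containment arguments transfer verbatim, and no deeper difficulty remains.
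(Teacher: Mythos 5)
Your proof is correct and follows essentially the same route as the paper's: both establish the count sandwich $\#L(\varphi)\le\#\varphi\le\#U(\varphi)$ via model-containment arguments and then obtain the Banzhaf bounds by combining that sandwich, applied to $\varphi[x:=1]$ and $\varphi[x:=0]$, with the characterization of Proposition~\ref{prop:alternative_banzhaf} and monotonicity of subtraction. The only cosmetic difference is the bookkeeping for the lower bound: the paper injects each model of $L(\varphi)$ into a model of $\varphi$ by setting the absent variables to $1$, whereas you count both functions over a common ambient variable set and invoke the implication $L(\varphi)\Rightarrow\varphi$ --- equivalent devices.
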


%%%%%%%%%%%%%%%%%%%%%%%%
%%%%%%%%%%%%%%%%%%%%%%%%

\begin{figure}[t]
\begin{center}
  \renewcommand{\arraystretch}{1.15} 
 \begin{tabular}{@{\hskip 0.1in}l}
  \toprule
  $\textsc{bounds}$(d-tree $T_ {\varphi}$ for function $\varphi$, variable $x$) \\
  outputs lower and upper bounds for $\banz(\varphi,x)$ and $\#\varphi$\\[0.2ex]
  \midrule
  $(L_b,L_\#,U_b,U_\#) := (0,0,0,0)$ \texttt{// Initialize the bounds}\\
   \SWITCH $T_\varphi$\\
    \TAB\CASE  literal or constant $\ell$: \\
    \TAB\TAB $(L_b,L_{\#}):= (U_b,U_{\#})$ := $\textsc{ExaBan}(\ell,x)$ \\
    \TAB\CASE  non-trivial leaf $\psi$: \TAB 
    \texttt{//no literal nor constant}\\
    \TAB\TAB \texttt{//Compute bounds by Prop.~\ref{prop:banz_count_bounds}}\\
    \TAB\TAB $L_b$ := $\#L(\psi[x:=1]) - \#U(\psi[x:=0])$ \\
    \TAB\TAB $U_b$ := $\#U(\psi[x:=1]) - \#L(\psi[x:=0])$ \\
    \TAB\TAB $L_\#$ := $\#L(\psi)$;\TAB $U_\#$ := $\#U(\psi)$ \\
    \TAB\CASE $T_{\varphi_1}\text {op } T_{\varphi_2}$: \\
    \TAB\TAB $(L^{(i)}_b,L^{(i)}_\#,U^{(i)}_b,U^{(i)}_\#)$ := \textsc{bounds}($T_{\varphi_i}, x$), for $i\in[2]$ \\
    \TAB\TAB $n_i$ := number of variables in $\varphi_i$, for $i\in[2]$\\
    \TAB\TAB\SWITCH \text{op } \\
    \TAB\TAB\TAB\CASE $\odot$: \texttt{//wlog if $x$ is in 
$\varphi$, then it is in $\varphi_1$}\\
    \TAB\TAB\TAB\TAB $L_b$ := $L^{(1)}_b\cdot L^{(2)}_\#$; \TAB $U_b$ := $U^{(1)}_b\cdot U^{(2)}_\#$\\
    \TAB\TAB\TAB\TAB $L_\#$ := $L^{(1)}_\#\cdot L^{(2)}_\#$; \TAB $U_\#$ := $U^{(1)}_\#\cdot U^{(2)}_\#$\\
    \TAB\TAB\TAB\CASE $\otimes$: \texttt{//wlog if $x$ is in 
$\varphi$, then it is in $\varphi_1$}\\    
    \TAB\TAB\TAB\TAB $L_b$ := $L^{(1)}_b\cdot (2^{n_2} - U^{(2)}_\#)$;
     $U_b$ := $U^{(1)}_b\cdot (2^{n_2} - L^{(2)}_\#)$\\
    \TAB\TAB\TAB\TAB $L_\#$ := $L^{(1)}_\#\cdot 2^{n_2} + L^{(2)}_\#\cdot 2^{n_1} - L^{(1)}_\#\cdot L^{(2)}_\#$\\
    \TAB\TAB\TAB\TAB $U_\#$ := $U^{(1)}_\#\cdot 2^{n_2} + U^{(2)}_\#\cdot 2^{n_1} - U^{(1)}_\#\cdot U^{(2)}_\#$\\  
    \TAB\TAB\TAB\CASE $\oplus$: \texttt{//wlog $\varphi_1$ and $\varphi_2$ have same variables}\\
    \TAB\TAB\TAB\TAB $L_b$ := $L^{(1)}_b + L^{(2)}_b$;\TAB $U_b$ := $U^{(1)}_b + U^{(2)}_b$ \\
    \TAB\TAB\TAB\TAB $L_\#$ := $L^{(1)}_\# + L^{(2)}_\#$; \TAB $U_\#$ := $U^{(1)}_\# + U^{(2)}_\#$ \\
    \RETURN $(L_b,L_\#,U_b,U_\#)$ \\
  \bottomrule
  \end{tabular}
  \end{center}
  \caption{Computation of bounds for the Banzhaf value $Banzhaf(\varphi,x)$ and model count $\#\varphi$, given a (possibly partial) d-tree $T_\varphi$ for the function $\varphi$ and a variable $x$.}
  \label{alg:adaban-bounds}
\end{figure}

%%%%%%%%%%%%%%%%%%%%%%%%

\begin{example}
\label{ex:construct_lower_upper_bound}
\rm
Consider the DNF function 
$\varphi = 
(x \wedge y) \vee (x \wedge z) \vee u$.
The function is a disjunction of two independent functions 
$\varphi_1 = (x \wedge y) \vee (x \wedge z)$
and 
$\varphi_2 = u$. Since $\varphi_1$ is the
function analyzed in Ex.~\ref{ex:d-tree_banzhaf}, 
we know that $\banz(\varphi_1,x) = \#\varphi_1 = 3$.
Also, it holds
$\banz(\varphi_2,x) = 0$ and $\#\varphi_2 = 1$.
Using Eq.~\eqref{eq:ind_or_count} and~\ref{eq:ind_or_banz},
we obtain
\begin{align*}
&\banz(\varphi,x) = \banz(\varphi_1,x) \cdot (2^1-1)= 3\cdot 1  = 3 \\
&\#\varphi = \#\varphi_1\cdot\#\varphi_2 +\#\varphi_1\cdot(2^1-1)+(2^3-\#\varphi_1)\cdot\#\varphi_2  = 3 + 3 + 5 = 11.
\end{align*}
The functions 
$\varphi[x:=0] = (0 \wedge y) \vee (0 \wedge z) \vee u$ and 
$\varphi[x:=1] = (1 \wedge y) \vee (1\wedge z) \vee u = y\vee z\vee u$
are in iDNF, so it holds
$L(\varphi[x:=0]) = U(\varphi[x:=0]) = \varphi[x:=0]$ and
$L(\varphi[x:=1]) = U(\varphi[x:=1]) = \varphi[x:=1]$. Note that 
$\varphi[x:=0] = u$, yet the function  is defined over three variables, which is important for computing its correct model count.

We may also obtain the following iDNF functions: 
$L(\varphi) = (x \wedge y) \vee u$
by skipping the clause $(x \wedge z)$ in $\varphi$;
and  
$U(\varphi) = (x \wedge y) \vee z \vee u$
by removing $x$ from the second clause of $\varphi$. 
Using Eq.~\eqref{eq:ind_and_count} and \eqref{eq:ind_or_count}:
\begin{align*}
\#L(\varphi[x:=0]) =&\ \#U(\varphi[x:=0]) = 4,\\ 
\#L(\varphi[x:=1]) =&\ \#U(\varphi[x:=1]) = 7, \\
\#L(\varphi) = &\ 5,  \text{ and } \#U(\varphi) =  13.
\end{align*}
Hence, it indeed holds that
$\#L(\varphi) = 5 \leq \#\varphi = 11 \leq \#U_\varphi = 13$
and 
$\#L(\varphi[x:=1])- \#U(\varphi[x:=0]) = 3 \leq \banz(\varphi,x) = 3 \leq 
\#U(\varphi[x:=1])- \#L(\varphi[x:=0]) = 3$.
\end{example}

%%%%%%%%%%%%%%%%%%%%%%%%%%%%
\subsubsection{Efficient Computation of Lower and Upper Bounds for D-trees}
\label{sec:bounds_banz_count_dtree}
The procedure $\textsc{bounds}$ in Fig.~\ref{alg:adaban-bounds} computes lower and upper bounds on the Banzhaf value and model count for any d-tree, whose leaves are positive DNF functions, (possibly negated) literals, or constants. It does so in linear time in one bottom-up pass over the d-tree.

The procedure takes as input a d-tree $T_\varphi$ for a function $\varphi$
and a variable $x$ for which we want to compute the Banzhaf value.
At a leaf $\ell$ of $T_\varphi$ that is a literal or a constant, it calls
$\textsc{ExaBan}(\ell, x)$ to compute the exact Banzhaf value 
and model count for $\ell$.
At a leaf $\psi$ that is not a literal nor a constant,
the algorithm first computes the iDNF functions  
$L(\psi)$, $U(\psi)$, $L(\psi[x:=b])$, and 
$U(\psi[x:=b])$ for $b \in \{0,1\}$.
By Prop.~\ref{prop:banz_count_bounds}, these functions can be used to derive lower and upper bounds on $\banz(\psi,x)$ and $\#\psi$. 
If $T_{\varphi}$ has children, then it recursively computes bounds on them and then combines them into bounds for itself.
We next discuss the lower bound for the Banzhaf value of $x$ in case $\varphi$ is a disjunction of independent functions $\varphi_1$ and $\varphi_2$. The other cases are handled analogously. 
By Eq.~\eqref{eq:ind_or_banz}, the formula for the exact Banzhaf value is $\banz(\varphi, x) =  \banz(\varphi_1,x) \cdot (2^{n_2}- \#\varphi_2)$.
 To obtain a lower bound on $\banz(\varphi, x)$, we replace the term 
 $\banz(\varphi_1,x)$ by its lower bound and the term 
 $\#\varphi_2$ by its upper bound. The reason for using the upper bound is that the term occurs negatively. 

\begin{example}
\label{ex:d-tree_bounds}
\rm
Consider the following partial d-tree representing a function $\varphi$.
Each node is assigned a quadruple of bounds for the Banzhaf value 
of some variable $x$ and the model count for the d-tree rooted at that node. 
Following the notation in the procedure $\textsc{bounds}$ in Fig.~\ref{alg:adaban-bounds}, 
the first and the third entry in a quadruple are the lower and respectively upper bound 
for the Banzhaf value; the second and the fourth entry are the lower and respectively upper 
bound for the model count. For the computation of the bounds at the 
node $\otimes$ assume that each of the functions $\psi_i$ 
has four variables.

\begin{minipage}{\linewidth}
\begin{center}
\tikz {
 \node at (0,0)  (root) {$\oplus$};
\node at (-2,-0.75)  (c1) {$\otimes$} edge[-] (root);
\node at (2,-0.75)  (c2) {$\odot$} edge[-] (root);
\node at (-3,-1.5)  (c21) {$\psi_1$} edge[-] (c1);
\node at (-1,-1.5)  (c22) {$\psi_2$} edge[-] (c1);
\node at (1,-1.5)  (c23) {$\varphi_1$} edge[-] (c2);
\node at (3,-1.5)  (c24) {$\varphi_2$} edge[-] (c2);
%%%%%%%%%%%%%%%%%%%%%%%%
 
 \node at (0,0.5)  (x) {
 \begin{tabular}{cccc}
 $L_b$& $L_{\#}$ & $U_b$ & $U_{\#}$ \\
 ($43,$ & $219,$ & $136,$ & $374$)
 \end{tabular}
 };
 
 %\node at (1.4,0)  (x) {$(43,219,136, 374)$};
 
\node at (-2.8,-0.5)  (x) {$(18,184,64, 214)$};
\node at (2.7,-0.5)  (x) {$(25,35,72, 160)$};
\node at (-3,-1.9)  (x) {$(3,7,8,9)$};
\node at (-1,-1.9)  (x) {$(0,8,0,10)$};
\node at (1,-1.9)  (x) {$(5,7,9,20)$};
\node at (3,-1.9)  (x) {$(0,5,0,8)$};
}
\end{center}
\end{minipage}

Assume we have already computed the bounds for the leaves
of the d-tree.
We explain how the procedure $\textsc{bounds}$ uses these bounds 
to derive bounds for the Banzhaf values at the nodes $\odot$ and $\oplus$. 
Assume that the variable $x$ appears in
$\varphi_1$ but not in  $\varphi_2$.
%The computation of all other bounds is analogous. 
At the node $\odot$, the lower bound for the Banzhaf value is 
$5 \cdot 5 = 25$ and its upper bound is $9\cdot 8= 72$.
Similarly, at the node $\oplus$, the lower and upper bounds 
for the Banzhaf value   
are $L_b = 18 + 25 = 43$ and respectively $U_b = 64+72= 136$.
%Assume we are aiming for a $0.5$-approximation of the Banzhaf value. 

We cannot use the bounds $L_b$ and $U_b$ to derive a 
$0.5$-approximation for the Banzhaf value, since 
$(1- 0.5) \cdot U_b = 68$ is larger than  $(1+0.5) \cdot L_b = 64.5$.
However, every value within the interval from 
$(1- 0.6) \cdot U_b = 14.4$ to 
$(1+ 0.6) \cdot L_b = 68.8$ is a $0.6$-approximation. 
For instance, it holds that $20 \geq (1- 0.6) \cdot U_b \geq (1- 0.6) \cdot \banz(\varphi,x)$
and 
$20 \leq (1+0.6) \cdot L_b \leq (1+ 0.6) \cdot \banz(\varphi,x)$.

\nop{
The values $(3,3)$ at the left child node of the root are computed as follows. 
This node is an independent-and ($\odot$). The variable $x$ is in the left subtree. \textsc{ExaBan} computes the Banzhaf value $3$ of 
$x$ by multiplying the Banzhaf value $1$ at the left child node with the 
model count $3$ at the right child node.
The model count of $3$ is obtained by multiplying the model counts 
at the child nodes. 
The function represented by the tree rooted at this $\odot$-node is 
$\varphi_1 = x \wedge (y \vee z)$.
Indeed, every model of the function must satisfy $x$ and at least one 
of $y$ and $z$, which implies $\#\varphi_1 = 3$.
Using Eq.~\ref{eq:alternative_banzhaf}, we have 
$\banz(\varphi_1,x)= \varphi_1[x:=1]- \varphi_1[x:=0] = 3-0 = 3$.
}
\end{example}

Eq.~\eqref{eq:ind_and_count} to \eqref{eq:mut_excl_banz}
and Prop.~\ref{prop:banz_count_bounds} imply:

\begin{proposition}
\label{prop:bounds_correct}
For any positive DNF function $\varphi$, d-tree $T_{\varphi}$ for $\varphi$,  
and variable $x$ in $\varphi$, it holds $\textsc{bounds}(T_{\varphi}, x) = (L_b, L_{\#}, U_b, U_{\#})$
such that 
$L_b \leq \banz(\varphi,x) \leq U_b$ 
and 
$L_{\#} \leq \#\varphi \leq U_{\#}$.
\end{proposition}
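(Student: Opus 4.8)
The plan is to prove the statement by structural induction on the d-tree $T_\varphi$, strengthening the induction hypothesis so that it also carries the auxiliary range invariant $0 \le L_\# \le \#\varphi \le U_\# \le 2^{n}$, where $n$ is the number of variables of $\varphi$. This strengthening is exactly what makes the monotonicity arguments in the inductive step go through, and maintaining it is the part that requires the most care.

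For the base cases I would inspect the leaves. If the leaf is a literal or a constant, then $\textsc{bounds}$ sets the lower and upper bounds equal to the exact values returned by $\textsc{ExaBan}$, so the Banzhaf and count bounds follow from Prop.~\ref{prop:exaban_correct}, and the range invariant holds trivially since in every such case the exact count lies in $\{0,1\}$ and $2^{n}\ge 1$. If the leaf is a non-trivial DNF function $\psi$, then the four bounds are precisely those given by Prop.~\ref{prop:banz_count_bounds} applied to $\psi$ and to $\psi[x:=0],\psi[x:=1]$, which establishes the Banzhaf and count bounds directly; the range invariant follows because $L(\psi)$ is a subset of the clauses of $\psi$ (so $\#L(\psi)\ge 0$) and $U(\psi)$ is a DNF over the variables of $\psi$ (so $\#U(\psi)\le 2^{n}$).

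For the inductive step I would treat the three operators separately, in each case substituting the exact compositional formulas Eq.~\eqref{eq:ind_and_count}--\eqref{eq:mut_excl_banz} together with the induction hypotheses for the two children, and invoking monotonicity. The mutual-exclusion case $\oplus$ is immediate: both the count and the Banzhaf value are sums of the corresponding quantities for the children, every summand is monotone, so adding the children's lower (resp.\ upper) bounds yields a valid lower (resp.\ upper) bound and the range invariant is preserved. The independent-and case $\odot$ uses that the multiplied quantities are nonnegative: model counts and their bounds are nonnegative, the Banzhaf value of a positive function is nonnegative, and the Banzhaf upper bound $U^{(1)}_b$ dominates it and is hence nonnegative too, so the product of lower bounds lies below and the product of upper bounds lies above the true product. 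The only subtlety is that the Banzhaf lower bound $L^{(1)}_b$ may be negative; but then $L^{(1)}_b\cdot L^{(2)}_\#$ is nonpositive and therefore still below the nonnegative true value.

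The delicate case, and the main obstacle, is the independent-or $\otimes$, where the exact Banzhaf formula $\banz(\varphi,x)=\banz(\varphi_1,x)\cdot(2^{n_2}-\#\varphi_2)$ and the count formula Eq.~\eqref{eq:ind_or_count} both contain $\#\varphi_2$ with a negative sign. Monotonicity forces the substitution of the \emph{upper} bound $U^{(2)}_\#$ where a lower bound on $2^{n_2}-\#\varphi_2$ is needed and the \emph{lower} bound $L^{(2)}_\#$ where an upper bound is needed, matching the algorithm. For these substitutions to yield valid bounds (and not to reverse the inequality when multiplying), I need the surrogate factor $2^{n_2}-U^{(2)}_\#$ to remain nonnegative, which is precisely where the strengthened invariant $U^{(2)}_\#\le 2^{n_2}$ is used. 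I would then check that the invariant propagates: the identity $2^{n}-\big(a\,2^{n_2}+b\,2^{n_1}-ab\big)=(2^{n_1}-a)(2^{n_2}-b)$ shows that $U^{(1)}_\#\le 2^{n_1}$ and $U^{(2)}_\#\le 2^{n_2}$ imply the composed upper count bound stays $\le 2^{n}=2^{n_1+n_2}$, while the rewriting $L_\#=L^{(1)}_\#(2^{n_2}-L^{(2)}_\#)+L^{(2)}_\#\,2^{n_1}\ge 0$ gives nonnegativity of the lower count bound. The Banzhaf and count inequalities for $\varphi$ then close the induction.
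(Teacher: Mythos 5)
Your overall route is the same as the paper's: structural induction on the d-tree (the paper phrases it as a lemma over all subtrees), with leaves handled by Prop.~\ref{prop:exaban_correct} and Prop.~\ref{prop:banz_count_bounds}, and inner nodes handled by substituting the children's bounds into the exact formulas \eqref{eq:ind_and_count}--\eqref{eq:mut_excl_banz} and invoking monotonicity. Your strengthened invariant $0 \le L_\# \le \#\varphi \le U_\# \le 2^{n}$ is a genuine improvement in rigor: the paper's $\otimes$ step silently needs $U^{(2)}_\# \le 2^{n_2}$ (so that the factor $2^{n_2}-U^{(2)}_\#$ is nonnegative) and never establishes it, and your identity $2^{n} - \bigl(a\,2^{n_2}+b\,2^{n_1}-ab\bigr) = (2^{n_1}-a)(2^{n_2}-b)$ yields the count monotonicity more cleanly than the paper's rearrangement-inequality computation. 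Two caveats, though. First, preservation of the invariant at $\oplus$ is asserted, not proved: both children range over the same $n$ variables, so $U^{(i)}_\# \le 2^{n}$ gives only $U^{(1)}_\# + U^{(2)}_\# \le 2^{n+1}$; you need the structural fact that in a d-tree for a positive function every $\oplus$ arises from Shannon expansion, so each child has the form $y \odot T'$ or $\neg y \odot T''$ with $T',T''$ over $n-1$ variables, whence each child's upper count bound is at most $2^{n-1}$. Second, your appeals to ``the Banzhaf value of a positive function is nonnegative'' concern subtree functions, which are not syntactically positive (they contain negated literals); for $\otimes$-children and partition-$\odot$-children this can be repaired by noting the subtree represents a function equivalent to a positive DNF.

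The serious gap is that this nonnegativity premise is false exactly where it is most needed, and there the step does not merely lack justification --- it fails. Shannon expansion on the variable $x$ itself creates the subtree $\neg x \odot T_{\psi[x:=0]}$, whose function satisfies $\banz(\neg x \wedge \psi[x:=0], x) = -\#\psi[x:=0] < 0$ and whose $x$-child has exact values $L_b = U_b = -1$. The $\odot$ rule of Fig.~\ref{alg:adaban-bounds} then outputs the pair $\bigl(-L^{(2)}_\#,\, -U^{(2)}_\#\bigr)$, an inverted interval; propagating through the parent $\oplus$ gives root Banzhaf bounds $L_\#(\psi[x:=1]) - L_\#(\psi[x:=0])$ and $U_\#(\psi[x:=1]) - U_\#(\psi[x:=0])$, i.e., lower-minus-lower and upper-minus-upper rather than the sign-correct combinations of Prop.~\ref{prop:banz_count_bounds}. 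These fail to bracket $\banz(\psi,x) = \#\psi[x:=1]-\#\psi[x:=0]$ whenever the count bounds on $\psi[x:=0]$ are not tight, so no sharper argument can close your $\odot$ case as stated: one must either make the product rule sign-aware (interval multiplication) or restrict the claim to d-trees not Shannon-expanded on $x$. You should know that the paper's own proof shares this hole --- it works out only the $\otimes$ case (itself tacitly using $\banz(\xi_1,x)\ge 0$, which does hold there) and buries $\odot$ under ``the other cases are handled analogously'' --- so your proposal is faithful to the paper's argument, but neither proof covers this case.
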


%%%%%%%%%%%%%%%%%%%%%%%%%%%%
\subsubsection{Refining Bounds for D-Trees}
\label{sec:refine_bounds_d-tree}

%%%%%%%%%%%%%%%%%%%%%%%%%%%%%%%

\begin{figure}[t]
\begin{center}
  \renewcommand{\arraystretch}{1.15}
  \begin{tabular}{@{\hskip 0.1in}l}
  \toprule
  $\textsc{\adaban}$(d-tree $T_ {\varphi}$, variable $x$, error $\epsilon$, bounds $[L,U]$)\\
    outputs bounds for $\banz(\varphi,x)$ satisfying relative error $\epsilon$ \\[0.2ex]
  \midrule
  $(L_b\ ,\ \cdot\ \ ,\ U_b\ ,\ \cdot\ )$ := $\textsc{bounds}(T_{\varphi},x)$ \TAB\texttt{//get bounds on $T_{\varphi}$}\\
  $\ell := u := 0$ \TAB\TAB\TAB\texttt{//initialize the bounds to return}\\
  $L$ := $\max\{L,L_b\}$; $U$ := $\min\{U,U_b\}$ \TAB\texttt{//update bounds}\\
  \IF $(1-\epsilon)\cdot U - (1+\epsilon) \cdot L \leq 0$ \TAB\TAB\STAB\texttt{//error satisfied}\\
  \TAB $\ell$ := $(1-\epsilon)\cdot U$; $u$ := $(1+\epsilon)\cdot L$\\
  \ELSE \\
  \TAB pick a non-trivial leaf $\psi$ of $T_{\varphi}$ \TAB\texttt{//no literal/constant}\\ 
   %\TAB $\textsc{Simplify}(\psi)$ \\
  \TAB \SWITCH $\psi$\\
  \TAB\TAB \CASE $\psi_1 \wedge \psi_2$ for independent $\psi_1$ and $\psi_2$:\\
   \TAB\TAB\TAB replace $\psi$ by $\psi_1 \odot \psi_2$ in $T_{\varphi}$\\
  \TAB\TAB \CASE $\psi_1 \vee \psi_2$ for independent $\psi_1$ and $\psi_2$:\\
   \TAB\TAB\TAB replace $\psi$ by $\psi_1 \otimes \psi_2$ in $T_{\varphi}$\\
   \TAB\TAB \DEFAULT: \\
   \TAB\TAB\TAB pick a variable $y$ in $\psi$ \\
   \TAB\TAB\TAB replace $\psi$ by $(y \odot \psi[y:=1]) \oplus 
   (\neg y \odot \psi[y:=0])$ in $T_{\varphi}$ \\
 \TAB $[\ell,u]:= \textsc{\adaban}(T_ {\varphi}, x, \epsilon, [L,U])$\\
 \RETURN $[\ell,u]$\\
  \bottomrule
  \end{tabular}
\end{center}
  \caption{Computing approximate Banzhaf values with relative error $\epsilon$ using incremental decomposition and bound refinement.}
  \label{alg:adaban}
\end{figure}

Fig.~\ref{alg:adaban} introduces our approximation algorithm $\adaban$. It takes as input a partial d-tree $T_{\varphi}$, a variable $x$, a relative error $\epsilon$, and initial trivial bounds $[0,2^{n-1}]$ on $\banz(\varphi,x)$, where $n$ is the number of variables in $\varphi$. 
It then computes an interval of $\epsilon$-approxima\-tions for $\banz(\varphi,x)$.
\nop{It uses recursion to gradually improve the approximation interval.}
First, it calls the procedure \textsc{bounds} from 
Fig.~\ref{alg:adaban-bounds} to obtain 
a lower bound $L_b$ and an upper bound $U_b$
for $\banz(\varphi,x)$
based on the current partial d-tree  $T_{\varphi}$. It then updates the best lower bound $L$ and upper bound $U$ seen so far.
If $(1-\epsilon)\cdot U - (1+\epsilon) \cdot L \leq 0$,
then it returns the interval 
$[(1-\epsilon)\cdot U, (1+\epsilon)\cdot L]$.
For any value  $B$ in this non-empty interval, it holds 
$B \geq (1-\epsilon)\cdot U \geq (1-\epsilon)\cdot \banz(\varphi, x)$
and 
$B \leq (1+\epsilon)\cdot L \leq (1+\epsilon)\cdot \banz(\varphi, x)$,
i.e., $B$ is a relative $\epsilon$-approximation 
for $\banz(\varphi, x)$.
If the condition does not hold,  it picks a non-trivial (no literal/constant) leaf $\psi$, decomposes it,
and checks again whether the new bounds 
are satisfactory. Such a leaf $\psi$ always exists 
unless $T_{\varphi}$ is complete, in which case 
$U = L$. The decomposition of $\psi$ replaces $\psi$ 
by $\psi_1 \text{ op }  \psi_2$ where \text{ op }
represents independent-and ($\odot$), independent-or ($\otimes$), or mutual exclusion ($\oplus$).
The decomposition of $\psi$ into mutually exclusive functions $\psi_1$ and $\psi_2$ is always possible using  Shannon expansion. 

\begin{proposition}
\label{prop:adaban_correct}
For any positive DNF function $\varphi$,
d-tree $T_{\varphi}$ for $\varphi$,  variable $x$ in $\varphi$, 
error $\epsilon$, 
and bounds $L \leq \banz(\varphi,x) \leq U$,
it holds 
$\textsc{AdaBan}(T_{\varphi}, x, \epsilon, [L,U]) = [\ell, u]$ such that every value in $[\ell, u]$ is an $\epsilon$-approximation 
of $\banz(\varphi, x)$.
\end{proposition}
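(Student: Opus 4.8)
The plan is to prove Proposition~\ref{prop:adaban_correct} by induction on the number of recursive calls to \textsc{AdaBan}, equivalently on the number of decomposition steps remaining before the d-tree becomes complete. The key invariant I would maintain throughout is that the bounds $[L,U]$ passed into each recursive call always satisfy $L \leq \banz(\varphi,x) \leq U$; this is the precondition of the proposition and it must be preserved across recursive calls in order for the induction to go through.

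First I would establish the base case, which is the situation where the algorithm does not recurse: the condition $(1-\epsilon)\cdot U - (1+\epsilon)\cdot L \leq 0$ holds. Here the algorithm returns $[\ell,u] = [(1-\epsilon)\cdot U,\ (1+\epsilon)\cdot L]$, and this interval is nonempty precisely because the condition rearranges to $(1-\epsilon)\cdot U \leq (1+\epsilon)\cdot L$. For any $B$ in this interval, the chain of inequalities $B \geq (1-\epsilon)\cdot U \geq (1-\epsilon)\cdot\banz(\varphi,x)$ (using $U \geq \banz(\varphi,x)$) and $B \leq (1+\epsilon)\cdot L \leq (1+\epsilon)\cdot\banz(\varphi,x)$ (using $L \leq \banz(\varphi,x)$) shows $B$ is a relative $\epsilon$-approximation, exactly as spelled out in the algorithm's description. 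The crucial point is that these two inequalities rely on $L$ and $U$ being genuine bounds on $\banz(\varphi,x)$.

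For the inductive step, I would first argue that the update $L := \max\{L,L_b\}$ and $U := \min\{U,U_b\}$ preserves the invariant: by Proposition~\ref{prop:bounds_correct}, the values $L_b,U_b$ returned by \textsc{bounds}$(T_\varphi,x)$ satisfy $L_b \leq \banz(\varphi,x) \leq U_b$, and the incoming $L,U$ satisfy the same by hypothesis, so taking the tighter of each still sandwiches $\banz(\varphi,x)$. Next I would argue that the decomposition step preserves the function represented by the d-tree: replacing a non-trivial leaf $\psi$ by $\psi_1 \odot \psi_2$, $\psi_1 \otimes \psi_2$, or the Shannon expansion $(y \odot \psi[y:=1]) \oplus (\neg y \odot \psi[y:=0])$ yields a logically equivalent d-tree for the \emph{same} function $\varphi$, so $\banz(\varphi,x)$ is unchanged. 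Therefore the recursive call \textsc{AdaBan}$(T_\varphi, x, \epsilon, [L,U])$ is made with a d-tree for the same $\varphi$ and with bounds still satisfying the invariant, and by the induction hypothesis its return value is an interval of $\epsilon$-approximations, which the algorithm passes back unchanged.

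The main obstacle is \emph{termination}: the induction is on the number of recursive calls, so I must justify that this number is finite, i.e., that the recursion bottoms out. The argument is that each decomposition step strictly expands the d-tree toward completeness --- a Shannon expansion replaces a leaf over $k$ variables by subtrees over functions with strictly fewer variables (the expanded variable is eliminated), and the independence decompositions split a leaf into smaller pieces --- so after finitely many steps every leaf is a literal or constant, the d-tree is complete, and then by Proposition~\ref{prop:exaban_correct} the bounds collapse to $U = L = \banz(\varphi,x)$, forcing $(1-\epsilon)\cdot U - (1+\epsilon)\cdot L = -2\epsilon\cdot\banz(\varphi,x) \leq 0$ and triggering the base case. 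I would state explicitly that a potential function such as the total number of non-trivial-leaf variable occurrences strictly decreases at each step to make the finiteness rigorous; this is the one place where the otherwise routine induction requires genuine care.
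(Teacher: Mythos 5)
Your proof is correct and follows essentially the same route as the paper's: the stopping condition $(1-\epsilon)\cdot U - (1+\epsilon)\cdot L \leq 0$ together with the validity of the bounds (Proposition~\ref{prop:bounds_correct}) yields the $\epsilon$-approximation guarantee, and termination follows because a complete d-tree forces $L = U = \banz(\varphi,x)$ (the paper packages this last fact as Lemma~\ref{lem:bounds_exact_complete} rather than Proposition~\ref{prop:exaban_correct}, but the content is the same). Your explicit induction invariant and potential-function argument for termination merely formalize what the paper states narratively.
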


\subsubsection{Optimizations}
\label{sec:optimizatons}
The algorithms $\textsc{\adaban}$ and $\textsc{bounds}$
presented in Figs.~\ref{alg:adaban-bounds} and ~\ref{alg:adaban} are subject to four key optimizations implemented in our prototype.

(1)~Instead of {\em eagerly} recomputing the bounds for a partial d-tree after each decomposition step, we follow a {\em lazy} approach that does not recompute the bounds after independence partitioning steps and instead only recomputes them after Shannon expansion steps.

(2)~To avoid recomputation of bounds for subtrees whose 
leaves have not changed, we cache the bounds for each subtree. 
Hence, whenever a new bound is calculated for some leaf, it suffices
to propagate the bound along the path to the root of the d-tree. 

(3)~To approximate the Banzhaf values for several variables, we do not compute bounds for each variable after each expansion step. 
Instead, we compute the approximation for one variable at a time. 
After having achieved a satisfying approximation for one variable,
we reuse the partial d-tree constructed so far to obtain a
desired approximation for the next variable. This reduces the number of \textsc{bounds} calls and improves the overall runtime of \adaban.

(4)~Instead of computing bounds for $\#\varphi[x:=1]$ and $\#\varphi[x:=0]$, as done in \textsf{bounds}, it suffices to compute bounds for $\#\varphi$ and $\#\varphi[x:=0]$ for each variable $x$.
This is justified by the following insight:
\begin{align*}
&\banz(\varphi, x) =\ \#\varphi[x:=1] - \#\varphi[x:=0] \\
 &= \ \#\varphi[x:=1] + \#\varphi[x:=0] - 2\cdot \#\varphi[x:=0] 
=\ \#\varphi - 2\cdot \#\varphi[x:=0], 
\end{align*}
where the first equality is by the characterization of the Banzhaf value in Eq.~\eqref{eq:alternative_banzhaf} and the last equality 
states that the set of models of $\varphi$ is the disjoint union of the set of models where $x$ is $0$ and the set of models where $x$ is set to $1$.
In many practical scenarios, the lower bound for $\banz(\varphi, x)$ computed using bounds for $\#\varphi$ and $\#\varphi[x:=0]$ is tighter than the lower bound computed by $\textsc{\adaban}$.

\section{Banzhaf-based Ranking and top-$k$}
\label{sec:ranking_top_k}
%\subsection{Top-$k$ and Ranking}
%\label{sec:topk-ranking}

Common uses of fact attribution in query answering and explanations are to identify the $k$ most influential facts and to rank the facts by their influence to the query result. Our anytime approximation of Banzhaf values lends itself naturally to fast ranking and computation of top-$k$ facts, as follows.

\subsection{The Algorithm \textsc{IchiBan}}

We introduce a new algorithm called \textsc{IchiBan}, that uses \textsc{AdaBan} to find the variables in a given function with the top-$k$ Banzhaf values. It starts by running \adaban\ for all variables at the same time. Whenever \adaban\ computes the bounds for the Banzhaf values of the variables, \textsc{IchiBan} identifies those variables whose upper bounds are smaller than the lower bounds of at least $k$ other variables. These former variables are not in top-$k$ and are discarded. It then resumes \adaban\ for the remaining variables and repeats the selection process using the refined bounds. Eventually, it obtains the variables with the top-$k$ Banzhaf values. For ranking, \textsc{IchiBan} runs until the approximation intervals for the variables do not overlap or collapse to the same Banzhaf value. 

\textsc{IchiBan} may also be executed with a parameter $\epsilon\in[0,1]$. In this case, it may finish as soon as 
each approximation interval reaches a relative error $\epsilon$.
\textsc{IchiBan} then ranks the facts based on the order of the mid-points of their respective intervals.

\subsection{A Dichotomy Result} 
The time complexity of \textsc{IchiBan} is exponential in the worst case. We next analyze in further depth the complexity of the ranking problem and show a dichotomy in the complexity of Banzhaf-based ranking of database facts. We first formalize the following ranking problem, parameterized by a Boolean 
CQ $Q$:

%\noindent
%\vspace*{.5em}
\begin{center}
\fbox{%
    \parbox{0.6\linewidth}{%
    \begin{tabular}{ll}
   Problem: & $\compbanz_Q$ \\
   Description: &  \textit{Banzhaf-based ranking of database facts} \\
   Parameter: &  Boolean CQ $Q$ \\    
        Input: & Database $D=(D_n,D_x)$ and facts $f_1, f_2 \in D_n$ \\
        Question: & Is $\banz(Q,D,f_1) \leq \banz(Q,D,f_2)$?
    \end{tabular}
    }}%
\end{center}
%\vspace*{.5em}

We now state the dichotomy and then explain it.
\begin{theorem}
\label{theo:dichotomy}
For any Boolean CQ $Q$ without self-joins, it holds:

\begin{itemize}
\item If $Q$ is hierarchical, then $\compbanz_Q$ can be solved in polynomial time.
\item If $Q$ is not hierarchical, then $\compbanz_Q$ cannot be solved in polynomial time, 
unless there is an $\fptas$ for $\countbis$.
\end{itemize}
\end{theorem}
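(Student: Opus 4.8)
The plan is to prove the two bullets separately; the hierarchical case is routine and the non-hierarchical case carries all the difficulty. For the hierarchical direction I would reduce ranking to exact computation. If $Q$ is self-join-free and hierarchical, then on any database $D$ its lineage $\varphi_{Q,D}$ decomposes into a complete d-tree by independence partitioning alone, with no Shannon expansion, in size linear in $|\varphi_{Q,D}|$ (this is the property of hierarchical lineage recalled in Sec.~\ref{sec:exact}, following~\cite{olteanu2008using}). Running \textsc{ExaBan} on this d-tree therefore returns $\banz(Q,D,f_1)$ and $\banz(Q,D,f_2)$ exactly in polynomial time by Prop.~\ref{prop:exaban_correct}, and comparing the two integers answers $\compbanz_Q$. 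Thus tractability of ranking is immediate from tractability of exact computation.

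For the non-hierarchical direction I would argue the contrapositive: a polynomial-time algorithm for $\compbanz_Q$ yields an \fptas\ for \countbis. The first step is to reduce the fixed non-hierarchical $Q$ to the canonical pattern. Since $Q$ is self-join-free and non-hierarchical, there are variables $X,Y$ and three atoms witnessing $\at(X)\setminus\at(Y)\neq\emptyset$, $\at(X)\cap\at(Y)\neq\emptyset$, and $\at(Y)\setminus\at(X)\neq\emptyset$. By instantiating every other variable with a fresh constant and populating each remaining relation with a single exogenous fact (a standard substitution argument), every database for the triangle query $R(X)\wedge S(X,Y)\wedge T(Y)$ is realized, up to renaming, as a database for $Q$ with the same endogenous facts and the same lineage. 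Hence it suffices to prove hardness for this canonical query.

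The second step relates Banzhaf values to independent-set counts. Encode a bipartite graph $G=(U\cup V,E)$ by making each $R(u)$ and $T(v)$ an endogenous fact and each edge an exogenous $S$-fact, so the lineage is the \pptwodnf\ function $\varphi_G=\bigvee_{(u,v)\in E}(r_u\wedge t_v)$; its non-models are exactly the independent sets of $G$, so the number of independent sets $\#\mathrm{IS}(G)$ equals $2^{|U|+|V|}-\#\varphi_G$. Expanding Eq.~\eqref{eq:alternative_banzhaf} then writes the Banzhaf value of a designated fact as an affine image of independent-set counts of $G$ and of $G$ minus a vertex; in particular, attaching an apex vertex adjacent to all of $V$ yields a fact whose Banzhaf value equals $\#\mathrm{IS}(G)$ up to a known additive offset, so that approximating this Banzhaf value approximates $Z\defeq\#\mathrm{IS}(G)$. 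The heart of the proof is then a weighing argument that turns the comparison oracle into such an approximation: for a geometric grid of targets $g_i=\lceil(1+\epsilon)^i\rceil$ spanning $[1,2^{|U|+|V|}]$ (only $O((|U|+|V|)/\epsilon)$ of them), build a database whose two designated facts $f_1,f_2$ satisfy $\banz(Q,D_i,f_1)$ monotone in $Z$ and $\banz(Q,D_i,f_2)$ equal to a reference value realizing the test ``$Z\le g_i$''; one call to $\compbanz_Q$ per $i$ locates the index with $g_i\le Z<g_{i+1}$, giving a $(1+\epsilon)$-approximation with polynomially many oracle calls and constructions, i.e.\ an \fptas.

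The main obstacle, and the genuinely delicate part, is the reference construction: within the single fixed schema of $Q$ and alongside the $G$-encoding, I must exhibit a second fact $f_2$ whose exact Banzhaf value is a known, tunable quantity computable in polynomial time, while keeping the two gadgets independent enough that the combining formulas of Sec.~\ref{sec:exact} let me read off the comparison cleanly. The trouble is that independent components combine by independent-or, and Eq.~\eqref{eq:ind_or_banz} scales $\banz(f_1)$ by the independent-set count of the reference component and $\banz(f_2)$ by that of the $G$-component, so the comparison couples the two sides; the construction must choose the reference gadget (e.g.\ a matching, whose independent-set count is a controllable product of small factors) so that these cross terms are known or cancel, leaving a clean monotone test. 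This cross-scaling is precisely why a multiplicative (geometric) grid is attainable while an additive grid of spacing one is not, which is exactly why the conclusion is the existence of an \fptas\ for \countbis\ rather than exact polynomial-time counting. Combining this reduction with the conjectured nonexistence of such an \fptas~\cite{DyerGGJ03,CurticapeanDFGL19} gives the claimed intractability.
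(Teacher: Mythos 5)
Your architecture tracks the paper's proof closely: tractability by reducing ranking to exact computation for hierarchical queries (your route via complete d-trees and \textsc{ExaBan} works just as well as the paper's citation of prior work); and, for the non-hierarchical case, the reduction of general $Q$ to the canonical query $R(X)\wedge S(X,Y)\wedge T(Y)$, the encoding of a bipartite graph as a database with exogenous edge facts so that the lineage is a \pptwodnf\ function whose non-models are the independent sets, apex variables attached to one side of each component, and the use of the ranking oracle as a threshold test. The paper even uses exactly the reference gadget you propose: a matching with its own apex, placed independently next to the graph component.

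The genuine gap is your claim that for every target $g_i=\lceil(1+\epsilon)^i\rceil$ one can build a database realizing the test ``$Z\le g_i$''. Carry out the computation for your own gadget: with the graph component (apex $x$) independent of a matching of size $m$ (apex $y$), the combining rules give $\banz(f_x)=(N-2^{|U|})\cdot(3^m+2^m)$ and $\banz(f_y)=(3^m-2^m)\cdot(N+2^{|U|})$, where $N$ is the number of non-models of the graph's \pptwodnf\ function; the cross terms cancel and one oracle call decides exactly the test $N\le 2^{|U|}\cdot(3/2)^m$. So the realizable thresholds form a geometric grid of \emph{fixed} ratio $3/2$, not ratio $1+\epsilon$, and the search over $m$ yields only a $3/2$-factor approximation of $N$ — which is not an \fptas, so the contradiction with the \countbis\ conjecture does not yet follow. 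The idea missing from your proposal, and the step the paper devotes the end of its argument to (Appendix~\ref{app:ranking}, Sec.~\ref{sec:intrac_basic_non_hierarchical}), is amplification by powering: replace $\varphi$ by the disjunction $\varphi^k$ of $k$ variable-disjoint copies, use $\countnsat(\varphi^k)=\countnsat(\varphi)^k$, run the $3/2$-approximation on $\varphi^k$, and return the $k$-th root; with $k\approx 2/\epsilon$ the factor $3/2$ becomes $(3/2)^{1/k}\le 1+\epsilon$, which is what turns the coarse threshold grid into an \fptas\ for \countnsat\ and hence for \countbis. Without this step (or a much more delicate gadget analysis showing that polynomial-size reference graphs realize a $(1+\epsilon)$-dense set of thresholds — which the matching gadget certainly does not), your reduction stops short of the stated conclusion.
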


The tractability part of our dichotomy follows from prior 
work: In case of hierarchical queries, 
{\em exact} Banzhaf values of database facts  can be computed in 
polynomial time~\cite{LivshitsBKS:LMCS:2021}.
Hence,  we can first compute
the exact Banzhaf values and then rank the facts.
Showing the intractability part of our dichotomy is more involved and requires novel development. 
It is based on the widely accepted conjecture that 
there is no polynomial-time approximation scheme (\fptas) for counting independent 
sets in bipartite graphs ($\countbis$)~\cite{DyerGGJ03,CurticapeanDFGL19}. In the following, we make these notions more precise.

A bipartite graph is an undirected graph $G = (V, E)$
where the set $V$ of nodes  is partitioned into 
two disjoint sets $U$ and $W$ and the edges $E \subseteq U \times W$
connect nodes from $U$ with nodes from $W$.   
An independent set $V'$ of $G$ is a subset of $V$ 
such that no two nodes in $V'$ are connected by an edge.  
The problem $\countbis$ is defined as:

\begin{center}
\fbox{%
    \parbox{0.6\linewidth}{%
    \begin{tabular}{ll}
   Problem: & $\countbis$ \\
   Description: &  \textit{Counting independent sets in bipartite graphs} \\
    Input: & Bipartite graph $G$  \\
    Compute: & Number of independent sets of $G$
    \end{tabular}
    }}%
    \end{center}

An algorithm $A$ for a numeric function $g$ is a
{\em fully polynomial-time approximation scheme} (\fptas) for $g$ if 
for any error $0<\epsilon < 1$ and input $x$, $A$ computes,  
in time polynomial in the size of $x$ and in $\epsilon^{-1}$, a value $A(x)$ such that $(1 -\epsilon)g(x) \leq A(x) \leq (1 + \epsilon) g(x)$.

The hardness result in Theorem~\ref{theo:dichotomy} assumes the widely accepted conjecture that there is 
%no \fpras, and therefore 
no \fptas for $\countbis$~\cite{DyerGGJ03,CurticapeanDFGL19}. We next outline our proof strategy, which is visualized by the following diagram; the proof details are deferred to Appendix~\ref{app:ranking}. 

\begin{center}
\begin{tikzpicture}

\node (A) at (0, 0) {Polynomial-time algorithm for $\compbanz_Q$};
\node (A) at (0, -0.4) {for any non-hierarchical query $Q$};

%\draw (A) (0,-0.1) ellipse (3.3 and 0.65);
\draw[rounded corners=0.1cm] (-3.5,-0.7) rectangle (3.5,0.4);

\node (B) at (-0.8, -1.4) {no \fptas}; 
\node (B) at (1, -1.4) {$\countbis$};

%\draw[rounded corners=0.1cm] (0.3,-1.6) rectangle (1.3,-1.2);
\draw (A) (1, -1.4) ellipse (0.7 and 0.3);

\node (B) at (-0.8, -2.4) {no \fptas}; 
\node (B) at (1, -2.4) {$\countnsat$};
%\draw[rounded corners=0.1cm] (0.3,-2.6) rectangle (1.3,-2.2);
\draw (A) (1, -2.4) ellipse (0.7 and 0.3);

\node (D) at (0, -4) {\fptas for $\countnsat$};
\draw[rounded corners=0.1cm] (-3.5,-4.3) rectangle (3.5,-3.7);
%\draw (A) (0,-4) ellipse (3.3 and 0.4);

\draw[<-](-3.6,-3.8) to[out=120,in=240] (-3.6,-0.4);
%\node (call) at (-4.2,-1.8) {\textit{linearly}};
\node (call) at (-4.8,-2.2) {\textit{implies}};
%\node (call) at (-4.2,-2.5) {\textit{calls}};

\draw[<-](-1.8,-2.4) to[out=120,in=240] (-1.8,-1.5);
\node (imply) at (-2.6,-1.9) {\textit{implies}};

\draw[<-](1.8,-2.4) to[out=40,in=320] (1.8,-1.5);
\node (imply) at (3.1,-1.5) {\textit{polynomial}};
\node (imply) at (3.1,-1.9) {\textit{parsimonious}};
\node (imply) at (3.1,-2.2) {\textit{reduction}};

\draw[red, decorate, decoration=snake](-0.8,-3.6) to (-0.8,-2.6);
\node (contradiction) at (0.1, -3.1) {\textit{\color{red}contradicts}};

\end{tikzpicture}
\end{center}

We use the intermediate problem $\countnsat$:  
Given a positive bipartite DNF function, compute the number of its non-satisfying assignments. We first give a parsimonious polynomial-time reduction from $\countbis$ to $\countnsat$, i.e., a polynomial-time reduction that also preserves the output; this means that the number of non-satisfying assignments equals the number of independent sets.
Assuming that there is no \fptas for \countbis, 
this reduction implies that there is no \fptas for \countnsat. 
Yet, given a polynomial-time algorithm $A$ for $\compbanz_Q$
for any non-hierarchical query $Q$, we can design an $\fptas$
for $\countnsat$.  
This contradicts the assumption that there is no \fptas
for $\countnsat$. Consequently, there cannot be any polynomial-time 
algorithm for $\compbanz_Q$ for non-hierarchical queries $Q$. 

\section{Experiments}
\label{sec:experiments}

This section details our experimental setup and results. 

\subsection{Experimental Setup and Benchmarks}

We implemented all algorithms in Python 3.9 and performed experiments on a Linux Debian 14.04 machine with 1TB of RAM and an Intel(R) Xeon(R) Gold 6252 CPU @ 2.10GHz processor. We set a timeout for each run of an algorithm to one hour.

\paragraph*{Algorithms} We benchmarked our algorithms \textsc{ExaBan}, \textsc{AdaBan}, and \textsc{IchiBan} against the following three competitors: \textsc{Sig22}, for exact computation using an off-the-shelf knowledge compilation package~\cite{DeutchFKM:SIGMOD:2022}; \textsc{MC}, a Monte Carlo-based randomized approximation~\cite{DBLP:conf/icdt/LivshitsBKS20}; and \textsc{CNFProxy}, an heuristic for ranking facts based on their contribution~\cite{DeutchFKM:SIGMOD:2022}.  
These competitors were originally developed for Shapley value. We adapted them to compute Banzhaf values (see Sec.~\ref{sec:related}). \textsc{AdaBan}, \textsc{MC}, and \textsc{IchiBan} expect as input: the error bound, the number of samples, and respectively the number of top results to retrieve. We use the notation \textsc{Algo{\bf X}} to denote the execution of an algorithm \textsc{Algo} with parameter value ${\bf X}$.

\paragraph*{Datasets}  We tested the algorithms using 301 queries evaluated over three datasets: Academic, IMDB and TPC-H (SF1). The workload is based on previous work on  Shapley values for query answering ~\cite{DeutchFKM:SIGMOD:2022,arad2022learnshapley}: as in ~\cite{DeutchFKM:SIGMOD:2022}, for TPC-H we used all queries without nested subqueries and with aggregates removed, so expressible as SPJU queries. For IMDB and Academic, we used all queries from \cite{arad2022learnshapley} (Academic was not used in \cite{DeutchFKM:SIGMOD:2022}). We constructed lineage for all output tuples of these queries using ProvSQL~\cite{senellart2018provsql}. The resulting set of nearly 1M lineage expressions is the most extensive collection for which attribution in query answering has been assessed in academic papers. Table \ref{tab:stat} includes statistics on the datasets.

\paragraph*{Measurements} We measure the execution time of all algorithms and the accuracy of \textsc{AdaBan} and \textsc{MC}. 
We define an instance as the (exact, approximate or top-$k$) computation of the Banzhaf values for all variables in a lineage of an output tuple of a query over one dataset. We report failure in case an algorithm did not terminate an instance within one hour. We also report the success rate of each algorithm and statistics of its execution times across all instances (average, median, maximal execution time, and percentiles). The  p$X$ columns in the following tables show the execution times for the $X$-th percentile of the considered instances.

\nop{We now turn to the experimental results, starting with the problem of exact computation.}

\begin{table}
\begin{center}
\begin{tabular}{| c | c | c |c| c|}
    \hline
        {\textbf{Dataset}} &  {\makecell{\textbf{\# Queries}}} &
         \textbf{\# Lineages}  & \textbf{\# Vars (avg/max)} & \textbf{\# Clauses (avg/max)} \\
         
         \hline
    \texttt{Academic} & 92 & 7,865 & 79 / 6,027 & 74 / 6,025\\
    \hline
    \texttt{IMDB} & 197 & 986,030 & 25 / 27,993 & 15 / 13,800 \\
    \hline
    \texttt{TPC-H} & 12 & 165 & 1,918 / 139,095 & 863 / 75,983\\
    \hline
    
\end{tabular}
\end{center}
\caption{Statistics of the datasets used in the experiments.}
\label{tab:stat}
\end{table}

\subsection{Exact Banzhaf computation}
\label{sec:exp-exact}

We first compare the two exact algorithms: $\textsc{ExaBan}$ and $\textsc{\textsc{Sig22}}$.

\paragraph*{Success Rate} 
Table \ref{tab:successrate} gives the success rate of $\textsc{ExaBan}$ and $\textsc{Sig22}$ for each dataset. 
$\textsc{ExaBan}$ succeeded for far more queries and lineages than $\textsc{Sig22}$. For Academic and IMDB, both algorithms succeeded for the majority of instances; a breakdown based on queries shows that whenever $\textsc{Sig22}$ failed for a query, it actually failed for all lineages (output tuples) of this query. $\textsc{ExaBan}$ succeeds for 15\% and 17\% more queries for Academic and respectively IMDB.
For TPC-H, the query success rate is significantly lower for both algorithms, even though $\textsc{ExaBan}$ failed for only 9\% of the  queries ($\textsc{Sig22}$ failed for 14\%).

\begin{table}
\begin{center}
\begin{tabular}{| c | c| c|c|}
\hline
\textbf{Dataset} & \textbf{Algorithm} & \textbf{Query Success Rate}
& \textbf{Lineage Success Rate} \\
\hline
{\multirow{4}{*}{{\texttt{Academic}}}}  & $\textsc{ExaBan}$  & 98.91\% & 99.99\%\\
 & $\textsc{Sig22}$ & 83.91\% & 98.40\% \\
\cdashline{2-4}
& $\textsc{AdaBan0.1}$ & 98.91\% & 99.99\%  \\
 & $\textsc{MC50\#vars}$ & 96.74\% & 98.83\%  \\
\hline
{\multirow{4}{*}{{\texttt{IMDB}}}} & $\textsc{ExaBan}$ & 82.23\% & 99.63\%  \\
 & $\textsc{Sig22}$ & 65.48\% & 98.35\%  \\
\cdashline{2-4}
 & $\textsc{AdaBan0.1}$ & 88.32\% & 99.81\% \\
 & $\textsc{MC50\#vars}$ & 83.76\% & 99.74\% \\

\hline
{\multirow{4}{*}{{\texttt{TPC-H}}}} & $\textsc{ExaBan}$ & 58.33\% & 91.52\%  \\
& $\textsc{Sig22}$ & 50.00\% & 85.46\%  \\
\cdashline{2-4}
& $\textsc{AdaBan0.1}$ & 75.00\% & 92.73\% \\
& $\textsc{MC50\#vars}$ & 50.00\% & 85.46\% \\

\hline

\end{tabular}
\end{center}
\caption{Query success rate: Percentage of queries for which the algorithms finish for all instances of a query within one hour. Lineage success rate:  Percentage of instances (over all queries in each dataset) for which the algorithms finish within one hour.}
\label{tab:successrate}
\end{table}

\paragraph*{Runtime Performance.} 
We first analyze the instances for which both algorithms succeed.
There are also instances for which $\textsc{Sig22}$ fails and $\textsc{ExaBan}$ succeeds. There are however no instances for which $\textsc{Sig22}$ succeeds and $\textsc{ExaBan}$  fails.
Table~\ref{tab:success_exact} shows that $\textsc{ExaBan}$ clearly outperforms $\textsc{\textsc{Sig22}}$: Whenever both succeed for Academic and TPC-H, they are very fast, bar a few outliers for $\textsc{Sig22}$. $\textsc{ExaBan}$ needs less than 0.4 and respectively 0.95 seconds for each instance. For instances that are hard for $\textsc{Sig22}$, $\textsc{ExaBan}$ achieves a speedup of up to 166x (229x) for TPC-H (Academic). For IMDB, $\textsc{ExaBan}$'s speedup over $\textsc{Sig22}$ is already visible for simple instances, with a speedup of 25x for the 95-th  percentiles. $\textsc{ExaBan}$ also has a few performance outliers for IMDB.

\paragraph*{Runtime Performance  of $\textsc{ExaBan}$ when $\textsc{Sig22}$ fails.}
$\textsc{Sig22}$ fails for 126 instances in Academic, 16239 instances in IMDB, and 24 instances in TPC-H. Table~\ref{tab:failed_exact} summarizes the success rate and runtime performance of $\textsc{ExaBan}$ for these instances. For Academic, $\textsc{ExaBan}$ achieves near-perfect success and finishes in less than ten minutes for all these instances. For IMDB, $\textsc{ExaBan}$ succeeds in $77.4\%$ of these instances. For 95\% of these success cases, $\textsc{ExaBan}$ finishes in under ten minutes. For TPC-H, $\textsc{ExaBan}$ succeeds in 41.7\% of these instances; whenever it succeeds, its computation time is just over one minute.   
To summarize, the algorithm $\textsc{ExaBan}$ is generally faster and more robust than $\textsc{\textsc{Sig22}}$. One reason is that, in contrast to $\textsc{ExaBan}$,  $\textsc{\textsc{Sig22}}$ requires to turn the lineage into a CNF representation, which may increase its size and complexity.

\begin{table}
    \centering
    
    \begin{tabular}{| c | c | c c c c c c c|}
        \hline
         \multirow{2}{*}{\textbf{Dataset}} & \multirow{2}{*}{\textbf{Algorithm}} &
         \multicolumn{7}{c |}{\textbf{Execution times [sec]}} \\
         & &\textbf{Mean} &  \textbf{p50} & \textbf{p75} & \textbf{p90} & \textbf{p95} & \textbf{p99} & \textbf{Max} \\
         \hline
         {\multirow{2}{*}{{\texttt{Academic}}}} 

        & $\textsc{ExaBan}$ & 0.004 & 0.001 & 0.002 &
        0.003 & 0.004 & 0.080 & 
        0.356\\
        
        & \textsc{Sig22} & 0.290 & 0.124 & 0.134 &
        0.303 & 0.537 & 2.433 & 
        81.54\\
        
        \hline
        
        {\multirow{2}{*}{{\texttt{IMDB}}}} & $\textsc{ExaBan}$ & 0.323 & 0.002 & 0.008 &
        0.066 & 0.231 & 2.174 & 
        1793\\  
        
        & \textsc{Sig22} & 2.840 & 0.146 & 0.365 &
        1.710 & 5.909 & 54.63 & 
        2271\\

        \hline
        
        {\multirow{2}{*}{{\texttt{TPC-H}}}} & $\textsc{ExaBan}$ & 0.713 & 0.892 & 0.905 &
        0.935 & 0.935 & 0.941 & 
        0.941\\  
        
         & \textsc{Sig22} & 1.217 & 0.080 & 0.140 & 0.200 & 0.260 &  1.450 & 157.3\\
        \hline
    \end{tabular}%
    \caption{Runtime performance for exact Banzhaf computation in instances for which \textsc{Sig22} succeeds. \nop{p$X$ stands for the execution time for the $X$-th percentile instances from the set of successful instances.}}
    \label{tab:success_exact}
\end{table}

\begin{table}
    \centering
    \begin{tabular}{| c | c| c c c c c c c|}
        \hline
         \multirow{2}{*}{\textbf{Dataset}} & \multirow{2}{*} {\makecell{\textbf{Success} \\ \textbf{rate}}} &
         \multicolumn{7}{c |}{\textbf{Execution times [sec]}} \\
         & &\textbf{Mean} &  \textbf{p50} & \textbf{p75} & \textbf{p90} & \textbf{p95} & \textbf{p99} & \textbf{Max} \\
         \hline

        \revision{\texttt{Academic}} & \revision{99.2\%} & \revision{128.9} & \revision{168.4} &   
        \revision{172.0} & \revision{174.4} &
        \revision{175.0} & \revision{189.0} & \revision{563.5}\\

        \revision{\texttt{IMDB}} & \revision{77.4\%} & \revision{111.9} & \revision{24.10} & \revision{95.95} &
        \revision{348.8} & \revision{597.1} & \revision{1055} & 
        \revision{1381}\\

        \revision{\texttt{TPC-H}} & \revision{41.7\%} & \revision{53.77} & \revision{56.44} & \revision{60.24} &
        \revision{63.27} & \revision{66.23} & \revision{68.59} & 
        \revision{69.18}\\  
        
        \hline
    \end{tabular}%
    \caption{\textsc{ExaBan}'s runtime performance for instances on which \textsc{Sig22} fails. \nop{p$X$ is the execution time for the $X$-th percentile instances from the set of instances on which \textsc{Sig22} failed.}}
    \label{tab:failed_exact}   
\end{table}

 \begin{figure}[]
     \centering
     \begin{subfigure}{\linewidth}
     \includegraphics[width=0.5\linewidth]{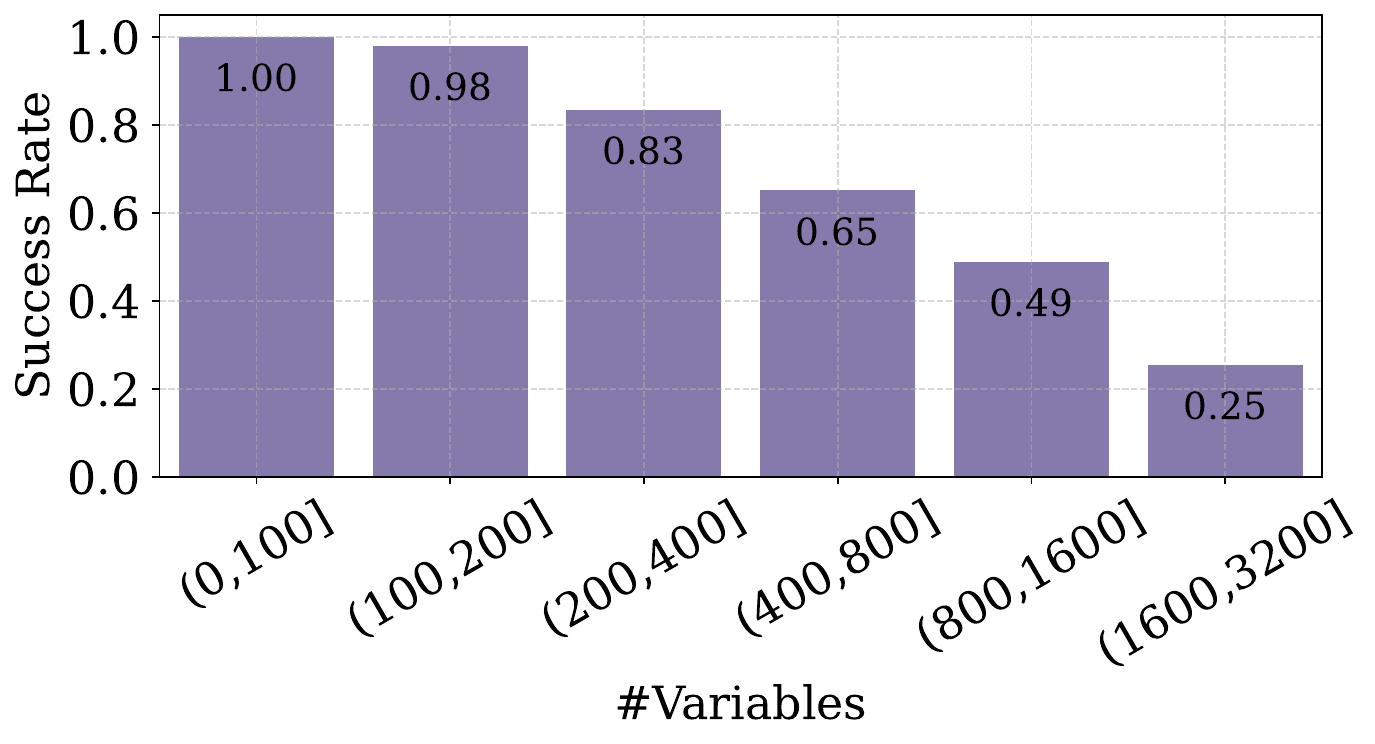}
     \includegraphics[width=0.5\linewidth]{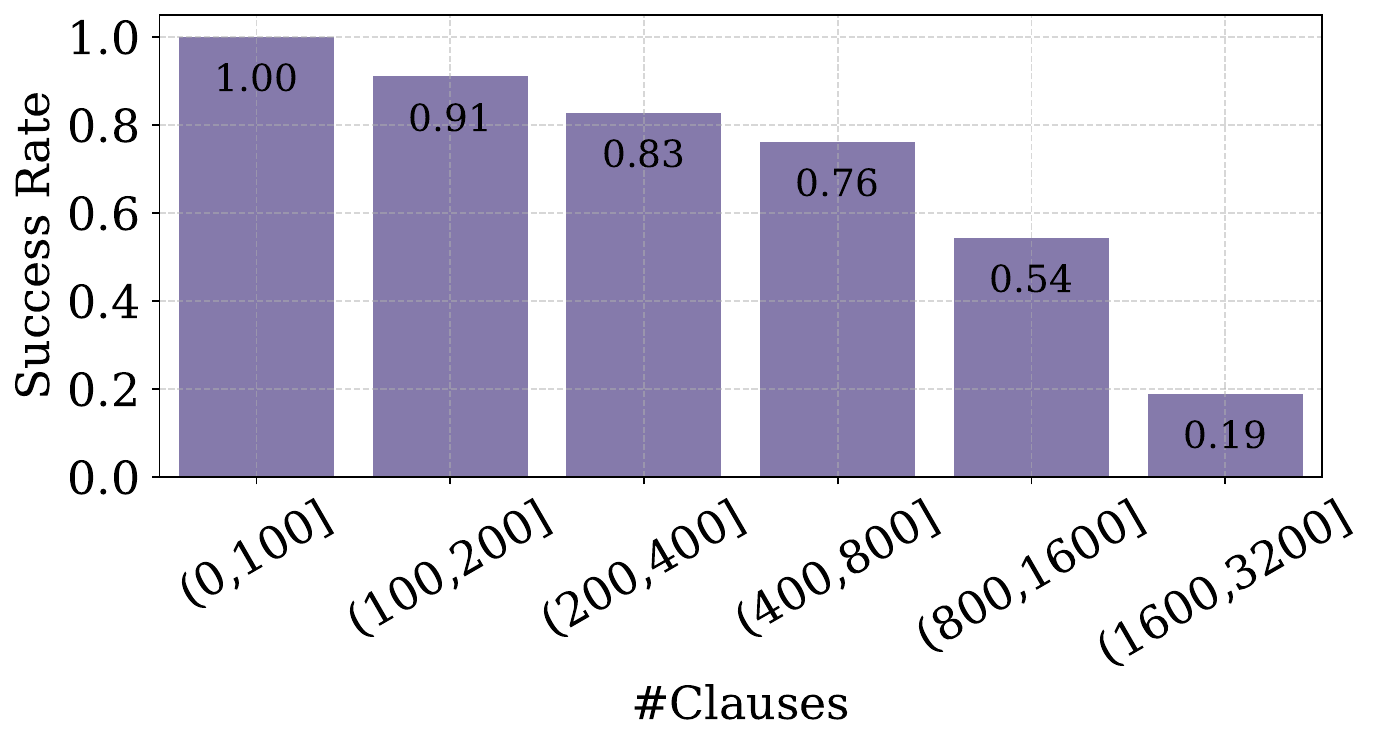}
    
     \caption{\normalsize Success rate (average over all instances in each group)}
     \label{fig:success_rate_exact_tuples}
     \end{subfigure}
     \hfill
     \vspace{1cm}
    
     \begin{subfigure}{\linewidth}
         \includegraphics[width=0.5\linewidth]{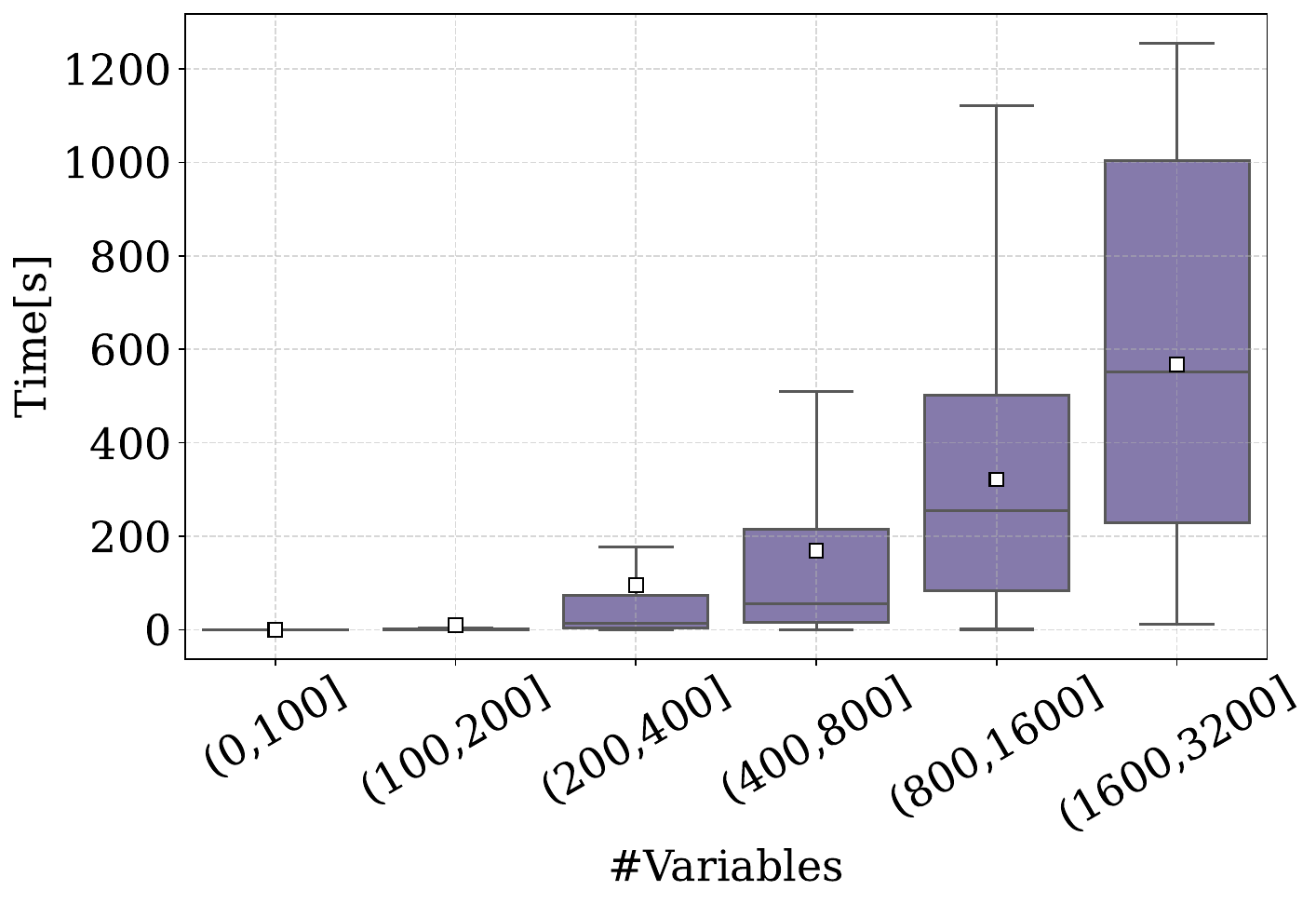}
         \includegraphics[width=0.5\linewidth]{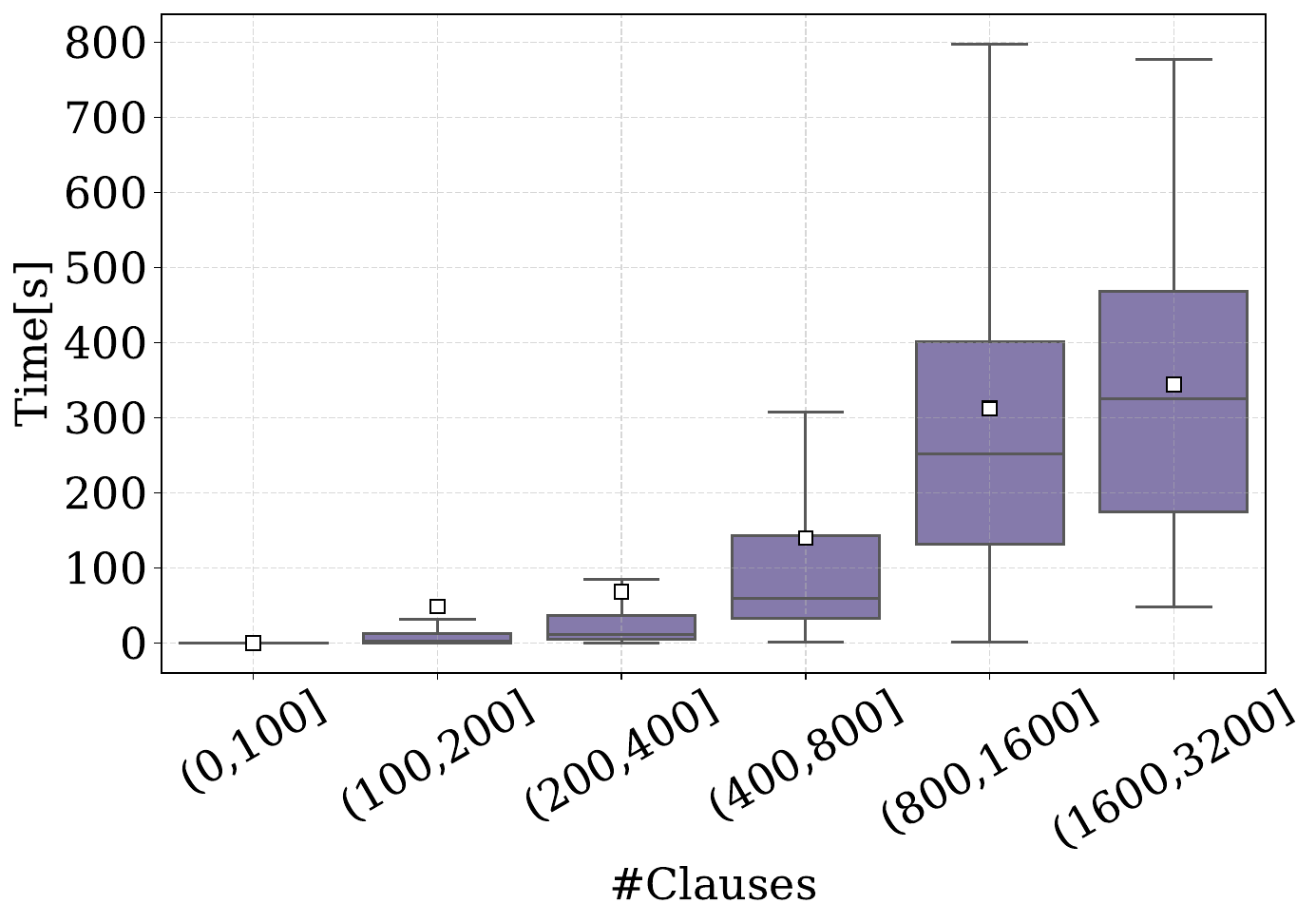}

     \caption{\revision{\normalsize  Execution time (ranges over all instances in each group)}}
     \label{fig:run_time_exact_tuples}
     \end{subfigure}
    
     \caption{\revision{Success rate and execution time of $\textsc{ExaBan}$ across all database and queries, grouped by the number of variables (clauses) in the lineage. An interval $[i,j]$ on the x-axis represents the set of lineages with \#vars (\# clauses) between $i$ and $j$.}}
     \label{fig:success_rate_exact_clauses_ratio}
 \end{figure}

\paragraph*{The effect of lineage size and structure} 
Figure~\ref{fig:success_rate_exact_clauses_ratio} gives a breakdown analysis of the performance of $\textsc{ExaBan}$, grouped by the number of variables or clauses. $\textsc{ExaBan}$ achieves near-perfect success rates and terminates in under a few seconds for instances with less than 200 variables or less than 100 clauses. $\textsc{ExaBan}$ is successful in $25 \%$ ($18\%$) of the instances with 1600-3200 variables (clauses).

%%%%%%%%%%%%%%%%%%%%%%%%%%%%%%%%%%%%%%%%%%%%%%%%%%%%%%%%%%%%%%%%

\subsection{Approximate Banzhaf Computation}
\label{sec:exp-approximation}

We next examine the performance of $\textsc{AdaBan}0.1$ (i.e., $\textsc{AdaBan}$ with relative error 0.1) compared to $\textsc{ExaBan}$ and $\textsc{MC}$. 

\paragraph*{Success Rate} Table~\ref{tab:successrate} shows that $\textsc{AdaBan0.1}$'s success rate is higher than that of \textsc{ExaBan}. Indeed,  the former succeeds at least for all instances for which the latter also succeeds. For Academic, where the success rate of $\textsc{ExaBan}$ is already near perfect, there is no further improvement brought by $\textsc{AdaBan0.1}$. For IMDB and TPC-H, however, $\textsc{AdaBan0.1}$ succeeds for 88.32\% and respectively 75\% of queries, a significant increase relative to  $\textsc{ExaBan}$, which only succeeds for 82.23 \% and respectively 58.33 \% of queries.  In particular, we observe that \textsc{Adaban0.1} achieves a success rate of 74\% (68 \%) even for lineages with 1600-3200 variables (clauses), a significant improvement compared to the success rate of \textsc{ExaBan} for these cases. \textsc{MC50\#vars}'s success rate is comparable to that of \textsc{ExaBan} (but see the discussion below on execution time). 

\begin{table}
    \centering
    \begin{tabular}{| c | c | c c c c c c c|}
        \hline
         \multirow{2}{*}{\textbf{Dataset}} & \multirow{2}{*}{\textbf{Algorithm}} &
         \multicolumn{7}{c |}{\textbf{Execution times [sec]}} \\
         & &\textbf{Mean} &  \textbf{p50} & \textbf{p75} & \textbf{p90} & \textbf{p95} & \textbf{p99} & \textbf{Max} \\
         \hline
         {\multirow{3}{*}{{\texttt{Academic}}}} 
         & \textsc{AdaBan0.1} & 0.761 & 0.001 & 0.002 & 0.007 & 0.048 & 60.05 & 173.7\\
         
         & \textsc{ExaBan} & 2.065 & 0.001 & 0.002 & 0.012 & 0.197 & 164.5 & 563.5\\
        
         & \textsc{MC50\#vars} & >42.77 & 0.003 & 0.013 & 0.072 & 0.239 & >3600 &  >3600\\
        
         \hline

         {\multirow{2}{*}{{\texttt{IMDB}}}} &  $\textsc{AdaBan0.1}$ & 0.624 & 0.001 & 0.003 & 0.014 & 0.044 &  4.740 & 984.9 \\
         
          & $\textsc{ExaBan}$ & 1.579 &  0.002 & 0.003 & 0.009 &  0.077 &  10.374 & 1793\\ 
          
          & \textsc{MC50\#vars} &  >13.99 & 0.012 & 0.039 & 0.386 & 2.613 & 257.1 & >3600 \\
          
          \hline
        
        {\multirow{2}{*}{{\texttt{TPC-H}}}} & $\textsc{AdaBan0.1}$ &  0.198 & 0.003 &  0.005 & 0.013 & 2.590 &  3.421 & 3.460\\  
        
        & \textsc{ExaBan} & 4.227 & 0.895 & 0.931 & 0.938 & 51.05 & 61.98 & 
        69.18 \\
        
        & \textsc{MC50\#vars} & >260.7 & 0.003 & 0.009 & 0.066 & >3600 & >3600 & >3600\\
        
        \hline
    \end{tabular}%
    \caption{Approximate versus exact Banzhaf computation for instances on which \textsc{ExaBan} succeeds.}
    \label{tab:approx-time-when-exact-succeeds}
\end{table}

\begin{table}
    \centering

    \begin{tabular}{| c | c| c c c c c c c|}
        \hline
         \multirow{2}{*}{\textbf{Dataset}} & \multirow{2}{*} {\makecell{\textbf{Success} \\ \textbf{rate}}} &
         \multicolumn{7}{c |}{\textbf{Execution times [sec]}} \\
         & &\textbf{Mean} &  \textbf{p50} & \textbf{p75} & \textbf{p90} & \textbf{p95} & \textbf{p99} & \textbf{Max} \\
         \hline

        \revision{\texttt{IMDB}} & \revision{49.53\%} & \revision{644.1} & \revision{575.3} & \revision{847.0} &
        \revision{1105} & \revision{1247} & \revision{1584} & 
        \revision{1802}\\

        \revision{\texttt{TPC-H}} & \revision{15.39\%} & \revision{166.3} & \revision{166.3} & \revision{166.4} &
        \revision{166.4} & \revision{166.4} & \revision{166.4} & 
        \revision{166.4}\\
        
        \hline
    \end{tabular}%
    
    \caption{\textsc{AdaBan0.1} runtime performance and success rate for instances on which \textsc{ExaBan} fails. \nop{Academic dataset is excluded since \textsc{ExaBan} only fails on one lineage for this dataset.} }
    \label{tab:approximate computationtimewhenexactfails}
\end{table}

\paragraph*{Runtime Performance} Table~\ref{tab:approx-time-when-exact-succeeds} focuses on the instances on which $\textsc{ExaBan}$ (and also $\textsc{AdaBan0.1}$) succeeds. $\textsc{AdaBan0.1}$ consistently outperforms both $\textsc{ExaBan}$ and $\textsc{MC50\#vars}$. The gains in the average runtime over $\textsc{ExaBan}$ range from a factor of 3 for Academic to 20 for TPC-H. We further observe that $\textsc{MC50\#vars}$ is slower than $\textsc{ExaBan}$ for over 99\% of the examined instances, and even fails for some of the instances for which $\textsc{ExaBan}$ succeeds. Running \textsc{MC} with a larger number of samples to improve its accuracy (see below) is only going to take more time.

\begin{table}
    \centering
      
          \resizebox{\textwidth}{!}{%
    \begin{tabular}{|c|c| c c c c c c c|}
        \hline
         \textbf{Dataset} & \textbf{Algorithm} & \textbf{Mean} & \textbf{p50} & \textbf{p75} & \textbf{p90} & \textbf{p95} & \textbf{p99} & \textbf{Max} \\
         \hline
         
      {\multirow{2}{*}{{\texttt{Academic}}}} & \textsc{AdaBan0.1} & 5.24E-05 & 0 & 0 & 0 & 0 & 1.18E-03 &  2.09E-02\\  
      & \textsc{MC50\#vars} & 0.60 &  0.56 & 0.78 & 1.00 & 1.30 &  1.34 & 1.67\\  

    \hline
    {\multirow{2}{*}{{\texttt{IMDB}}}} & \textsc{AdaBan0.1} & 1.35E-04 & 0 & 0 & 0 & 7.77E-04 & 3.34E-03 & 1.92E-02 \\  
    & \revision{\textsc{MC50\#vars}} & \revision{0.56} & \revision{0.51} & \revision{0.67} &
        \revision{0.87} & \revision{1.00} & \revision{1.20} & 
        \revision{1.71}\\  
    \hline    
    {\multirow{2}{*}{{\texttt{TPC-H}}}} & \revision{\textsc{AdaBan0.1}} & \revision{9.04E-18} & \revision{0} & \revision{0} &
        \revision{0} & \revision{1.24E-24} & \revision{3.23E-23} & 
        \revision{1.37E-15}\\  
        & \revision{\textsc{MC50\#vars}} & \revision{0.50} & \revision{0.44} & \revision{0.67} &
        \revision{1.00} & \revision{1.34} & \revision{1.34} & 
        \revision{1.34}\\  
        \hline

    {\multirow{2}{*}{{\texttt{Hard}}}} 
         & \revision{\textsc{AdaBan0.1}} & \revision{3.96E-04} & \revision{2.40E-05} & \revision{3.61E-04} &
    \revision{1.19E-03} & \revision{2.06E-03} & \revision{4.21E-03} & 
    \revision{1.65E-02}\\
    & \revision{\textsc{MC50\#vars}} & \revision{0.312} & \revision{0.303} & \revision{0.383} &
        \revision{0.4.65} & \revision{0.516} & \revision{0.64} & 
        \revision{1.13}\\  
        \hline
    \end{tabular}%
    }
     \caption{Observed error ratio as $\ell_1$ distance between the vectors of algorithm's output and of the exact normalized Banzhaf values for instances on which \textsc{ExaBan} succeeded. \nop{pX stands for the execution time for Xth percentile variable over all lineages in this set and all variables in each lineage.} } 
\label{tab:l1}
\end{table}

\paragraph*{Runtime Performance and Success Rate of $\textsc{AdaBan0.1}$ when other Algorithms fail}
Table~\ref{tab:approximate computationtimewhenexactfails} shows 
that, when only considering the instances on which $\textsc{ExaBan}$ fails, $\textsc{AdaBan0.1}$ succeeds in nearly 50\% (15\%) of these instances for IMDB (TPC-H). Both $\textsc{ExaBan}$ and 
$\textsc{AdaBan0.1}$ fail for just one instance in Academic (not shown).

\paragraph*{Approximation Quality} $\textsc{AdaBan0.1}$ guarantees a deterministic relative error of $0.1$. $\textsc{MC50\#vars}$ only guarantees a probabilistic absolute error, where the number of required samples depends quadratically on the inverse of the error. Table~\ref{tab:l1} compares the observed approximation quality of $\textsc{AdaBan0.1}$ and $\textsc{MC50\#vars}$. These are measured as the $\ell_1$ distance between the vectors of estimated Banzhaf values computed by each algorithm, compared to the ground truth exact Banzhaf values as computed by $\textsc{ExaBan}$. The results are shown for all instances for which $\textsc{ExaBan}$ succeeds, and separately for the "Hard" instances for which $\textsc{ExaBan}$ took at least five seconds. For all these instances, $\textsc{AdaBan0.1}$'s approximation is consistently closer to the ground truth than $\textsc{MC50\#vars}$'s approximation by several orders of magnitude.

\paragraph*{Approximation Error as a Function of Time} 
Figure \ref{fig:errorevolution} presents, for several instances, the evolution of the observed error for $\textsc{AdaBan}$ and $\textsc{MC}$ over time. These instances appear in \cite{Article_git} and were selected, for illustration, from the set of ``hard" lineages for which \textsc{ExaBan} needs longer than 200 seconds to compute the Banzhaf values of all variables (and then individual variables appearing in these lineages were selected at random).
The error of $\textsc{AdaBan}$ shown in Figure \ref{fig:errorevolution} decreases exponentially and consistently over time, reaching a very small error within a few seconds. This is consistent with our observation that a small error ($\epsilon=0.1$) is typically reached very quickly.
In contrast, the behavior of \textsc{MC} is erratic and for some instances it may not even converge within two hundred seconds.

\begin{figure*}[t]{}
\centering
    \begin{subfigure}{0.48\linewidth}
    \includegraphics[width=\linewidth]{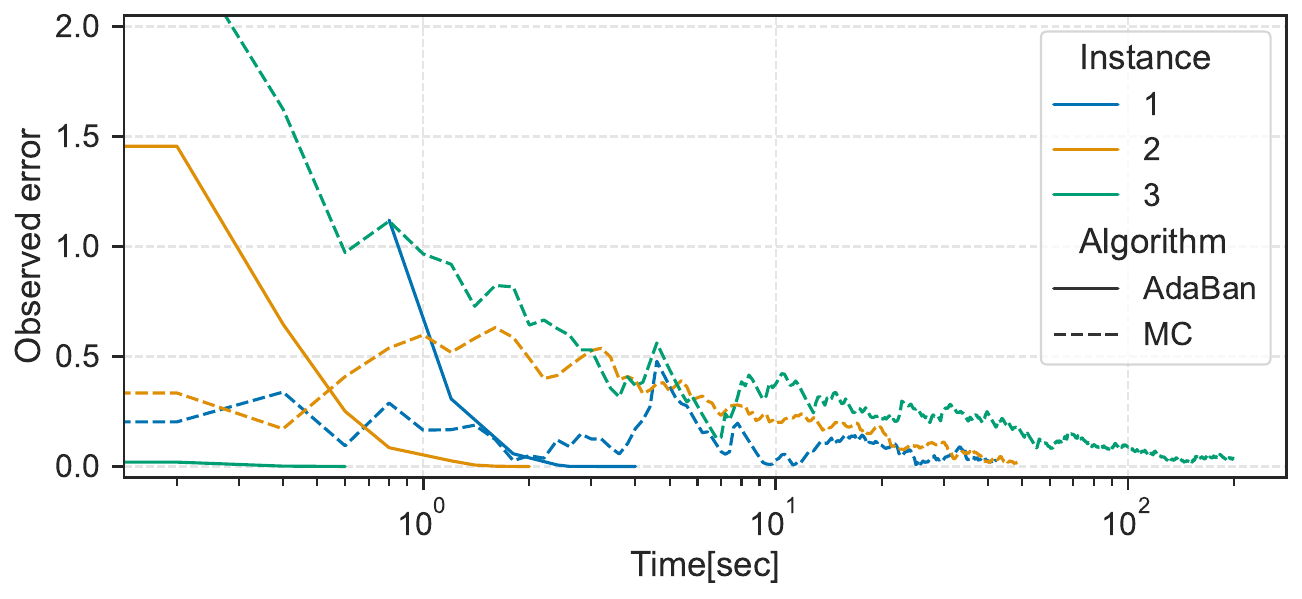}
    \caption{\normalsize  Three instances for which \textsc{MC} converged to the Banzhaf value.}
    \label{fig:evolution_simple}
    \end{subfigure}
    \hspace{0.4cm}
    \begin{subfigure}{0.48\linewidth}
    \includegraphics[width=\linewidth]{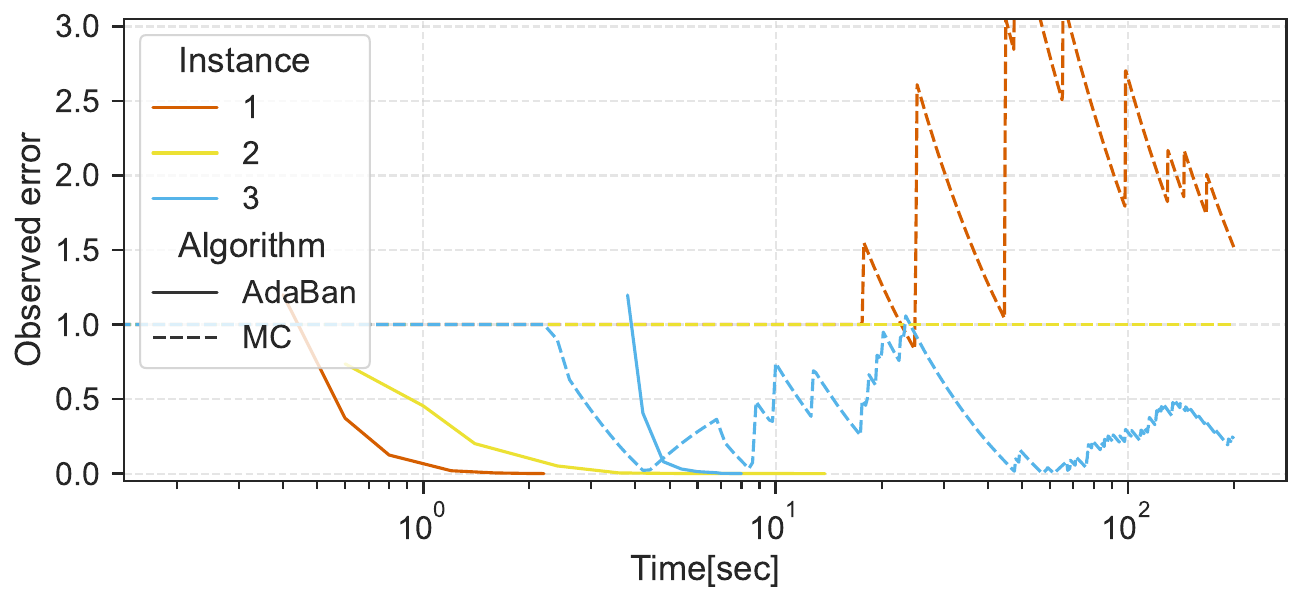}
    \caption{\normalsize Three instances for which \textsc{MC} did not converge to the Banzhaf value.}
    \label{fig:evolution_hard}
    \end{subfigure}
   \caption{Convergence rate of approximate Banzhaf value $\hat{v}$ to the  exact Banzhaf value $v$ as a function of time, for representative instances. The observed error on the y-axis is calculated as $\frac{|v - \hat{v}|}{v}$. \textsc{AdaBan} is stopped as soon as it reaches the exact Banzhaf value.}
   \label{fig:errorevolution}
\end{figure*}

\begin{table}
    \centering  
    \resizebox{\textwidth}{!}{%
    \begin{tabular}{|c|c| c c c c c c c|}
        \hline
         \textbf{Dataset} & \textbf{Algorithm} & \textbf{Mean} & \textbf{p50} & \textbf{p75} & \textbf{p90} & \textbf{p95} & \textbf{p99} & \textbf{Min} \\
         \hline

      {\multirow{3}{*}{{\texttt{Academic}}}} & \textsc{IchiBan0.1} & 1 / 1 & 1 / 1 & 1 / 1 & 1 / 1 &  1 / 1 & 1 / 1 & 0.9 / 1 \\ 
      
      & \textsc{MC50\#vars} & 0.87 / 0.90 & 0.9 / 1 & 0.8 / 0.8 & 0.7 / 0.6 & 0.5 / 0.6 & 0.3 / 0.4 & 0.2 / 0.2\\
      
      & \textsc{CNF Proxy} & 0.87 / 0.95 & 0.9 / 1 & 0.8 / 1 & 0.7 / 0.8 & 0.6 / 0.8 & 0.5 / 0.6 & 0.3 / 0.4 \\ 
      \hline
      
      {\multirow{3}{*}{{\texttt{IMDB}}}} & \textsc{IchiBan0.1} & 1 / 1 & 1 / 1 & 1 / 1 & 1 / 1 & 1 / 1 & 1 / 1 & 0.6 / 0.4\\

      & \textsc{MC50\#vars} &  0.90 / 0.87 & 0.9 / 1 & 0.8 / 0.8 & 0.7 / 0.6 & 0.6 / 0.6 & 0.5 / 0.4 & 0 / 0\\
      
      & \textsc{CNF Proxy} & 0.93 / 0.98 & 1 / 1 & 0.9 / 1 & 0.8 / 1 & 0.7 / 0.8 & 0.6 / 0.6 & 0.2 / 0.2\\ 
      \hline
      
      {\multirow{3}{*}{{\texttt{TPC-H}}}} 
      & \textsc{IchiBan0.1} & 1 / 1 & 1 / 1 & 1 / 1 & 1 / 1 & 1 / 1 & 1 / 1 &  1 / 1 \\
      
      & \textsc{MC50\#vars} & 0.34 / 0.84 & 0.1 / 1 & 0.1 / 1 & 0.1 / 0.2 & 0.1 / 0.2 & 0.1 / 0.11 & 0.1 / 0\\
      
      & \textsc{CNF Proxy} & 0.88 / 0.97 & 0.9 / 1 & 0.8 / 1 & 0.7 / 0.8 & 0.7 / 0.8 & 0.7 / 0.6 & 0.7 / 0.6 \\
      \hline
    \end{tabular}%
    }
    \caption{Observed precision@10 / precision@5 for instances for which \textsc{ExaBan} succeeds. \nop{pX stands for the observed value for Xth percentile variable over all lineages in this set.}}
    \label{tab:precisiontopk}  
\end{table}

\subsection{Top-$k$ Computation}
\label{sec:exp-topk}
We evaluate the accuracy of $\textsc{IchiBan0.1}$, which allows a relative error of up to 0.1, $\textsc{MC50\#vars}$, and \textsc{CNF Proxy}
using the standard measure of precision@k, which is the fraction of reported top-$k$ tuples that are in the ground truth top-$k$ set. 
Table~\ref{tab:precisiontopk} gives the distribution of precision@k values observed for different instances and  $k\in\{5,10\}$. 
With the exception of some outliers for IMDB, $\textsc{IchiBan0.1}$ achieves near perfect precision@k, while $\textsc{MC50\#vars}$ is much less stable and consistently inferior. \textsc{CNF Proxy} is more accurate than $\textsc{MC50\#vars}$, but is also consistently outperformed by $\textsc{IchiBan0.1}$. The results for $k=1,3$ are omitted for lack of space: for $k=1$, all algorithms achieve high success rates; for $k=3$ the observed trends are similar to those in the table.
The execution time of $\textsc{IchiBan0.1}$ is essentially the same as reported for $\textsc{AdaBan0.1}$, i.e. typically an order of magnitude better than \textsc{ExaBan}.

We further run the variant of $\textsc{IchiBan}$ that decides the top-$k$ results with certainty (deferred to Appendix~\ref{app:topk}):
for top-$1$, it is extremely fast; in many of the considered instances, there is a clear top-$1$ fact, whose
Banzhaf value is much greater than of the others. For top-$3$ and top-$5$, it achieved better performance over IMDB than both \textsc{ExaBan} and \textsc{AdaBan0.1}. This was however not the case for TPC-H, where separating the top-$3$ or top-$5$ facts from the rest took longer than \textsc{ExaBan}. We attribute this to a large number of ties in the Banzhaf values of facts for the TPC-H workload, whose lineages are more symmetric in the variables. \textsc{IchiBan0.1} is a good alternative for such instances.

\subsection{Summary of Experimental Findings}

Our experimental findings lead to the following main conclusions:

(1) {\em \textsc{ExaBan} consistently outperforms \textsc{Sig22} for exact computation.} 
Sec.~\ref{sec:exp-exact} shows that \textsc{ExaBan} not only outperforms \textsc{Sig22} on the  workloads previously used for \textsc{Sig22}, but it also
succeeds in many
cases where \textsc{Sig22} times out (41.7\%-99.2\% of these cases for the different datasets).  

(2) {\em \textsc{AdaBan} outperforms \textsc{ExaBan} already for small relative errors.} Sec.~\ref{sec:exp-approximation} shows that \textsc{AdaBan} is up to an order of magnitude, and on average three times faster than \textsc{ExaBan}  for relative error 0.1.

(3) {\em The accuracy of \textsc{MC} can be orders of magnitude worse than that of \textsc{AdaBan}.} Sec.~\ref{sec:exp-approximation} shows that if we only run \textsc{MC} for a sufficiently small number of steps so that its runtime remains competitive to \textsc{AdaBan}, its accuracy can be up to four orders of magnitude worse than that of \textsc{AdaBan}. On the other hand, if we were to run \textsc{MC} sufficiently many steps to achieve a comparable accuracy, then its runtime becomes infeasible.

(4) {\em \textsc{IchiBan} can quickly identify the top-$k$ facts.} Sec.~\ref{sec:exp-topk} shows that \textsc{IchiBan} quickly and accurately separates the approximation intervals of the first $k$ Banzhaf values (demonstrated for $k$ up to $10$) from the remaining values. It is significantly more accurate than previous approaches based on \textsc{MC} or \textsc{CNF Proxy}.

\section{Related Work}
\label{sec:related}

We compare our work to multiple lines of related work. 

\paragraph{Shapley value} Recent work~\cite{DBLP:conf/icdt/LivshitsBKS20,ReshefKL20,LivshitsBKS:LMCS:2021,LivshitsBKS:SIGMODREC:2021,DeutchFKM:SIGMOD:2022,DavidsonDFKKM:SIGMOD:2022,KhalilK23, bertossishapley} investigated the use of the Shapley value~\cite{shapley1953value} to define attribution scores in query answering, with particular focus on algorithms and the complexity of computing exact and approximate Shapley values for database facts.
The Banzhaf value~\cite{Penrose:Banzhaf:1946,Banzhaf:1965} is very closely related to the Shapley value, and both have been extensively investigated in Game Theory \cite{lehrer1988axiomatization,feltkamp1995alternative,van1998axiomatizations,strafiin1988shapley}. They have the same formula up to combinatorial coefficients that are present in the Shapley value formula and missing in the Banzhaf value formula; different coefficients need to be computed for each size of variable set, and are multiplied by the number of sets of this size.  Computationally, we have empirically shown advantages of the approach presented here over prior work. Furthermore, our algorithmic and theoretical contributions do not have a parallel in the literature on Shapley or Banzhaf values for query answering\nop{ (to our knowledge)}. Specifically, ours are the first deterministic approximation and ranking algorithms with provable guarantees, whereas approximation in previous works is based on Monte Carlo and with absolute error guarantees~\cite{DeutchFKM:SIGMOD:2022,LivshitsBKS:LMCS:2021}, while ranking is only heuristic and can be arbitrarily off the true ranking~\cite{DeutchFKM:SIGMOD:2022} (see also the discussion below on approximation algorithms). 

Banzhaf-based ranking and Shapley-based ranking can differ already for the simple query $Q(X) = R(X) \wedge S(X,Y) \wedge T(X,Z)$ (details in Appendix~\ref{app:shapley}).
 Our dichotomy result establishes that Banzhaf-based ranking is tractable precisely for the same class of hierarchical queries for which the exact computation of the Banzhaf (and even Shapley) value~\cite{LivshitsBKS:LMCS:2021} is also tractable. The hierarchical property led to further dichotomies, e.g., for probabilistic query evaluation~\cite{dalvi2007efficient}, incremental view maintenance~\cite{BerkholzKS:PODS:2017}, one-step streaming evaluation in the finite cursor model~\cite{GroheGLSTB:TCS:2009}, and readability of query lineage~\cite{OlteanuZ:ICDT:2012}.

\paragraph{Hardness of exact Banzhaf computation}
Prior work shows that for non-hierarchical self-join free CQs,
computing exact Banzhaf values of database facts is $\fpsharpp$-hard~\cite{LivshitsBKS:LMCS:2021}.
The proof is by a reduction from the $\fpsharpp$-hard problem of evaluating non-hierarchical 
queries over probabilistic databases~\cite{DalviS12}.
Our argument for the hardness of Banzhaf-based ranking 
is different. It relies on the conjecture that there is no polynomial-time 
approximation for counting the independent sets in a bipartite graph \cite{DyerGGJ03,CurticapeanDFGL19}.

\paragraph*{Further attribution measures in query answering}
Causality-based methods focus on uncovering the causal responsibility of database facts for a query outcome~\cite{MeliouGMS:Causality:2011, DBLP:journals/pvldb/MeliouRS14, DBLP:conf/tapp/SalimiBSB16}. 
The causal responsibility of a fact $f$ 
is a score proportional to the largest fact set such that including $f$ in the set turns the query answer from false to true.
Furthermore, recent work has empirically evaluated various attribution methods for the problem of credit distribution \cite{DossoDS:CreditDistribution:2022}. Their study compares game theory-based methods with approaches based on causal responsibility and simpler methods like fact frequency counting in the provenance. They highlight both the similarities and differences among these attribution approaches.

\paragraph*{Attribution in machine learning} 
The SHAP (SHapley Additive exPlanations) score attributes feature importance in machine learning models~\cite{lundberg2017unified}. It builds upon the Shapley value, but differs in that it models missing ``players'' (feature values in the context of machine learning) according to their expectation. A recent line of work studies the computational complexity of the SHAP score~\cite{ arenas2021tractability, DBLP:journals/jmlr/ArenasBBM23}: Under commonly accepted complexity assumptions, there is no polynomial-time algorithm for ranking based on SHAP scores, even for monotone DNF functions. This hardness result uses a different technique from our work. It is open whether Banzhaf-based ranking is computationally cheaper than SHAP-based ranking.

\paragraph*{Approximation algorithms} Our work relies on the anytime deterministic approximation framework originally introduced for (ranked) query evaluation in probabilistic databases~\cite{OlteanuHK:AdaProb:2010,OlteanuW:AdaProbRank:2012,FinkHO:AdaProb:2013}. In particular, it  uses an incremental shared compilation of query lineage into partial d-trees for approximate computation, ranking, and top-$k$. Besides the general approximation framework, our work  differs significantly from this prior work as it is tailored at Banzhaf value computation and Banzhaf-based ranking as opposed to probability computation. In particular, \adaban\ uses lower and upper bounds for the Banzhaf values 
in functions represented (1) in independent DNF and (2) by disjunctions and conjunctions of mutually exclusive or independent functions. These bounds need also be computed for each variable in the function rather than for the entire function.  

Prior work~\cite{LivshitsBKS:LMCS:2021} gives a polynomial time randomized absolute approximation scheme for Shapley (and Banzhaf) values based on Monte Carlo sampling. Sec.~\ref{sec:experiments} shows experimentally that  \adaban\  significantly outperforms this randomized approach. As also shown for ranking in probabilistic databases~\cite{OlteanuW:AdaProbRank:2012}, randomized approximations based on Monte Carlo sampling have three important limitations, which are not shared by our deterministic approximation \adaban: (1) the achieved ranking is only a probabilistic approximation of the correct one; (2) running one more Monte Carlo step does not necessarily lead to a refinement of the approximation interval, and hence the approximation is not truly incremental; (3) The sampling approach sees the input functions as black boxes and does not exploit their structure. Sec.~\ref{sec:experiments} also reports on experiments with the CNF Proxy heuristic~\cite{DeutchFKM:SIGMOD:2022}, which efficiently rank facts based on a proxy value; though it has no theoretical guarantees, the obtained ranking is often similar to the Shapley-based ranking, even though the proxy values are typically {\em not} similar to the Shapley values. Sec.~\ref{sec:experiments} shows that our algorithm also outperforms CNF Proxy in terms of accuracy.

\section{Conclusion}
\label{sec:conc}

In this paper we introduced effective algorithms for the exact and anytime deterministic approximate computation of the Banzhaf values that quantify the contribution of database facts to the answers of select-project-join-union queries. We also showed the use of these algorithms for Banzhaf-based ranking and gave a dichotomy in the complexity of ranking. We showed experimentally that our algorithms outperform prior work in both runtime and accuracy for a wide range of problem instances.

There are several exciting directions for future work. First, we would like to extend our algorithmic framework to more expressive queries that also have aggregates and negation. There is also a host of possible optimizations that can improve the scalability and efficiency of our algorithms. Finally, we would like to generalize our algorithms to further fact attribution measures, such as the Shapley value, the SHAP score, and the causality-based measures highlighted in Sec.~\ref{sec:related}.

\paragraph{Acknowledgements.}
Ahmet Kara and Dan Olteanu would like to acknowledge 
the UZH Global Strategy and Partnerships Funding Scheme. 
The research of Daniel Deutch and Omer Abramovich has been partially funded by the European Research Council (ERC) under the European Union’s Horizon
2020 research and innovation programme (Grant agreement No. 804302).

\bibliographystyle{plain}
\bibliography{bib}

\appendix
\section{Missing Details in Section \ref{sec:prelim}}
\label{app:prelim}
\subsection{Proof of Proposition~\ref{prop:alternative_banzhaf}} 
\textsc{Proposition} \ref{prop:alternative_banzhaf}.
{\it The following holds for any Boolean function $\varphi$ over $\bm X$ and variable $x \in \bm X$:
\begin{align*}
\banz(\varphi, x) = \#\varphi[x:=1] - \#\varphi[x:=0]
\end{align*}
}

\medskip
The proposition follows from the following simple equalities: 
\begin{align*}
  \banz(\varphi,x) \overset{(a)}{=}& \  \sum_{\bm Y \subseteq \bm X\setminus \{x\}}
\varphi[\bm Y \cup \{x\}] - \varphi[\bm Y]   \\
  =&\  \sum_{\bm Y \subseteq \bm X\setminus \{x\}}
\varphi[\bm Y \cup \{x\}] - \sum_{\bm Y \subseteq \bm X\setminus \{x\}} \varphi[\bm Y]   \\
  \overset{(b)}{=} &\ 
\#\varphi[x:=1] - \#\varphi[x:=0]
\end{align*}
Eq. $(a)$ holds by definition.
To obtain Eq. (b), we observe that 
for any subset 
$\bm Y \subseteq \bm X\setminus \{x\}$, it holds:
$\bm Y \cup \{x\}$ is a model of 
$\varphi$ 
if and only if $\bm Y$ 
is a model of $\varphi[x:= 1]$;  
$\bm Y$ is a model of 
$\varphi$ 
if and only if $\bm Y$ 
is a model of $\varphi[x:= 0]$.

\section{Missing Details in Section~\ref{sec:algorithm}} 
\label{app:algorithm}
\subsection{Explanations of Eq. \eqref{eq:ind_and_count} to \eqref{eq:mut_excl_banz}}
We explain Eq. \eqref{eq:ind_and_count} to \eqref{eq:mut_excl_banz}.
We consider a function $\varphi$ of the form $\varphi_1 \text{ op } \varphi_2$
and assume, without  loss of generality, that the variable $x$ is contained in 
$\varphi_1$.

We start with the case that 
$\varphi = \varphi_1 \wedge \varphi_2$ and $\varphi_1$ and 
    $\varphi_2$ are independent. In this case, we need to show the equalities:
\begin{align*}
\#\varphi =&\ \#\varphi_1 \cdot \#\varphi_2 &\hspace{2em} \eqref{eq:ind_and_count}  \\
\banz(\varphi,x) =&\ \banz(\varphi_1, x)\cdot \#\varphi_2 & \eqref{eq:ind_and_banz}
\end{align*}
Eq.~\eqref{eq:ind_and_count} holds because any pair $\theta_1$ and $\theta_2$ of models for 
$\varphi_1$ and respectively $\varphi_2$ can be combined into a model 
for $\varphi$.

Eq. \eqref{eq:ind_and_banz} can be derived as follows:
\begin{align*}
\banz(\varphi, x) \overset{(a)}{=} &\  \#\varphi[x = 1] - \#\varphi[x = 0]  \\
\overset{(b)}{=} &\ \#\varphi_1[x = 1] \cdot \#\varphi_2 - \#\varphi_1[x = 0] \cdot \#\varphi_2 \\
= &\ \big(\#\varphi_1[x = 1] -\#\varphi_1[x = 0]\big)\cdot \#\varphi_2 \\
\overset{(c)}{=} &\ \banz(\varphi_1, x) \cdot \#\varphi_2
\end{align*}
Eq. $(a)$ and $(c)$ hold by the characterization 
of the Banzhaf value given in 
Eq.~\eqref{eq:alternative_banzhaf}. 
Eq. $(b)$ follows from Eq.~\eqref{eq:ind_and_count}
and the relationship 
$\#\varphi_2[x:=0] = \#\varphi_2[x:=1] = \#\varphi_2$, which relies on the fact that 
$\varphi_2$ does not contain $x$.

Now, we consider the case that $\varphi = \varphi_1 \vee \varphi_2$ and $\varphi_1$ and $\varphi_2$ are independent. We show how to derive the following equalities:
\begin{align*}
\#\varphi = & \  
 \#\varphi_1 \cdot 2^{n_2} + 2^{n_1} \cdot \#\varphi_2 - \#\varphi_1 \cdot \#\varphi_2 & \eqref{eq:ind_or_count} \\
  \banz(\varphi, x) = &\  \banz(\varphi_1,x) \cdot (2^{n_2}- \#\varphi_2) &  \eqref{eq:ind_or_banz}\\
\end{align*}
where $n_i$ is the number of variables in $\varphi_i$, for $i \in [2]$.

We derive Eq.~\eqref{eq:ind_or_count}:
\begin{align*}
\#\varphi \overset{(a)}{=} &\  
\#\varphi_1 \cdot \#\varphi_2 + 
\#\varphi_1 \cdot (2^{n_2}- \#\varphi_2)
+ (2^{n_1} - \#\varphi_1) \cdot \#\varphi_2 \\
= &\ 
\#\varphi_1 \cdot \#\varphi_2 + 
\#\varphi_1 \cdot 2^{n_2} - \#\varphi_1 \cdot \#\varphi_2 + 
 2^{n_1}\cdot \#\varphi_2 - \#\varphi_1 \cdot \#\varphi_2 \\
= &\  
\#\varphi_1 \cdot 2^{n_2}  + 
 2^{n_1}\cdot \#\varphi_2 - \#\varphi_1 \cdot \#\varphi_2
\end{align*}
Eq. (a) holds because each model 
of $\varphi$ is either a model of both $\varphi_1$ and $\varphi_2$
or a model of exactly one of them.
The other equalities use the distributivity of multiplication over 
summation.

Eq. \eqref{eq:ind_or_banz} is implied by the following 
equations:
\begin{align*}
\banz(\varphi, x) \overset{(a)}{=} &\  
\#\varphi[x = 1] - \#\varphi[x = 0] \\
\overset{(b)}{=} &\ \Big[\#\varphi_1[x = 1] \cdot \#\varphi_2 +  \#\varphi_1[x = 1] \cdot (2^{n_2}- \#\varphi_2) \ + 
\\
&\ (2^{n_1-1} - \#\varphi_1[x = 1]) \cdot \#\varphi_2
\Big] \ -
\\
&\ \Big[\#\varphi_1[x = 0] \cdot \#\varphi_2 +  \#\varphi_1[x = 0] \cdot (2^{n_2}- \#\varphi_2) 
\ +
\\
&\ (2^{n_1-1} - \#\varphi_1[x = 0]) \cdot \#\varphi_2\Big] 
\\
= &\ \big(\#\varphi_1[x = 1]- \#\varphi_1[x = 0]\big) \cdot   \#\varphi_2 \ + \\
&\ \big(\#\varphi_1[x = 1]- \#\varphi_1[x = 0]\big) \cdot (2^{n_2}- \#\varphi_2) \ + \\
&\  \big(\#\varphi_1[x = 0] - \#\varphi_1[x = 1] \big) \cdot \#\varphi_2\\
= &\ 
 \big(\#\varphi_1[x = 1]- \#\varphi_1[x = 0]\big) \cdot (2^{n_2}- \#\varphi_2) \\
\overset{(c)}{=} &\ 
\banz(\varphi_1,x) \cdot (2^{n_2}- \#\varphi_2) 
\end{align*}
Eq. $(a)$ and $(c)$ follow from Eq.~\eqref{eq:alternative_banzhaf}.
Eq. $(b)$ follows from Eq.~\eqref{eq:ind_or_count}
and the equalities 
$\#\varphi_2[x:=0] = \#\varphi_2[x:=1] = \#\varphi_2$, which hold because 
$\varphi_2$ does not contain $x$.

Finally, we consider the case that $\varphi = \varphi_1 \vee \varphi_2$ and $\varphi_1$ and $\varphi_2$ are over the same variables but mutually exclusive. We explain the following equalities:
\begin{align*}
\#\varphi = & \  
 \#\varphi_1 +  \#\varphi_2 & \eqref{eq:mut_excl_count} \\
  \banz(\varphi, x) = &\  \banz(\varphi_1, x) +  \banz(\varphi_2, x) &  \eqref{eq:mut_excl_banz}\\
\end{align*}
Eq.~\eqref{eq:mut_excl_count} holds because every model of $\varphi$
is either a model of $\varphi_1$ or a model of $\varphi_2$.

Eq. \eqref{eq:mut_excl_banz} holds because:
\begin{align*}
\banz(\varphi, x) \overset{(a)}{=} &\  \#\varphi[x = 1] - \#\varphi[x = 0]  \\
\overset{(b)}{=} &\ \Big[\#\varphi_1[x = 1] + \#\varphi_2[x = 1]\Big]\ - \Big[\#\varphi_1[x = 0] +
\#\varphi_2[x = 0]\Big] \\
= &\ \Big[\#\varphi_1[x = 1] - \#\varphi_1[x = 0]\Big] \ + 
 \Big[\#\varphi_2[x = 1] - \#\varphi_2[x = 0]\Big] \\
\overset{(c)}{=} &\ \banz(\varphi_1, x) + 
\banz(\varphi_2, x)
\end{align*}
Eq. $(a)$ and $(c)$ follow from  Eq.~\eqref{eq:alternative_banzhaf}.
Eq. $(b)$ is implied by Eq.~\eqref{eq:mut_excl_count}.

\subsection{Proof of Proposition~\ref{prop:exaban_correct}}
\textsc{Proposition}
\ref{prop:exaban_correct}.
\textit{
For any positive DNF function $\varphi$, complete d-tree $T_{\varphi}$ for $\varphi$, and variable $x$ in $\varphi$, 
it holds 
$$\textsc{ExaBan}(T_{\varphi}, x) = (\banz(\varphi,x), \#\varphi).$$
}

\medskip

Proposition~\ref{prop:exaban_correct} is implied by the following lemma, which states that \textsc{ExaBan} computes the correct Banzhaf value and model count 
for each subtree of its input d-tree:
\begin{lemma}
\label{lem:exaban_correct}
For any positive DNF function $\varphi$, complete d-tree $T_{\varphi}$ for $\varphi$, subtree $T_{\xi}$ of $T_{\varphi}$ for some function $\xi$,
and variable $x$ in $\varphi$, 
it holds 
$$\textsc{ExaBan}(T_{\xi}, x) = (\banz(\xi,x), \#\xi).$$
\end{lemma}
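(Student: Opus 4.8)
The plan is to establish Lemma~\ref{lem:exaban_correct} by structural induction on the d-tree $T_\xi$, since d-trees are defined recursively and \textsc{ExaBan} recurses along exactly this structure. The lemma is itself the right inductive invariant: it strengthens Proposition~\ref{prop:exaban_correct} from the root to every subtree, and once it holds for all subtrees the proposition follows by instantiating $T_\xi := T_\varphi$.

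For the base cases, the leaves of a complete d-tree are Boolean constants or (possibly negated) literals, matching the first four cases of \textsc{ExaBan}. I would check each directly against the characterization in Eq.~\eqref{eq:alternative_banzhaf}. For $\xi = x$ we have $\#x = 1$ and $\banz(x,x) = \#x[x:=1] - \#x[x:=0] = 1 - 0 = 1$, so the output $(1,1)$ is correct; for $\xi = \neg x$ we get $\banz(\neg x, x) = 0 - 1 = -1$ and $\#(\neg x) = 1$, matching $(-1,1)$; for $\xi = 1$ or a literal over a variable other than $x$, the variable $x$ is absent so $\banz(\xi,x) = 0$ while $\#\xi = 1$; and for $\xi = 0$ we obtain $(0,0)$. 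These agree with the hard-coded outputs.

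For the inductive step, $T_\xi = T_{\xi_1}\,\text{op}\,T_{\xi_2}$ with $\text{op} \in \{\odot, \otimes, \oplus\}$, and the inductive hypothesis gives $\textsc{ExaBan}(T_{\xi_i}, x) = (\banz(\xi_i, x), \#\xi_i)$ for $i \in [2]$. The d-tree definition guarantees that the children are independent when $\text{op} \in \{\odot, \otimes\}$ and mutually exclusive over the same variables when $\text{op} = \oplus$, so the combination formulas Eq.~\eqref{eq:ind_and_count} to~\eqref{eq:mut_excl_banz}, derived in Appendix~\ref{app:algorithm}, apply verbatim. Substituting the inductive values into these formulas yields exactly the assignments performed by \textsc{ExaBan} in each branch, which closes the induction.

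The one point requiring care is the ``wlog $x \in \varphi_1$'' convention in the $\odot$ and $\otimes$ branches. For independent children, $x$ lies in at most one subtree: if it lies in $\xi_2$ we relabel, and if it lies in neither then $\banz(\xi_1, x) = 0$ forces the product formulas to return $0$, which is correct. In the $\oplus$ branch the children share the same variable set, so $x$ (if present) occurs in both and the additive formula Eq.~\eqref{eq:mut_excl_banz} applies symmetrically. I expect this case analysis, together with keeping the variable counts $n_i$ consistent so that the $2^{n_i}$ factors in the $\otimes$ case are evaluated over the correct variable sets, to be the only fiddly part; all the genuine mathematical content is already carried by the previously established combination formulas.
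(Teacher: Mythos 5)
Your proposal is correct and follows essentially the same route as the paper's proof: structural induction on the d-tree, with the four leaf cases verified via Eq.~\eqref{eq:alternative_banzhaf} and the inductive step discharged by substituting the inductive hypothesis into the combination formulas Eq.~\eqref{eq:ind_and_count}--\eqref{eq:mut_excl_banz}, including the same treatment of the ``wlog'' convention (the case where $x$ is absent from $\xi$ is handled by noting the child's Banzhaf value is $0$, so the product formulas correctly return $0$). No gaps; the paper merely writes out the $\odot$ case in detail and declares $\otimes$ and $\oplus$ analogous, just as you do.
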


\begin{proof}
Consider a positive DNF function $\varphi$, a complete d-tree $T_{\varphi}$ for $\varphi$, a subtree $T_{\xi}$ of $T_{\varphi}$ for some function $\xi$,
and a variable $x$ in $\varphi$. To prove Lemma~\ref{lem:exaban_correct},
we show by induction over the structure of $T_{\xi}$ that it holds   
$\textsc{ExaBan}(T_{\xi}, x) = (\banz(\xi,x), \#\xi)$.

\paragraph{Base Case of the Induction}
Assume that $T_{\xi}$ consists of the single node $\xi$. 
We analyze all 
cases for $\xi$.

\begin{itemize}
\item In case $\xi$ is $x$, $\textsc{ExaBan}$ returns $(1,1)$.
By Eq.~\eqref{eq:alternative_banzhaf}, we have 
$\banz(x,x) = \#x[x:=1] - \#x[x:=0] = 
\#1 - \#0 = 1-0 = 1$. We obtain the last equality by observing that  
the empty set is the only  model of the constant $1$.
It also holds 
that $\#x = 1$, since the assignment that maps $x$ to $1$ is the only 
 model of the function $x$.
It follows that the pair $(1,1)$ returned by  
\textsc{ExaBan} is correct in this case.

\item In case $\xi$ is $\neg x$, $\textsc{ExaBan}$ returns $(-1,1)$.
By Eq.~\eqref{eq:alternative_banzhaf}, it holds 
$\banz(\neg x,x) = \#\neg x[x:=1] - \#\neg x[x:=0] = 
\#0 - \#1 = 0-1 = -1$.
It also holds $\#\neg x = 1$, since the assignment that maps $x$ to $0$ is the only 
 model of $\neg x$.
We conclude that the pair $(-1,1)$ returned by  
\textsc{ExaBan} is correct.

\item In case $\xi$ is $1$ 
or a literal different from 
$x$ and $\neg x$, \textsc{ExaBan} returns $(0,1)$.
By Eq.~\eqref{eq:alternative_banzhaf}, it holds 
$\banz(\xi,x) = \#\xi[x:=1] - \#\xi[x:=0] = \#\xi - \#\xi = 0$.
We also observe that $\#\xi = 1$, because: 
if $\xi = 1$, the empty set is the only model of $\xi$;
if $\xi = y$ for a variable $y$, $\{y \mapsto 1\}$ is the only model of $\xi$;
if $\xi = \neg y$, $\{y \mapsto 0\}$ is the only model of $\xi$.
This implies that the pair $(0,1)$ returned by  
the procedure \textsc{ExaBan} is correct.

\item In case $\xi$ is $0$,   
\textsc{ExaBan} returns $(0,0)$.
By Eq.~\eqref{eq:alternative_banzhaf},
it holds $\banz(0,x) = 
\#0[x:=1] - \#0[x:=0] = \#0 - \#0 = 0$.
The constant $0$ cannot be satisfied by any assignment. 
Thus, the pair $(0,0)$ returned by  
\textsc{ExaBan} is correct.
\end{itemize}

\paragraph{Induction Step}
Assume that $T_{\xi}$
is of the form $T_{\xi_1} \text{ op }
T_{\xi_2}$. 
The procedure \textsc{ExaBan} first computes 
$(B_i,\#_i) \defeq \textsc{ExaBan}(T_{\xi_i},x)$ for $i \in [2]$.
The induction hypothesis is:
\begin{align}
\label{eq:ind_hypo_exaban}
\textsc{ExaBan}(T_{\xi_i},x) \defeq (B_i,\#_i) = (\banz(\xi_i,x),\#\xi_i)
\end{align}
for $i \in [2]$. We show that 
$\textsc{ExaBan}(T_{\xi},x)= (\banz(\xi,x),\#\xi)$.
This follows from 
Eq.~\eqref{eq:ind_and_count} to \eqref{eq:mut_excl_banz}. 
We analyze the case for $\text{op} = \odot$ in detail.
The cases for $\text{op} = \otimes$ and $\text{op} = \oplus$ are analogous. 

The procedure \textsc{ExaBan} 
%computes $(B_i,\#_i) \defeq \textsc{ExaBan}(T_{\xi_i},x)$ for $i \in [2]$ and 
returns the pair $(B_i \cdot \#_2, \#_1 \cdot \#_2)$.
By Eq.~\eqref{eq:ind_and_count}, it holds 
$\#\xi = \#\xi_1 \cdot \#\xi_2$.
Due to the induction hypothesis in Eq.~\eqref{eq:ind_hypo_exaban}, this implies $\#\xi = \#_1 \cdot \#_2$.
Hence, the model count computed by \textsc{ExaBan}
is correct. It remains to show that 
$B_1 \cdot \#_2 = \banz(\xi,x)$.

First, we consider the case that $x$ is not included in 
$\xi$. 
By Eq.~\eqref{eq:alternative_banzhaf}, it holds 
$\banz(\xi_1,x) = \#\xi_1[x:=1] - \#\xi_1[x:=0] =
\#\xi_1 - \#\xi_1 = 0$
and 
$\banz(\xi,x) = \#\xi[x:=1] - \#\xi[x:=0] = \#\xi - \#\xi = 0$.
By the induction hypothesis in Eq.~\eqref{eq:ind_hypo_exaban}, 
$B_1$ must be $0$. Hence, 
$B_1 \cdot \#_2 = 0 = \banz(\xi,x)$.
This means that the Banzhaf value computed by 
\textsc{ExaBan} is correct. 

Now, we consider the case that $x$ is in $\xi$.
Without loss generality, 
we assume that $x$ is in $\xi_1$.
By Eq.~\eqref{eq:ind_and_banz}, it holds 
$\banz(\xi,x) = \banz(\xi_1,x) \cdot \#\xi_2$.
By the induction hypothesis in Eq.~\eqref{eq:ind_hypo_exaban}, we obtain
$\banz(\xi,x) = B_1 \cdot \#_2$.
This means that the Banzhaf value computed by \textsc{ExaBan}
is correct. 
This completes the induction step 
for $\text{ op } = \odot$.
\end{proof}

\subsection{Proof of Proposition~\ref{prop:banz_count_bounds}}
\textsc{Proposition} \ref{prop:banz_count_bounds}.
{\textit For any positive DNF function $\varphi$ and variable $x$ in $\varphi$, 
it holds: 
\begin{align*}
%\label{eq:count_bounds}
\#L(\varphi) &\leq \#\varphi \leq \#U(\varphi) \\
\#L(\varphi[x:=1]) - \#U(\varphi[x:=0]) &\ \leq  \banz(\varphi,x) \\
&\ \leq   \#U(\varphi[x:=1]) - \#L(\varphi[x:=0])
\end{align*}
}

\medskip

We first prove the bounds on $\#\varphi$.
Consider a model $\theta$ for $L(\varphi)$.
The model must satisfy at least one clause $C$ 
in $L(\varphi)$. By construction, 
$C$ is included in $\varphi$. Let $\theta'$ be an assignment for $\varphi$ that results from $\theta$ by mapping all variables that appear in $\varphi$ but not in $L(\varphi)$ to $1$. Since $\theta'$ satisfies $C$, 
it is a model of $\varphi$. Observe that for two 
distinct models $\theta_1$ and $\theta_2$ for $L(\varphi)$,
the resulting models $\theta_1'$ and $\theta_2'$ must be distinct as well. This implies 
$\#L(\varphi) \leq \#\varphi$.

Consider now a model $\theta$ for $\varphi$.
The function $\varphi$ must contain at least one clause 
$C$ such that $\theta$ satisfies all literals in $C$. 
By construction, $U(\varphi)$ has the same variables as 
$\varphi$ and contains a clause $C'$ that results from $C$ by skipping variables. This means that $\theta$ satisfies $C'$, hence it  is a model of $U(\varphi)$. This implies 
$\#\varphi \leq \#U(\varphi)$.

The bounds on $\banz(\varphi,x)$ follow immediately from the bounds on the model counts and 
the alternative characterization of the Banzhaf value given in Eq.~\eqref{eq:alternative_banzhaf}:

\begin{align*}
\banz(\varphi,x) =&\ \#\varphi[x:=1] - \#\varphi[x:=0] \\
\geq &\ \#L(\varphi[x:=1]) - \#U(\varphi[x:=0]) 
\end{align*}

\begin{align*}
\banz(\varphi,x) =&\ \#\varphi[x:=1] - \#\varphi[x:=0] \\
\leq &\ \#U(\varphi[x:=1]) - \#L(\varphi[x:=0]) 
\end{align*}

\subsection{Proof of Proposition~\ref{prop:bounds_correct}}
\textsc{Proposition} \ref{prop:bounds_correct}.
\textit{For any positive DNF function $\varphi$, d-tree $T_{\varphi}$ for $\varphi$,  
and variable $x$ in $\varphi$, it holds 
$\textsc{bounds}(T_{\varphi}, x) = (L_b, L_{\#}, U_b, U_{\#})$
such that 
$L_b \leq \banz(\varphi,x) \leq U_b$ 
and 
$L_{\#} \leq \#\varphi \leq U_{\#}$.
}

\medskip

Proposition~\ref{prop:bounds_correct} is implied 
by the following lemma:
\begin{lemma}
\label{lem:bounds}
For any positive DNF function $\varphi$, d-tree $T_{\varphi}$ for $\varphi$,
subtree $T_{\xi}$ of $T_{\varphi}$ for some function $\xi$,
and variable $x$ in $\varphi$, it holds 
$\textsc{bounds}(T_{\xi}, x) = (L_b, L_{\#}, U_b, U_{\#})$
such that 
$L_b \leq \banz(\xi,x) \leq U_b$ 
and 
$L_{\#} \leq \#\xi \leq U_{\#}$.
\end{lemma}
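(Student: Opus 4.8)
The plan is to prove Lemma~\ref{lem:bounds} by structural induction on the d-tree $T_\xi$, in close analogy with the proof of Lemma~\ref{lem:exaban_correct}; since Proposition~\ref{prop:bounds_correct} is exactly the special case $T_\xi = T_\varphi$, this suffices. Two facts will be used throughout. First, since $\varphi$ is positive DNF, every subfunction $\xi$ is monotone, so $\#\xi[x:=1]\ge \#\xi[x:=0]$ and hence $\banz(\xi,x)\ge 0$ by Eq.~\eqref{eq:alternative_banzhaf}. Second, I carry a strengthened induction hypothesis that, beyond $L_b\le\banz(\xi,x)\le U_b$ and $L_\#\le\#\xi\le U_\#$, also records the range invariant $0\le L_\#\le\#\xi\le U_\#\le 2^{n}$, where $n$ is the number of variables of $\xi$; this extra information is what makes the sign bookkeeping in the inductive step go through.

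For the base case $T_\xi$ is a single leaf. If the leaf is a literal or a constant $\ell$, the procedure sets both bounds equal to $\textsc{ExaBan}(\ell,x)$, which returns the exact pair $(\banz(\ell,x),\#\ell)$ by Lemma~\ref{lem:exaban_correct}, so all four inequalities hold with equality. If the leaf is a non-trivial DNF function $\psi$, the quantities $L_b,U_b,L_\#,U_\#$ are computed from $\#L(\cdot)$ and $\#U(\cdot)$ exactly as in Proposition~\ref{prop:banz_count_bounds}, which directly yields $L_\#\le\#\psi\le U_\#$ and $L_b\le\banz(\psi,x)\le U_b$. The range invariant holds because $L$ and $U$ produce iDNF functions over the same variables as $\psi$, whose model counts lie in $[0,2^{n}]$.

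The inductive step treats $T_\xi = T_{\xi_1}\text{ op }T_{\xi_2}$ with $\text{op}\in\{\odot,\otimes,\oplus\}$. By the induction hypothesis the recursive calls return valid, range-respecting bounds on $\banz(\xi_i,x)$ and $\#\xi_i$, and the procedure combines them by substituting these child bounds into the exact identities \eqref{eq:ind_and_count}--\eqref{eq:mut_excl_banz}. Correctness rests on a monotonicity principle: to obtain a lower (resp.\ upper) bound on an expression built from the child quantities, one substitutes the lower (resp.\ upper) child bound wherever the quantity enters positively and the opposite child bound wherever it enters negatively. The $\oplus$ case is immediate, since both $\#\xi$ and $\banz(\xi,x)$ are plain sums of child quantities. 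For $\odot$, the count $\#\xi=\#\xi_1\cdot\#\xi_2$ is a product of nonnegatives, and the Banzhaf value $\banz(\xi,x)=\banz(\xi_1,x)\cdot\#\xi_2$ is handled using that $\banz(\xi,x)\ge 0$: if the computed $L_b^{(1)}$ is negative, then $L_b^{(1)}\cdot L_\#^{(2)}\le 0\le\banz(\xi,x)$, and otherwise both factors are nonnegative and the usual product inequality applies.

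The genuinely delicate case, which I expect to be the main obstacle, is the independent-or node $\otimes$, because \eqref{eq:ind_or_count} and \eqref{eq:ind_or_banz} involve subtractions and the factors $2^{n_i}-\#\xi_i$. Here I would first rewrite the count identity as $\#\xi = 2^{n_1+n_2} - (2^{n_1}-\#\xi_1)(2^{n_2}-\#\xi_2)$, which exhibits $\#\xi$ as jointly increasing in both $\#\xi_1$ and $\#\xi_2$; substituting the child lower bounds reproduces precisely the procedure's expression $L_\#^{(1)}2^{n_2}+L_\#^{(2)}2^{n_1}-L_\#^{(1)}L_\#^{(2)}$, and the range invariant $\#\xi_i\le 2^{n_i}$ keeps each factor $2^{n_i}-\#\xi_i$ nonnegative so the inequality direction is preserved (symmetrically for $U_\#$). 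For the Banzhaf value $\banz(\xi,x)=\banz(\xi_1,x)\cdot(2^{n_2}-\#\xi_2)$, the second factor is nonnegative, so the lower bound multiplies $L_b^{(1)}$ by $2^{n_2}-U_\#^{(2)}$ (the upper count bound appears because the count enters negatively), matching the algorithm; when $L_b^{(1)}\ge 0$ this follows from the product inequality for nonnegatives, and when $L_b^{(1)}<0$ it follows because the product is then $\le 0\le\banz(\xi,x)$. The dual computation gives $U_b$. Verifying that these substitutions coincide line-by-line with the $\otimes$ case of Fig.~\ref{alg:adaban-bounds}, and propagating the range invariant, closes the induction; the remaining checks are routine.
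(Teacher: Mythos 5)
Your induction skeleton is the same as the paper's: exact values at literal/constant leaves via Lemma~\ref{lem:exaban_correct}, Proposition~\ref{prop:banz_count_bounds} at non-trivial leaves, then substitution of child bounds into Eq.~\eqref{eq:ind_and_count}--\eqref{eq:mut_excl_banz} in the inductive step; your rewriting of the $\otimes$ count as $\#\xi = 2^{n_1+n_2}-(2^{n_1}-\#\xi_1)(2^{n_2}-\#\xi_2)$ is a clean equivalent of the paper's rearrangement-inequality calculation. However, the two ``facts'' that your sign bookkeeping rests on are both false, and this is a genuine gap. First, it is \emph{not} true that every subtree of a d-tree for a positive DNF represents a monotone function. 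The $\oplus$ nodes arise from Shannon expansion, $\psi\equiv(y\odot\psi[y:=1])\oplus(\neg y\odot\psi[y:=0])$, so $T_\varphi$ contains subtrees for functions such as $\neg y\wedge\psi[y:=0]$; taking $y=x$ (expansion on the very variable being bounded), the subtree $\neg x\odot T_{\psi[x:=0]}$ represents a function with $\banz(\xi,x)=-\#\psi[x:=0]\le 0$, i.e.\ genuinely negative --- this is exactly why \textsc{ExaBan} has the case ``$\neg x$: $B:=-1$''. Your $\odot$ argument ``if $L_b^{(1)}<0$ then $L_b^{(1)}\cdot L_\#^{(2)}\le 0\le\banz(\xi,x)$'' collapses at such a node: there $L_b^{(1)}=-1$ exactly, $\banz(\xi,x)=-\#\psi[x:=0]<0$, and $-L_\#^{(2)}$ is in general an \emph{upper} bound on $-\#\psi[x:=0]$, not a lower one (take a leaf whose iDNF lower count bound is strict, e.g.\ the majority function on three variables, where $L_\#^{(2)}=2$ but $\#\psi[x:=0]=4$). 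Bounding a product whose left factor is negative requires pairing it with the \emph{opposite} count bound of the other factor, so the negative-literal subtrees need a treatment of their own; your appeal to monotonicity silently skips the one place where the induction is actually delicate (and, incidentally, the place the paper's own proof treats most tersely, since it only details the $\otimes$ lower-bound case and declares the rest analogous).

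Second, your range invariant $U_\#\le 2^{n}$ is not ``immediate'' at $\oplus$ nodes: the two children are functions over the \emph{same} $n$ variables, each with $U_\#\le 2^{n}$ by induction, so their sum is only bounded by $2^{n+1}$. Propagating the invariant through $\oplus$ requires using the concrete shape of Shannon children (a literal $\odot$ a subtree over the remaining $n-1$ variables, which forces each child's $U_\#\le 2^{n-1}$), and your proof never establishes this. Since your $\otimes$ case explicitly needs $2^{n_i}-U^{(i)}_\#\ge 0$ to preserve inequality directions, this is a second hole in the induction. Both holes can only be closed by arguing about the actual structure of the d-trees the algorithm builds (or by bounding $\#\xi$ and $\#\xi[x:=0]$ separately, as in optimization~(4) of Sec.~\ref{sec:optimizatons}, where every quantity stays nonnegative); the monotonicity shortcut cannot do the job.
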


\begin{proof}
Consider a positive DNF function $\varphi$, a complete d-tree $T_{\varphi}$ for $\varphi$, a subtree $T_{\xi}$ of $T_{\varphi}$ for some function $\xi$,
and a variable $x$ in $\varphi$. The proof of Lemma~\ref{lem:exaban_correct}
is by induction over the structure of $T_{\xi}$.

\paragraph{Base Case of the Induction}
Assume that $T_{\xi}$ consists of the single node $\xi$. 
We consider the cases that $\xi$ is a literal, a constant, or 
a function that is not a literal nor a constant.

\begin{itemize}
\item If $\xi$ is a literal or a constant, 
the procedure \textsc{bounds} calls the procedure $\textsc{ExaBan}(\xi, x)$ from Figure~\ref{alg:exactban}, which computes the exact 
values $\banz(\xi,x)$ and $\#\xi$ (Lemma~\ref{lem:exaban_correct}). 
Hence, the output of \textsc{bounds}
is correct in this case. 

\item Consider the case that $\xi$ is not a literal nor a constant. Since $\varphi$ is a positive DNF function, also
$\xi$ must be a positive DNF function.  
The procedure $\textsc{bounds}$ sets
\begin{align*}
L_\# \defeq &\ \#L(\xi),  \\
U_\# \defeq &\ \#U(\xi), \\
L_b \defeq &\  \#L(\xi[x:=1]) - \#U(\xi[x:=0]),  \text{ and }\\ 
U_b \defeq &\ \#U(\xi[x:=1]) - \#L(\xi[x:=0]).
\end{align*}
By Proposition~\ref{prop:banz_count_bounds}, it holds 
\begin{align*}
L_\# \defeq &\ \#L(\xi) \leq \#\xi \leq \#U(\xi) \defeq U_\# \text{ and } \\
L_b \defeq &\  \#L(\xi[x:=1]) - \#U(\xi[x:=0]) \\
\leq &\  \banz(\xi,x) \\
\leq &\ \#U(\xi[x:=1]) - \#L(\xi[x:=0])
\defeq U_b. 
\end{align*}
Thus, also in this case the output of \textsc{bounds} is correct. 
\end{itemize}

\paragraph{Induction Step}
Assume that $T_{\xi}$ is of the form $T_{\xi_1} \text{ op }T_{\xi_2}$.
The procedure \textsc{bounds} computes 
$(L^{(i)}_b,L^{(i)}_\#,U^{(i)}_b,U^{(i)}_\#) \defeq 
\textsc{bounds}(T_{\xi_i}, x)$, for $i\in[2]$.
The induction hypothesis states that the following inequalities hold: 
\begin{align*}
L_{\#}^{(1)} \leq &\ \#\xi_1 \leq U_{\#}^{(1)}, \\
L_{\#}^{(2)} \leq &\ \#\xi_2 \leq U_{\#}^{(2)}, \\
L_b^{(1)} \leq &\ \banz(\xi_1,x) \leq U_b^{(1)}, \text{ and } \\
L_b^{(2)} \leq &\ \banz(\xi_2,x) \leq U_b^{(2)}.
\end{align*}
We consider the case that $\text{ op } = \otimes$ and show that the following quantities 
$L_{\#}$ and  $L_b$ computed by \textsc{bounds} are indeed lower bounds 
for $\#\xi$ and respectively $\banz(\xi,x)$. 
\begin{align*}
L_\# \defeq &\ L^{(1)}_\#\cdot 2^{n_2} + L^{(2)}_\#\cdot 2^{n_1} - L^{(1)}_\#\cdot L^{(2)}_\# \text{ and } \\
L_b \defeq &\  L^{(1)}_b\cdot (2^{n_2} - U^{(2)}_\#).
\end{align*}
The other cases are handled analogously. 

Without loss of generality, assume that $x$ is in $\xi_1$ if it is in $\xi$.
First, we show that $L_\# \leq \#\xi$. This is implied by 
the following (in)equalities, where $n_i$ is the number of variables 
in $\xi_i$ for $i \in [2]$.
\begin{align*}
\#\xi - L_\# \overset{(a)}{=} &\ \#\xi_1 \cdot 2^{n_2} + \#\xi_2 \cdot 2^{n_1} - 
\#\xi_1 \cdot \#\xi_2 - \\
&\ (L^{(1)}_\#\cdot 2^{n_2} + L^{(2)}_\#\cdot 2^{n_1} - L^{(1)}_\#\cdot L^{(2)}_\#) \\
\overset{(b)}{=}&\ (\#\xi_1-L^{(1)}_\#) \cdot 2^{n_2} + (\#\xi_2-L^{(2)}_\#) \cdot 2^{n_1}  - 
\#\xi_1 \cdot \#\xi_2 + L^{(1)}_\#\cdot L^{(2)}_\#  \\
\overset{(c)}{\geq} &\ (\#\xi_1-L^{(1)}_\#) \cdot \#\xi_2 + (\#\xi_2-L^{(2)}_\#) \cdot \#\xi_1  - 
\#\xi_1 \cdot \#\xi_2 + L^{(1)}_\#\cdot L^{(2)}_\#  \\
\overset{(d)}{=} &\ \#\xi_1\cdot \#\xi_2 -  L^{(1)}_\# \cdot \#\xi_2 + \#\xi_1 \cdot \#\xi_2 - 
L^{(2)}_\# \cdot \#\xi_1  - 
\#\xi_1 \cdot \#\xi_2 + L^{(1)}_\#\cdot L^{(2)}_\#  \\
= &\ (L^{(1)}_\# \cdot L^{(2)}_\# + \#\xi_1\cdot \#\xi_2) -
(L^{(1)}_\# \cdot \#\xi_2 + \#\xi_1 \cdot L^{(2)}_\#)
\overset{(e)}{\geq} 0
\end{align*}
Eq. $(a)$ follows from Eq.~\eqref{eq:ind_or_count} and the definition of $L_{\#}$. 
We obtain Eq.~$(b)$ and~$(d)$ using the distributivity of multiplication over addition. 
Ineq. $(c)$ holds because the number of models of $\xi_i$ can be at most $2^{n_i}$, for $i \in [2]$.
For Ineq. $(e)$, it suffices to show: 
$$(L^{(1)}_\# \cdot \#\xi_2 + \#\xi_1 \cdot L^{(2)}_\#) \leq 
(L^{(1)}_\# \cdot L^{(2)}_\# + \#\xi_1\cdot \#\xi_2).$$
To show the latter inequality, we first observe that 
$L^{(i)}_\# \leq \#\xi_i$ for $i \in [2]$, by induction hypothesis. 
Then, we use the rearrangement inequality~\cite{hardy1952inequalities}.

Now, we show $L_b \leq \banz(\xi,x)$. This holds, because:
\begin{align*}
\banz(\xi,x) \overset{(a)}{=} &\ \banz(\xi_1,x) \cdot (2^{n_2} - \#\xi_2) \\
\overset{(b)}{\geq} &\ L^{(1)}_b\cdot (2^{n_2} - U^{(2)}_\#) \defeq L_b
\end{align*}
Eq. (a) holds due to Eq.~\eqref{eq:ind_or_banz}.
Observe that in case $x$ is not included in $\xi$, we have 
$\banz(\xi,x) = \banz(\xi_1,x) = 0$.
Eq. (b) follows from the induction hypothesis saying that 
$L^{(1)}_b \leq \banz(\xi_1,x)$ and 
$\#\xi_2 \leq U^{(2)}_\#$.
\end{proof}

We close this section with an auxiliary lemma that 
will be useful in the proof of 
Proposition~\ref{prop:adaban_correct}.
It states that \textsc{bounds} computes the
exact Banzhaf value in case the input d-tree is complete.

\begin{lemma}
\label{lem:bounds_exact_complete}
For any positive DNF function $\varphi$, complete 
d-tree $T_{\varphi}$ for $\varphi$,  
and variable $x$ in $\varphi$, it holds 
$\textsc{bounds}(T_{\varphi}, x) = (L_b, \cdot , U_b, \cdot )$
such that 
$L_b \leq \banz(\varphi,x) \leq U_b$. 
\end{lemma}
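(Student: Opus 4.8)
The plan is to prove the slightly stronger statement that a complete d-tree makes the bounds collapse to the exact value: that $\textsc{bounds}(T_{\varphi},x) = (L_b, L_{\#}, U_b, U_{\#})$ with $L_b = U_b = \banz(\varphi,x)$ and $L_{\#} = U_{\#} = \#\varphi$. The asserted inequality $L_b \leq \banz(\varphi,x) \leq U_b$ is then immediate (with equality), and this is what actually justifies the claim that \textsc{bounds} computes the exact value on complete inputs. The argument mirrors the proof of Lemma~\ref{lem:bounds}, the only difference being that completeness lets us replace the loose leaf estimates by exact ones.

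First I would record the decisive consequence of completeness: every leaf of $T_{\varphi}$ is a (possibly negated) literal or a Boolean constant. Hence the \emph{non-trivial leaf} branch of \textsc{bounds} — the sole source of slack, since it invokes the inequalities of Prop.~\ref{prop:banz_count_bounds} — is never entered. At each leaf $\ell$, \textsc{bounds} instead calls $\textsc{ExaBan}(\ell,x)$ and sets $L_b = U_b$ and $L_{\#} = U_{\#}$ equal to its output, which equals $(\banz(\ell,x),\#\ell)$ by Lemma~\ref{lem:exaban_correct}. This gives the base case of an induction over the structure of an arbitrary subtree $T_{\xi}$ of $T_{\varphi}$.

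For the induction step I would take a node $T_{\xi_1}\,\text{op}\,T_{\xi_2}$ and assume the child outputs are tight and exact, i.e. $L_b^{(i)} = U_b^{(i)} = \banz(\xi_i,x)$ and $L_{\#}^{(i)} = U_{\#}^{(i)} = \#\xi_i$ for $i \in [2]$. The key observation is that under this hypothesis the bound-combination formulas of \textsc{bounds} coincide syntactically with the exact-value formulas Eq.~\eqref{eq:ind_and_count} to~\eqref{eq:mut_excl_banz} used by \textsc{ExaBan}. For $\odot$ and $\oplus$ the lower and upper combination rules are already identical, so tightness is inherited at once. The one case needing a line of checking is $\otimes$: there $L_b = L_b^{(1)}(2^{n_2} - U_{\#}^{(2)})$ and $U_b = U_b^{(1)}(2^{n_2} - L_{\#}^{(2)})$ both reduce to $\banz(\xi_1,x)\,(2^{n_2} - \#\xi_2) = \banz(\xi,x)$ by Eq.~\eqref{eq:ind_or_banz}, and the two count rules collapse analogously to Eq.~\eqref{eq:ind_or_count}. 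The negative occurrence of $\#\xi_2$ is harmless precisely because the two child count bounds coincide, so it does not matter whether the lower or upper child bound is substituted. This yields $L_b = U_b = \banz(\xi,x)$ and $L_{\#} = U_{\#} = \#\xi$ at the node; instantiating at the root $\xi = \varphi$ completes the proof.

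I do not expect a genuine obstacle, since completeness eliminates the one mechanism (non-trivial leaves) that introduces slack, and the combination formulas are already established as correct bounds in Lemma~\ref{lem:bounds}. The only point deserving care is confirming that \emph{tightness}, not merely correctness, survives the $\otimes$ combination, where a child count enters with a negative sign; this is exactly the step where the coincidence $L_{\#}^{(2)} = U_{\#}^{(2)}$ of the child bounds is essential.
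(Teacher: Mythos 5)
Your proof is correct and follows essentially the same route as the paper's: exactness at the leaves via $\textsc{ExaBan}$ (Lemma~\ref{lem:exaban_correct}) followed by a structural induction mirroring Lemma~\ref{lem:bounds}. The paper gives only a two-line sketch of this argument, and the strengthened conclusion you prove, $L_b = U_b = \banz(\varphi,x)$, is precisely what the paper implicitly relies on in the proof of Proposition~\ref{prop:adaban_correct}, so making it explicit (including the check that tightness survives the $\otimes$ combination, where a child count occurs negatively) is a faithful and slightly more careful rendering of the intended proof.
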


\begin{proof}
The main observation is as follows. Each leaf of $T_{\varphi}$ is either  
a literal or a constant. For each such leaf $\ell$, the procedure \textsc{bounds} calls $\textsc{ExaBan}(\ell, x)$, which, by Lemma~\ref{lem:exaban_correct}, computes $\banz(\ell,x)$ exactly. Then, the lemma follows from a simple structural induction as in the proof of Lemma~\ref{lem:bounds}. 
\end{proof}

\subsection{Proof of Proposition~\ref{prop:adaban_correct}}

\textsc{Proposition} \ref{prop:adaban_correct}.
\textit{
For any positive DNF function $\varphi$,
d-tree $T_{\varphi}$ for $\varphi$,  variable $x$ in $\varphi$, 
error $\epsilon$, 
and bounds $L \leq \banz(\varphi,x) \leq U$,
it holds 
$\textsc{AdaBan}(T_{\varphi}, x, \epsilon, [L,U]) = [\ell, u]$ such that every value in $[\ell, u]$ is an $\epsilon$-approximation 
of $\banz(\varphi, x)$.
}

\medskip
The procedure \textsc{AdaBan} first calls $\textsc{bounds}(T_{\varphi}, x)$
to compute a lower bound $L_b$ and an upper bound 
$U_b$ for $\banz(\varphi,x)$ (Proposition~\ref{prop:bounds_correct}). 
Then, it updates the bounds $L$ and $U$
by setting $L \defeq \max\{L,L_b\}$
and $U \defeq \min\{U,U_b\}$ and
checks whether 
\begin{align}
\label{eq:stopping_condition}
(1-\epsilon)\cdot U - (1+\epsilon)\cdot L \leq 0.
\end{align}
If this holds, it returns
the interval $[(1- \epsilon) \cdot U, (1 + \epsilon) \cdot L]$. Otherwise, it picks a node 
in $T_{\varphi}$ that is not a literal nor a constant, decomposes 
it into independent or mutually exclusive functions, and repeats the above 
steps. 

First, we explain that the procedure \textsc{AdaBan} reaches a state 
where Condition~\eqref{eq:stopping_condition} holds. 
Then, we show that this condition implies that each value 
in the interval $[(1- \epsilon) \cdot U, (1 + \epsilon) \cdot L]$ is a relative $\epsilon$-approximation
of $\banz(\varphi,x)$.
 
In case $T_{\varphi}$ is a complete d-tree,  
$\textsc{bounds}(T_{\varphi}, x)$ computes the
$\banz(\varphi,x)$ exactly (Lemma~\ref{lem:bounds_exact_complete}), which means that 
$L$ and $U$ are set to $\banz(\varphi,x)$.
This implies
\begin{align*}
&\ (1-\epsilon)\cdot U - (1+\epsilon)\cdot L \\
= &\ (1-\epsilon)\cdot \banz(\varphi,x) - (1+\epsilon)\cdot \banz(\varphi,x) \\
= &\  - 2\epsilon \cdot \banz(\varphi,x) \leq 0,
\end{align*}
which means that, at the latest when $T_{\varphi}$ is complete, Condition~\eqref{eq:stopping_condition} is satisfied.

Assume now that $L$ and $U$ are a lower and respectively an upper bound
of $\banz(\varphi,x)$ such that 
Condition~\eqref{eq:stopping_condition} is satisfied.
The condition implies
$(1-\epsilon)\cdot U \leq  (1+\epsilon)\cdot L$.
Consider now an arbitrary value $B$ in the interval 
$[(1- \epsilon) \cdot U, (1 + \epsilon) \cdot L]$.
It holds:
\begin{align*}
B \geq&\ (1-\epsilon)\cdot U \geq  (1-\epsilon)\cdot \banz(\varphi, x) \text{ and } \\
B \leq&\ (1+\epsilon)\cdot L \leq  (1+\epsilon)\cdot \banz(\varphi, x)
\end{align*}
This means that $B$ is a relative $\epsilon$-approximation 
for $\banz(\varphi, x)$.
\section{Missing Details in Section~\ref{sec:ranking_top_k}}
\label{app:ranking}
In this section, we prove
the intractability part of Theorem~\ref{theo:dichotomy}:

\begin{proposition}
\label{prop:comp_banz_hard}
For any non-hierarchical Boolean CQ $Q$ without self-joins, 
the problem $\compbanz_Q$
cannot be solved in polynomial time,
unless there is an $\fptas$ for $\countbis$.
\end{proposition}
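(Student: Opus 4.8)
The plan is to route the hardness through an intermediate counting problem and the conjectured absence of an \fptas for \countbis. I introduce \countnsat: given a positive bipartite DNF (i.e.\ \pptwodnf) function $\varphi$, count its non-satisfying assignments. The argument has two legs. First, I give a \emph{parsimonious} polynomial-time reduction from \countbis to \countnsat, so that ``no \fptas for \countbis'' implies ``no \fptas for \countnsat''. Second, I show that a polynomial-time algorithm $A$ for $\compbanz_Q$ (for a non-hierarchical self-join-free $Q$) yields an \fptas for \countnsat, which contradicts the first leg and hence refutes the existence of such an $A$. For the reduction, given a bipartite graph $G=(U,W,E)$ I set $\varphi_G \defeq \bigvee_{(u,v)\in E} y_u \wedge z_v$ over $\{y_u\}_{u\in U}\cup\{z_v\}_{v\in W}$. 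Reading an assignment $\theta$ as the vertex set $S=\{u:\theta(y_u)=1\}\cup\{v:\theta(z_v)=1\}$, $\theta$ falsifies $\varphi_G$ exactly when $S$ spans no edge, i.e.\ $S$ is independent; this is a bijection, so $\countnsat(\varphi_G)$ equals the number of independent sets of $G$, and composing an \fptas for \countnsat with the reduction would give one for \countbis.

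For the second leg I realize $\varphi$ as the lineage of $Q$. It suffices to treat the canonical non-hierarchical pattern $Q_0 = \exists X,Y\, R(X)\wedge S(X,Y)\wedge T(Y)$: making the $S$-facts exogenous and the $R$- and $T$-facts endogenous, the lineage of $Q_0$ over a suitable $D$ is exactly $\varphi = \bigvee_{(i,j)\in E} r_i\wedge t_j$ with $p$ many $R$-variables and $q$ many $T$-variables. I then attach a fresh endogenous $R$-fact $x$, joined through exogenous $S$-facts to every $T(b_j)$, producing the \emph{connected} function $\Psi \defeq \varphi \vee \bigvee_j (x\wedge t_j)$. A direct count gives $\banz(\Psi,x) = 2^{p+q}-2^p-\#\varphi = B-2^p$ and complement count $2^{p+q+1}-\#\Psi = B+2^p$, where $B \defeq \countnsat(\varphi)\in[2^p,2^{p+q}]$. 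The crucial feature is that these two quantities differ by the known constant $2^{p+1}$ and are therefore \emph{not} proportional.

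To extract $B$ I take the domain-disjoint union of this gadget with a calibration database whose lineage $\Omega$ contains a fact $g$ of known Banzhaf value $\mu$ and known complement count $\nu$ (both computable in polynomial time since $\Omega$ is independent of $\varphi$, e.g.\ an iDNF function). The combined lineage is the independent disjunction $\Psi \vee \Omega$, so Eq.~\eqref{eq:ind_or_banz} gives $\banz(x) = (B-2^p)\,\nu$ and $\banz(g)=\mu\,(B+2^p)$. Hence the oracle answer to ``$\banz(x)\le \banz(g)$'' decides $B(\nu-\mu)\le 2^p(\nu+\mu)$, i.e.\ a threshold test $B \le \tau$ with $\tau = 2^p(\nu+\mu)/(\nu-\mu)$ controllable through $\mu,\nu$ and extra padding. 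Because $\banz(\Psi,x)$ and the complement count of its component carry opposite constant offsets, $B$ does \emph{not} cancel; the naive alternative $\varphi\vee(x\wedge t^*)$ with a fresh $t^*$, giving $\banz(x)=B$ but complement count $3B$, fails precisely because that proportionality makes $B$ drop out of every comparison. A multiplicative binary search over $[2^p,2^{p+q}]$ using $O(q/\epsilon)$ such threshold tests then pins down $B$ to relative error $\epsilon$ in time polynomial in the input and $1/\epsilon$, i.e.\ an \fptas for \countnsat.

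The main obstacle is exactly this second leg: engineering a lineage gadget in which a single fact's Banzhaf value and its component's complement count depend on $\#\varphi$ with \emph{different} constant offsets, so that one oracle comparison survives the multiplicative cross-terms of Eq.~\eqref{eq:ind_or_banz} and reports a genuine threshold on $\#\varphi$; and then verifying that the attainable thresholds $\tau$ are, after clearing denominators and scaling, dense enough for a relative-error search (which also requires realizing calibration lineages with prescribed $\mu,\nu$ of polynomial size). A secondary, more routine step is lifting the construction from $Q_0$ to an arbitrary non-hierarchical self-join-free $Q$: such a $Q$ has variables $X,Y$ with $\at(X),\at(Y)$ incomparable and overlapping, supplying atoms that play the roles of $R$, $S$, and $T$, and fixing all remaining variables and positions to constants while making the appropriate facts exogenous collapses the lineage to the $\varphi$ pattern above.
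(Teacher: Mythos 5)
Your skeleton matches the paper's proof almost exactly: the same intermediate problem $\countnsat$, the same parsimonious reduction from $\countbis$ (clauses $=$ edges, falsifying assignments $=$ independent sets), the same gadget $\Psi=\varphi\vee\bigvee_j(x\wedge t_j)$ with the correct counts $\banz(\Psi,x)=B-2^p$ and $\countnsat(\Psi)=B+2^p$, and the same trick of taking an independent disjunction with a calibration component so that one oracle comparison becomes a threshold test $B\le 2^p(\nu+\mu)/(\nu-\mu)$. Indeed, the paper's calibration component is precisely an instance of your $\Omega$: a fresh fact $y$ joined to a matching of size $m$, which has $\mu=3^m-2^m$, $\nu=3^m+2^m$, and hence threshold $\tau=2^p(3/2)^m$. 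The genuine gap is in how you finish. Your \fptas\ requires thresholds forming a $(1+\epsilon)$-fine multiplicative grid over $[2^p,2^{p+q}]$, and the existence of polynomial-size, polynomial-time-constructible calibration lineages realizing such a grid (with $\mu,\nu$ exactly computable and $\nu>\mu$) is exactly what you label ``the main obstacle'' --- and you never establish it. With the only calibration family you actually exhibit (matchings), the achievable thresholds form a geometric grid of ratio $3/2$, so the search pins down $B$ only up to a multiplicative factor $3/2$. A constant-factor approximation of $\countnsat$ does not by itself contradict the non-existence of an \fptas, so the contradiction you need never materializes; the second leg is incomplete as written.

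The paper closes exactly this hole without any density argument, via amplification by powering: for $k$ disjoint fresh copies one has $\countnsat(\varphi^k)=\countnsat(\varphi)^k$, so one runs the coarse factor-$3/2$ threshold search (thresholds $2^{|\bm X|}(3/2)^i$, $i=1,\dots,2n$) on $\varphi^{k}$ with $k\approx 2/\epsilon$ and takes $k$-th roots; the elementary inequality $(3/2)^{\epsilon/2}<1+\epsilon$ then gives relative error $\epsilon$ in time polynomial in $|\varphi|$ and $1/\epsilon$. You could graft this step directly onto your construction and be done. Alternatively, your dense-threshold route is salvageable --- e.g., calibration components built as disjoint unions of matchings and of stars with centers on the right yield ratios $(\nu+\mu)/(\nu-\mu)$ that are products of factors $3/2$ and $1+2^{-s}$, which can approximate any target in $[1,2^q]$ to within $1+\epsilon$ with polynomially many polynomially-sized pieces --- but that is a lemma requiring statement and proof, not a remark. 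Without one of these two devices, the claimed ``multiplicative binary search \dots\ i.e.\ an \fptas'' is an assertion, not an argument.
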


We prove Proposition~\ref{prop:comp_banz_hard}
in two steps. 
In Sec.~\ref{sec:intrac_basic_non_hierarchical}, 
we show intractability of 
$\compbanz_Q$ for the basic 
non-hierarchical CQ:
\begin{align}
\label{eq:basic_non_hierarchical}
Q_{nh} = \exists X\exists Y\ R(X) \wedge S(X,Y) \wedge  T(Y)
\end{align}
In Sec.~\ref{sec:intrac_arbit_non_hierarchical},
we extend the intractability  result to 
arbitrary self-join-free non-hierarchical Boolean CQs. 

\subsection{Intractability for the Basic Non-Hierarchical CQ}
\label{sec:intrac_basic_non_hierarchical}
We say that a Boolean function is in  \pptwodnf if it is positive, in disjunctive normal form (DNF), and its set of 
variables is partitioned into two disjoint sets $\bm Y$ and $\bm Z$ such that each clause is the conjunction of a variable from $\bm Y$ and a variable from $\bm Z$.

To simplify the following reasoning, we introduce the 
problem $\countnsat$ of counting non-satisfying assignments 
of \pptwodnf functions and state some auxiliary lemmas. 

\begin{center}
\fbox{%
    \parbox{0.73\linewidth}{%
    \begin{tabular}{ll}
   Problem: & $\countnsat$ \\
   Description: &  \textit{Counting non-satisfying assignments of $\pptwodnf$ functions} \\
    Input: & $\pptwodnf$ function $\varphi$  \\
    Compute: & Number of non-satisfying assignments 
    of $\varphi$.
    \end{tabular}
    }}%
\end{center}

The impossibility of an 
\fptas for $\countbis$ implies 
the impossibility of an \fptas for 
$\countnsat$:

\begin{lemma}
\label{cl:countbunsat_not_fptas}
There is no $\fptas$ for $\countnsat$, if  
there is no $\fptas$ for $\countbis$.
\end{lemma}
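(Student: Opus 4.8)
The plan is to prove the contrapositive: I would give a \emph{parsimonious} polynomial-time reduction from $\countbis$ to $\countnsat$, so that any $\fptas$ for $\countnsat$ would yield an $\fptas$ for $\countbis$. A parsimonious reduction is exactly what is needed, since the approximation here is multiplicative (relative): if the reduction preserves the count on the nose, then a relative $\epsilon$-approximation of the target quantity is simultaneously a relative $\epsilon$-approximation of the source quantity, with no degradation of the error.

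First I would describe the construction. Given a bipartite graph $G=(U \cup W, E)$ with $E \subseteq U \times W$, I introduce one variable $y_u$ for each node $u \in U$ and one variable $z_w$ for each node $w \in W$, setting $\bm Y \defeq \{y_u \mid u \in U\}$ and $\bm Z \defeq \{z_w \mid w \in W\}$. For each edge $(u,w) \in E$ I add the clause $y_u \wedge z_w$, obtaining the function $\varphi_G \defeq \bigvee_{(u,w) \in E} (y_u \wedge z_w)$. This is a $\pptwodnf$ function over the disjoint variable sets $\bm Y$ and $\bm Z$, and it is built in time linear in the size of $G$.

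Next I would establish a count-preserving bijection. Identify each assignment $\theta$ over $\bm Y \cup \bm Z$ with the node set $V_\theta \defeq \{u \mid \theta(y_u)=1\} \cup \{w \mid \theta(z_w)=1\}$; this is a bijection between the $2^{|U|+|W|}$ assignments and the $2^{|V|}$ subsets of $V = U \cup W$. Under it, $\theta$ is \emph{non}-satisfying iff no clause $y_u \wedge z_w$ holds, i.e., iff for every edge $(u,w)$ it is not the case that both $u$ and $w$ lie in $V_\theta$; since $G$ is bipartite and every edge runs between $U$ and $W$, this is precisely the condition that $V_\theta$ is an independent set of $G$. Hence the number of non-satisfying assignments of $\varphi_G$ equals the number of independent sets of $G$, so the reduction is parsimonious. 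Feeding $\varphi_G$ to a hypothetical $\fptas$ for $\countnsat$ with error $\epsilon$ would therefore return, in time polynomial in $|G|$ and $\epsilon^{-1}$, a relative $\epsilon$-approximation of the number of independent sets of $G$, i.e., an $\fptas$ for $\countbis$, contradicting the assumption. By contraposition, the nonexistence of an $\fptas$ for $\countbis$ implies the nonexistence of one for $\countnsat$.

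The reduction is short, so the work is in care rather than difficulty, at two points. First, the variable set of $\varphi_G$ must contain \emph{all} of $U$ and $W$, including isolated nodes, so that free variables correctly double-count the independent sets that do or do not contain those nodes. Second, the degenerate case $E = \emptyset$ must be checked: there $\varphi_G$ is the empty (identically false) function whose $2^{|V|}$ assignments are all non-satisfying, matching the $2^{|V|}$ independent sets. The genuinely load-bearing observation is that parsimoniousness transfers the \emph{relative}-error guarantee verbatim; a mere polynomial reduction would not be enough to carry an $\fptas$ across.
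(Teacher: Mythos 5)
Your proposal is correct and follows essentially the same route as the paper's proof: a parsimonious linear-time reduction mapping a bipartite graph $G=(U\cup W,E)$ to the $\pptwodnf$ function $\bigvee_{(u,w)\in E}(y_u\wedge z_w)$, the bijection between node subsets and assignments showing non-satisfying assignments correspond exactly to independent sets, and the observation that parsimoniousness transfers the relative-error guarantee of an $\fptas$ without loss. Your extra remarks on isolated nodes and the case $E=\emptyset$ are sound points of care that the paper leaves implicit.
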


\begin{proof}
We give a polynomial parsimonious reduction from  
$\countbis$ to $\countnsat$. That is, given a bipartite graph 
$G$, we construct a $\pptwodnf$ function 
$\varphi_G$ such that $\countbis(G) = \countnsat(\varphi_G)$.
Then, any \fptas $A$ 
for \countnsat can easily be turned into an \fptas for 
$\countbis$ as follows: Given $0 < \epsilon < 1$ and 
an input graph $G$, we convert $G$ 
into $\varphi_G$ and compute $A(\varphi_G)$.
Due to the parsimonious reduction, it holds 
$(1-\epsilon)\cdot \countbis(G) \leq A(\varphi_G) \leq (1+\epsilon)\cdot \countbis(G)$.

We now explain the reduction.
Given a bipartite graph $G = (V,E)$ with
node set $V = U \cup W$ for disjoint sets $U$ and $V$ and 
edge relation $E \subseteq U \times W$, 
we construct the 
\pptwodnf function
$\varphi_G = \bigvee_{(u,v) \in E} (x_u \wedge x_v)$.
A set $V' \subseteq V$ is an independent set
of $G$ if and only if $\{x_w \mid w \in V' \}$ 
is a non-satisfying assignment for $\varphi$. 
This implies 
$\countbis(G) = \countnsat(\varphi)$.
\end{proof}

Prior work shows how to construct from each 
\pptwodnf function $\varphi$ a database $D$
such that 
$\varphi_{Q_{nh},D}=\varphi$, where 
$Q_{nh}$ is the non-hierarchical CQ 
given in Eq.~\eqref{eq:basic_non_hierarchical}
and $\varphi_{Q_{nh},D}$ is the lineage of $Q$ over $D$~\cite{DalviS12}. 
For the sake of completeness, we give here the construction.

\begin{lemma}
\label{lemma:provIsBipartiteDNF}
For any $\pptwodnf$ function $\varphi$,
one can construct in time linear in $|\varphi|$ a 
database $D$ such that $\varphi_{Q_{nh},D}=\varphi$. 
\end{lemma}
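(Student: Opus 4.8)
The plan is to build the database $D$ directly from the clause structure of $\varphi$ and then verify, by unfolding the recursive definition of lineage, that $\varphi_{Q_{nh},D}$ coincides with $\varphi$ symbol for symbol. Write $\varphi = \bigvee_{(i,j)\in E}(y_i \wedge z_j)$, where $\bm Y = \{y_i\}$ and $\bm Z = \{z_j\}$ are the two blocks of the variable partition and $E$ indexes the clauses. Applying the lineage rules to $Q_{nh} = \exists X\exists Y\ R(X)\wedge S(X,Y)\wedge T(Y)$ gives $\varphi_{Q_{nh},D} = \bigvee_{a,b\in\Dom}\big(\varphi_{R(a)} \wedge \varphi_{S(a,b)} \wedge \varphi_{T(b)}\big)$, a disjunction of three-literal clauses, one per grounding $(a,b)$ of the two quantified variables. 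The whole difficulty is therefore to force each surviving clause to contribute exactly $y_i\wedge z_j$ rather than a conjunction of three variables.

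The key idea, which I would state as the crux of the reduction, is to place the middle relation $S$ entirely in the exogenous part $D_x$: by the lineage rules an exogenous fact contributes the constant $1$, so every $S$-literal simply drops out of its clause. Concretely I would take $D$ to consist of (i) an endogenous fact $R(i)\in D_n$ with $v(R(i))=y_i$ for each $y_i\in\bm Y$; (ii) an endogenous fact $T(j)\in D_n$ with $v(T(j))=z_j$ for each $z_j\in\bm Z$; and (iii) an exogenous fact $S(i,j)\in D_x$ for each clause index $(i,j)\in E$, with no other facts. Thus the partition of the variables of $\varphi$ into $\bm Y$ and $\bm Z$ is mirrored by the partition of the endogenous facts into the $R$-relation and the $T$-relation, while the clause set $E$ is recorded in the exogenous $S$-relation.

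To finish, I would unfold the lineage over this $D$. A grounding $(a,b)$ yields a non-zero clause exactly when $R(a)$, $S(a,b)$, and $T(b)$ are all facts of $D$, which by construction happens precisely when $a=i$ and $b=j$ for some $(i,j)\in E$. For such a grounding the clause evaluates to $v(R(i))\wedge 1\wedge v(T(j)) = y_i\wedge z_j$, since the exogenous $S$-fact contributes $1$; every other grounding contributes $0$ and vanishes from the disjunction. Hence $\varphi_{Q_{nh},D} = \bigvee_{(i,j)\in E}(y_i\wedge z_j) = \varphi$. For the size bound, $D$ has $|\bm Y|+|\bm Z|+|E|$ facts over constant-arity relations, so $|D| = O(|\varphi|)$ and the whole construction is a single linear pass over the clauses of $\varphi$. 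I do not expect any genuinely hard step: the sole insight is the decision to make $S$ exogenous, which is exactly what lets a three-atom query emit two-variable clauses matching the \pptwodnf shape; after that, the argument is a routine structural unfolding.
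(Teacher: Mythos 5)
Your construction is exactly the one in the paper's proof: one endogenous $R$-fact per variable of the first block, one endogenous $T$-fact per variable of the second block, and one exogenous $S$-fact per clause, with the crucial observation that exogenous $S$-facts contribute the constant $1$ and thus leave two-variable clauses. The only difference is cosmetic—you spell out the lineage-unfolding verification that the paper compresses into ``by construction''—so the proposal is correct and essentially identical to the paper's argument.
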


\begin{proof}
Consider a  $\pptwodnf$ function $\varphi$ over 
disjoint variable sets 
$\bm X$ and $\bm Y$. 
We construct a database $D$ that consists of the relations 
$R= \{a_x \mid x \in \bm X\}$, $T= \{a_y \mid y \in \bm Y\}$, and 
$S = \{(a_x,a_y)\mid (x \wedge y) \text{ is a clause in } \varphi\}$. 
%$\{R^n(a_x) \mid \exists y: (x,y)\in C\}$,
%$\{T^n(b_y) \mid \exists x: (x,y)\in C\}$, and
%$\{S^x(a_x,b_y) \mid (x,y)\in C\}$,
We set all facts in $R$ and $T$ to be endogenous and all
facts in $S$ to be exogenous. 
We associate each fact $a_x$ in $R$ ($b_y$ in $T$) with 
the variable $x$ ($y$). By construction, 
$\varphi_{Q_{nh},D}=\varphi$. The construction of $D$ requires 
a single pass over $\varphi$, hence the construction time is linear in $|\varphi|$.  
\end{proof}

The following lemma establishes the basic building 
block of a polynomial-time  approximation 
scheme for $\countnsat$.

\begin{lemma}
\label{lem:decide_non_sat}
Assume there is a polynomial-time algorithm 
for the problem $\compbanz_{Q_{nh}}$.
%for any database $D$ and facts $f_1$ and $f_2$. 
Given a $\pptwodnf$ function $\varphi$
over disjoint variable sets $\bm X$ and $\bm Y$
and $m \in \mathbb{N}$, we can decide in polynomial time in $|\varphi|$ and $m$ whether 
$\countnsat(\varphi) \leq 2^{|\bm X|} \cdot (\frac{3}{2})^m$.
\end{lemma}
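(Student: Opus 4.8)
The plan is to reduce the threshold question to a single call of the assumed polynomial-time algorithm for $\compbanz_{Q_{nh}}$: I would build one database $D'$ and pick two facts so that the oracle's answer to ``$\banz(D',f_1)\le\banz(D',f_2)$'' is exactly the answer to ``$\countnsat(\varphi)\le 2^{|\bm X|}(3/2)^m$.'' First I would clear denominators: writing $N\defeq\countnsat(\varphi)$, the target $N\le 2^{|\bm X|}(3/2)^m$ is equivalent to the integer inequality $2^m N\le 2^{|\bm X|}3^m$, and this is what the construction will test. I would also record the structural fact $N\ge 2^{|\bm X|}$: every assignment setting all $\bm Y$-variables to $0$ falsifies each clause $x\wedge y$, so at least $2^{|\bm X|}$ assignments are non-satisfying. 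This keeps the Banzhaf values below non-negative and places the threshold in the intended range.

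Using the construction behind Lemma~\ref{lemma:provIsBipartiteDNF} I realize $\varphi$ as the lineage of $Q_{nh}$ over a database whose (endogenous) $R$- and $T$-facts are the $\bm X$- and $\bm Y$-variables and whose (exogenous) $S$-facts encode the clauses, and then augment it with two pieces over fresh constants. The first adds an \emph{apex} $R$-fact $r^*$ joined through $S$ to every $T$-fact of $\varphi$, giving the component $C=\varphi\vee\bigl(r^*\wedge\bigvee_{y\in\bm Y}y\bigr)$. The second is a self-contained gadget $G$ on fresh variables: a length-$m$ matching $\rho=\bigvee_{i=1}^{m}(v_i\wedge w_i)$ together with an apex $R$-fact $g$ joined to all the $w_i$, so $G=\rho\vee\bigl(g\wedge\bigvee_i w_i\bigr)$. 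Both $C$ and $G$ are positive bipartite $2$-DNFs, and since the parts use disjoint constants the full lineage is $\Phi=C\vee G$ with $C$ and $G$ over disjoint variable sets, i.e.\ independent.

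Next I compute the relevant quantities with the characterization $\banz(\psi,x)=\#\psi[x{:=}1]-\#\psi[x{:=}0]$ of Prop.~\ref{prop:alternative_banzhaf}. Setting $r^*$ to $1$ forces $\bigvee_y y$, which is implied by $\varphi$, while setting it to $0$ leaves $\varphi$; a short count then gives $\banz(C,r^*)=N-2^{|\bm X|}$ and $\countnsat(C)=N+2^{|\bm X|}$. The identical computation on $G$, where $\rho$ has non-model count $\countnsat(\rho)=3^m$ and the apex sits over an $R$-side of $m$ variables, yields $\banz(G,g)=3^m-2^m$ and $\countnsat(G)=3^m+2^m$. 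Feeding these into the independent-or formulas Eq.~\eqref{eq:ind_or_count} and Eq.~\eqref{eq:ind_or_banz} (whose scaling factor $2^{n_2}-\#\varphi_2$ is exactly a non-model count) gives $\banz(\Phi,r^*)=(N-2^{|\bm X|})(3^m+2^m)$ and $\banz(\Phi,g)=(3^m-2^m)(N+2^{|\bm X|})$. Subtracting, $\banz(\Phi,g)-\banz(\Phi,r^*)=2^{|\bm X|+1}3^m-2^{m+1}N$, so $\banz(\Phi,r^*)\le\banz(\Phi,g)$ holds exactly when $2^m N\le 2^{|\bm X|}3^m$, that is, exactly when $N\le 2^{|\bm X|}(3/2)^m$. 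A single call of the algorithm on $(D',r^*,g)$ therefore decides the threshold, and $D'$ has size $O(|\varphi|+m)$ and is built in polynomial time.

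The delicate point, on which I would spend the most care, is arranging that the unknown $N$ does not cancel while still making the crossover land exactly at $2^{|\bm X|}(3/2)^m$. The obvious way to expose $N$ as a Banzhaf value — a single-literal disjunct $z$ with $\banz(z\vee\varphi,z)=N$ — is self-defeating, because the component $z\vee\varphi$ also has exactly $N$ non-models, so under the independent-or cross-scaling the factor $N$ multiplies both sides symmetrically and cancels, leaving a $\varphi$-free comparison. The apex gadget is precisely what breaks this coincidence: it makes $\banz(C,r^*)=N-2^{|\bm X|}$ and $\countnsat(C)=N+2^{|\bm X|}$ two \emph{different} affine functions of $N$, and pairing it against the matching gadget, whose analogues are $3^m\mp 2^m$, both preserves the dependence on $N$ and tunes the threshold to the exact value. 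I would also dispatch the minor boundary bookkeeping — the degenerate case $m=0$ and deciding $\le$ versus $<$ — by padding $G$ with a dummy free variable and, if needed, swapping the two query facts.
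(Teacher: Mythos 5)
Your proposal is correct and is essentially the paper's own proof: your $\Phi=C\vee G$ with apex facts $r^*$ and $g$ is exactly the paper's $\xi=(x\,\hat{\wedge}\,\varphi)\vee(y\,\hat{\wedge}\,\psi_m)$ up to renaming, and you arrive at the same two Banzhaf values $(N-2^{|\bm X|})(3^m+2^m)$ and $(3^m-2^m)(N+2^{|\bm X|})$ and the same threshold equivalence. The only cosmetic difference is that you derive the values via Prop.~\ref{prop:alternative_banzhaf} and the independent-or formulas, whereas the paper counts critical sets of the combined function directly.
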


\begin{proof}
We first introduce some notation.    
Given  a  $\pptwodnf$ function $\psi$ over disjoint variable sets 
$\bm X$ and $\bm Y$ and a fresh variable 
$z \not\in (\bm X \cup \bm Y)$,
we denote by $z \hat{\wedge} \psi$
the $\pptwodnf$ function 
$\psi \vee \bigvee_{y \in \bm Y} z \wedge y$.

Consider a $\pptwodnf$ function $\varphi$
over disjoint variable sets $\bm X$ and $\bm Y$
and an $m \in \mathbb{N}$. 
We denote by $\psi_m$ the 
$\pptwodnf$ function $(z^1_1 \wedge z^2_1)
\vee \cdots \vee (z^1_m \wedge z^2_m)$ such that the variables $z_i^j$ do not occur in $\varphi$.
Let $x$ and $y$ be fresh variables not contained in 
$\varphi$ nor in $\psi_m$. 
Consider the \pptwodnf function $\xi = (x \hat{\wedge} \varphi) \vee (y \hat{\wedge} \psi_m)$
whose clauses are visualized in the following figure. 
The variables in $\varphi$ are represented as bullets and each edge between two variables
symbolizes a conjunction between them.

\medskip

\begin{minipage}{\linewidth}
\begin{center}
\tikz {
\node at (0,0)  (11) {$\bullet$};s
\node at (0,-0.5)  (12) {$\bullet$};
\node at (0,-1)  (13) {$\bullet$};

\node at (1,0)  (21) {$\bullet$} edge[-] (12);
\node at (1,-0.5)  (22) {$\bullet$} edge[-] (11);
\node at (1,-1)  (23) {$\bullet$} edge[-] (13);
\node at (1,-1.5)  (24) {$\bullet$} edge[-] (11);
\draw[-] (24) -- (12);

\node at (0,-2)  (14) {$x$};

\draw[-] (14) -- (21);
\draw[-] (14) -- (22);
\draw[-] (14) -- (23);
\draw[-] (14) -- (24);

\node at (1.4,-0.7)  (phi) {$\varphi$};

\node at (2.1,-0.7)  (phi) {$x \hat{\wedge} \varphi$};

\draw[rounded corners=0.1cm, dashed] (-0.2,-1.7) rectangle (1.6,0.2);
\draw[rounded corners=0.1cm, dashed] (-0.4,-2.2) rectangle (2.5,0.4);
%%%%%%%%%%%%%%%%%%

\node at (4,0)  (31) {$z_1^1$};
\node at (4,-0.5)  (32) {$z_2^1$};

\node at (4,-0.9)  (.) {$\cdot$};
\node at (4,-1)  (.) {$\cdot$};
\node at (4,-1.1)  (.) {$\cdot$};

\node at (4,-1.4)  (33) {$z_m^1$};

\node at (5,0)  (41) {$z_1^2$} edge[-] (31);
\node at (5,-0.5)  (42) {$z_2^2$} edge[-] (32);

\node at (5,-0.9)  (.) {$\cdot$};
\node at (5,-1)  (.) {$\cdot$};
\node at (5,-1.1)  (.) {$\cdot$};

\node at (5,-1.4)  (43) {$z_m^2$} edge[-] (33);

\node at (4,-2)  (34) {$y$};

\draw[-] (34) -- (41);
\draw[-] (34) -- (42);
\draw[-] (34) -- (43);

\node at (5.6,-0.7)  (phi) {$\psi_m$};

\node at (6.5,-0.7)  (phi) {$y \hat{\wedge} \psi_m$};

\draw[rounded corners=0.1cm, dashed] (3.7,-1.7) rectangle (6,0.3);
\draw[rounded corners=0.1cm, dashed] (3.5,-2.2) rectangle (7,0.4);
%%%%%%%%%%%%%%%%%%%%%%%%
  }
  \end{center}
\end{minipage}

\medskip

The size of $\xi$ is linear in $|\varphi|$ and $m$.
Using Lemma~\ref{lemma:provIsBipartiteDNF}, we create
in time linear in $|\varphi|$ and $m$ a database $D_m$ such that 
$\varphi_{Q_{nh},D_m} = \xi$. 

Let $f_x$ and $f_y$ be the facts in $D_m$ associated with the variable $x$ and respectively $y$.
We first compute $\banz(Q_{nh},D_m,f_x)$.
This is equal to the number of sets $\bm Z$ of variables of $\xi)$
such that (1) $\bm Z$ does not include $x$, (2) $\bm Z$ does not 
satisfy $\xi$, but (3) $\bm Z \cup \{x\}$ satisfies $\xi$.
Each such set must include at least 
one variable from $\bm Y$.
The number of non-satisfying assignments of $\varphi$
containing at least one variable from 
$\bm Y$ is 
$\countnsat(\varphi) - 2^{|\bm X|}$. 
The number of non-satisfying assignments 
of $y \hat{\wedge} \psi^m$ that do not include 
$y$ is $3^m$ and the 
number of those that {\em do} include 
$y$ is $2^m$. 
Hence, the overall number of non-satisfying assignments 
of $y \hat{\wedge} \psi^m$ is $3^m + 2^m$.
This implies that 
$\banz(Q_{nh},D_m,f_x) = 
(\countnsat(\varphi) - 2^{|\bm X|})\cdot (3^m + 2^m)$.
Analogously, we
compute $\banz(Q_{nh},D_m,f_y)$.
This is equal to the number of sets $\bm Z$ of variables of $\xi$
such that (1) $\bm Z$ does not include $y$, (2) $\bm Z$ does not satisfy $\xi$, but 
(3) $\bm Z \cup \{y\}$ satisfies $\xi$.
Each such set must include at least 
one  $z^2_k$ with $k \in [m]$.
The number of non-satisfying assignments of $\psi_m$
containing at least one variable 
$z^2_k$ is $3^m- 2^m$.
The number of non-satisfying assignments 
of $x \hat{\wedge} \varphi$ that do not include 
$x$ is $\countnsat(\varphi)$ and the 
number of those that {\em do} include 
$x$ is $2^{|\bm X|}$. 
This means that  number of non-satisfying assignments 
of $x \hat{\wedge} \varphi$ 
is $\countnsat(\varphi) + 2^{|\bm X|}$.  
Hence, 
$\banz(Q_{nh},D_m,f_y)  = 
(3^m - 2^m) \cdot (\countnsat(\varphi) + 2^{|\bm X|})$.

Using these quantities, we obtain:

\begin{align*}
&\ \banz(Q_{nh},D_m,f_x) \leq \banz(Q_{nh},D_m,f_y) \\
\Leftrightarrow &\ (\countnsat(\varphi) - 2^{|\bm X|}) (3^m + 2^m) \leq 
(3^m - 2^m) (\countnsat(\varphi) + 2^{|\bm X|}) \\
\overset{(a)}{\Leftrightarrow} &\ 
\countnsat(\varphi) \cdot 3^m +
\countnsat(\varphi) \cdot 2^m -
2^{|\bm X|} \cdot 3^m -
2^{|\bm X|} \cdot 2^m \leq \\
&\ 
\countnsat(\varphi)\cdot  3^m -
\countnsat(\varphi) \cdot 2^m +
2^{|\bm X|} \cdot 3^m -
2^{|\bm X|} \cdot 2^m\\
\overset{(b)}{\Leftrightarrow} &\ 
\countnsat(\varphi)\cdot 2^m - 2^{|\bm X|} \cdot 3^m \leq 
2^{|\bm X|} \cdot 3^m - \countnsat(\varphi) \cdot 2^m\\
\Leftrightarrow &\ 
2 \cdot \countnsat(\varphi) \cdot 2^m  
\leq 2 \cdot 2^{|\bm X|}\cdot 3^m \\ 
\Leftrightarrow &\ 
\countnsat(\varphi)  
\leq 2^{|\bm X|}\cdot (\frac{3}{2})^m 
\end{align*}
Equivalence $(a)$ follows from the 
distributivity of addition and subtraction 
over product. We obtain Equivalence $(b)$
by subtracting $\countnsat(\varphi) \cdot 3^m$
and adding $2^{|\bm X|} \cdot 2^m$ on both sides of the 
inequality. 

We conclude that, given a polynomial-time algorithm for the problem  
$\compbanz_{Q_{nh}}$,
we can decide in polynomial time in $|\varphi|$ and $m$ whether 
$\countnsat(\phi) \leq 2^{|\bm X|} \cdot (\frac{3}{2})^m$.
\end{proof}

We say that an algorithm $A$ 
is an approximation algorithm  
for $\countnsat$ with {\em upper} approximation error $\frac{1}{2}$, 
if for each \pptwodnf function  
$\varphi$, it returns a value $A(\varphi)$ with 
$\countnsat(\varphi) \leq  A(\varphi) \leq \frac{3}{2}\cdot\countnsat(\varphi)$.
Using Lemma~\ref{lem:decide_non_sat}, we can 
design an 
approximation algorithm  
for $\countnsat$ with upper approximation error $0.5$
that runs in polynomial time.

\begin{lemma}
\label{lem:1.5-approximation}
Given a polynomial-time algorithm 
for $\compbanz_{Q_{nh}}$,
%for any database $D$ and facts $f_1$ and $f_2$. 
one can design a polynomial-time approximation algorithm 
for $\countnsat$ with upper approximation error 
$\frac{1}{2}$.
\end{lemma}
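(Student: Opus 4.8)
The plan is to use the decision procedure of Lemma~\ref{lem:decide_non_sat} as a subroutine that, for any target exponent $m$, tells us whether $\countnsat(\varphi) \leq 2^{|\bm X|}\cdot(\tfrac{3}{2})^{m}$. These thresholds form a geometric ladder of ratio $\tfrac{3}{2}$, so locating $\countnsat(\varphi)$ between two consecutive rungs of the ladder immediately yields a multiplicative $\tfrac{3}{2}$-approximation with one-sided (upper) error. Concretely, I would find the smallest exponent $m^\ast$ for which the inequality holds and return $A(\varphi) := 2^{|\bm X|}\cdot(\tfrac{3}{2})^{m^\ast}$.

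First I would bound the search range. Since every clause of $\varphi$ is a conjunction of one variable from $\bm X$ and one from $\bm Y$, every assignment mapping all of $\bm Y$ to $0$ is non-satisfying; hence $\countnsat(\varphi) \geq 2^{|\bm X|}$, which corresponds to $m^\ast \geq 0$. At the other end $\countnsat(\varphi) \leq 2^{|\bm X|+|\bm Y|}$, so the inequality of Lemma~\ref{lem:decide_non_sat} certainly holds once $(\tfrac{3}{2})^{m} \geq 2^{|\bm Y|}$, i.e. for all $m \geq M := \lceil |\bm Y|/\log_2(\tfrac{3}{2})\rceil$. Thus $m^\ast \in \{0,1,\dots,M\}$, a range of size linear in $|\bm Y|$. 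As the predicate ``$\countnsat(\varphi) \leq 2^{|\bm X|}(\tfrac{3}{2})^{m}$'' is monotone in $m$ (the right-hand side strictly increases with $m$), I can binary-search for $m^\ast$ using $O(\log M)$ calls to the procedure of Lemma~\ref{lem:decide_non_sat}, each polynomial in $|\varphi|$ and $m \leq M$.

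It then remains to verify the guarantee for the returned value $A(\varphi) = 2^{|\bm X|}(\tfrac{3}{2})^{m^\ast}$. By the choice of $m^\ast$ as the smallest feasible exponent we get $\countnsat(\varphi) \leq 2^{|\bm X|}(\tfrac{3}{2})^{m^\ast} = A(\varphi)$, the required lower bound; and, unless $m^\ast = 0$, the rung below fails, giving $2^{|\bm X|}(\tfrac{3}{2})^{m^\ast-1} < \countnsat(\varphi)$, so that $A(\varphi) = \tfrac{3}{2}\cdot 2^{|\bm X|}(\tfrac{3}{2})^{m^\ast-1} < \tfrac{3}{2}\cdot\countnsat(\varphi)$. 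The boundary case $m^\ast = 0$ forces $\countnsat(\varphi) = 2^{|\bm X|} = A(\varphi)$ via the lower bound above, so both inequalities hold there too. Since the whole procedure makes only polynomially many polynomial-time oracle calls, it runs in polynomial time, establishing the lemma. The conceptually substantive work is already packaged inside Lemma~\ref{lem:decide_non_sat}; the only point that needs care here is confirming that the ladder of thresholds has ratio exactly $\tfrac{3}{2}$, matching the target error, and that the search range is polynomially bounded, which is precisely where the explicit counting bounds $2^{|\bm X|} \leq \countnsat(\varphi) \leq 2^{|\bm X|+|\bm Y|}$ enter.
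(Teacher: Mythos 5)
Your proposal is correct and takes essentially the same approach as the paper's proof: both use the decision procedure of Lemma~\ref{lem:decide_non_sat} on the geometric ladder of thresholds $2^{|\bm X|}\cdot(\tfrac{3}{2})^{m}$, return the value at the smallest feasible rung, and justify the $\tfrac{3}{2}$-guarantee and the polynomially bounded search range via $2^{|\bm X|} \leq \countnsat(\varphi) \leq 2^{|\bm X|+|\bm Y|}$. The only immaterial differences are that you binary-search the monotone predicate while the paper scans $i = 1,\ldots,2n$ linearly, and that the boundary case $m^{\ast}=0$ you handle explicitly is sidestepped in the paper by starting the scan at $i=1$ and invoking $\countnsat(\varphi) \geq 2^{|\bm X|}$ directly.
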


\begin{proof}
Assume that we have a polynomial-time algorithm for 
$\compbanz_{Q_{nh}}$.
The following is
a polynomial-time  approximation algorithm 
for $\countnsat$ with upper approximation error 
$\frac{1}{2}$.

  \begin{center}
  \renewcommand{\arraystretch}{1.15}
  \begin{tabular}{@{\hskip 0.1in}l}
  \toprule
  \textsc{Approx\#NSat}(\pptwodnf function $\varphi$)\\
  outputs value $v$ with  
  $\countnsat(\varphi) \leq  v \leq \frac{3}{2}\cdot\countnsat(\varphi)$\\
  \midrule
  \LET $\varphi$ be over the disjoint variable sets $\bm X$ and $\bm Y$\\
  $n:=$ the number of variables in $\varphi$\\
  $v:=0$ // \texttt{initialization} \\
  \FOREACH $i = 1 , \ldots , 2n$ \\
  \TAB \IF $\countnsat(\varphi) \leq 
  (\frac{3}{2})^{i} \cdot 2^{|\bm X|}$ and $v = 0$ \\ 
   \TAB \TAB $v:= (\frac{3}{2})^{i} \cdot 2^{|\bm X|}$ \\
\RETURN  $v$ \\
  \bottomrule
  \end{tabular}
  \end{center}

  The algorithm 
  returns
  $(\frac{3}{2})^{i} \cdot 2^{|\bm X|}$
  for the smallest   $i \in \{1, \ldots , 2n\}$ 
  such $\countnsat(\varphi) \leq 
  (\frac{3}{2})^{i} \cdot 2^{|\bm X|}$
  (and returns $0$ if no such $i$ exists).
  \nop{
  searches for the smallest 
  $i \in \{1, \ldots , 2n\}$
  and returns 
  $2n$ and checks for each $i$ whether 
  $\countnsat(\varphi) \leq 
  (\frac{3}{2})^{i} \cdot 2^{|\bm X|}$. 
  If this holds, it stops and returns
  $(\frac{3}{2})^{i} \cdot 2^{|\bm X|}$.
}

  \paragraph{Running time}
  The variable $i$ iterates over linearly many 
  values. 
  Each of these values is linear in $|\varphi|$.
  By Lemma~\ref{lem:decide_non_sat}, 
we can check   the condition $\countnsat(\varphi) \leq 
  (\frac{3}{2})^{i} \cdot 2^{|\bm X|}$
in polynomial time,
given a polynomial-time algorithm for $\compbanz_{Q_{nh}}$.

\paragraph{Upper approximation error $\frac{1}{2}$}
First, observe that 
$$2^{|\bm X|} \ \overset{(a)}{\leq}\  \countnsat(\varphi)\ 
\overset{(b)}{\leq} \ (\frac{3}{2})^{2n}$$
Inequality $(a)$ is implied by the fact that 
each subset of $\bm X$ is a non-satisfying assignment for $\varphi$. 
Inequality $(b)$ holds because of
$2^n < (\frac{3}{2})^{2n}$ $=$ $(\frac{3^2}{2^2})^{n}$  .
Due to these inequalities, there exists an 
$i \in \{1, \ldots , 2n\}$ such that
$$(\frac{3}{2})^{i-1}\cdot 2^{|\bm X|} \ 
\overset{(c)}{\leq}\  \countnsat(\varphi)\ 
\overset{(d)}{\leq} \ (\frac{3}{2})^{i}\cdot 2^{|\bm X|}.$$
Algorithm \textsc{Approx\#NSat}
returns $(\frac{3}{2})^{i}\cdot 2^{|\bm X|}$ for such $i$.
It holds
$$(\frac{3}{2})^{i}\cdot 2^{|\bm X|} 
= \frac{3}{2} (\frac{3}{2})^{i-1}\cdot 2^{|\bm X|}
\leq \frac{3}{2} \countnsat(\varphi),$$
where the last inequality follows from Inequality 
$(c)$. Hence, together with Inequality $(d)$,
we obtain $\countnsat(\varphi) \leq (\frac{3}{2})^i \cdot 2^{|\bm X|}
\leq \frac{3}{2} \cdot \countnsat(\varphi)$.
\end{proof}

We are ready to prove 
Proposition~\ref{prop:comp_banz_hard}.
Given a \pptwodnf function $\varphi$ and 
$k \in \mathbb{N}$, we denote 
by $\varphi^k$ the \pptwodnf function 
$\varphi_1 \vee \cdots \vee \varphi_k$,
where each $\varphi_i$ results from $\varphi$
by replacing each variable with a fresh one. 
Since non-satisfying assignments of $\varphi^k$ consist
of non-satisfying assignments of 
$\varphi_1, \ldots , \varphi_k$, we have
\begin{align} 
\label{eq:countphi_k=countphi^k}
\countnsat(\varphi^k) = \countnsat(\varphi)^k
\end{align}

Assume that the problem $\compbanz_{Q_{nh}}$
can be solved in polynomial time. In the following,
we design an 
\fptas for \countnsat. Then, 
Lemma~\ref{cl:countbunsat_not_fptas} implies 
that there is an \fptas for \countbis, which completes the proof of 
Proposition~\ref{prop:comp_banz_hard}.

Consider an arbitrary  \pptwodnf function $\varphi$ and $0 < 
\epsilon < 1$. It suffices to design an algorithm 
that runs in time polynomial in 
$|\varphi|$ and $\epsilon^{-1}$ and
computes a value $v$ such that 
\begin{align}
\label{eq:upper_approx}
\countnsat(\varphi) \leq v \leq (1+ \epsilon) \cdot \countnsat(\varphi).
\end{align}
We choose a $\lambda$ such that 
$\frac{\epsilon}{2} \leq \lambda
\leq \epsilon$ and $\lambda^{-1}$
is an integer. We explain in the following how to compute a value $v$ such that 
$\countnsat(\varphi) \leq v \leq (1+ \lambda) \cdot \countnsat(\varphi)$, which implies Eq.~\eqref{eq:upper_approx}.

We construct $\varphi^{2\lambda^{-1}}$ and use 
Lemma~\ref{lem:1.5-approximation} to 
compute a value $\hat{v}$ 
such that 
$\countnsat(\varphi^{2\lambda^{-1}})$ 
$\leq$ $\hat{v}$ $\leq$ 
$\frac{3}{2} \cdot \countnsat(\varphi^{2\lambda^{-1}})$.
Due to Eq.~\eqref{eq:countphi_k=countphi^k}, it holds 
\begin{align*}
\countnsat(\varphi)^{2\lambda^{-1}} \overset{(a)}{\leq} \hat{v} 
\overset{(b)}{\leq} 
\frac{3}{2} \cdot \countnsat(\varphi)^{2\lambda^{-1}}.
\end{align*}
Since $|\varphi^{2\lambda^{-1}}|$
is polynomially bounded in $|\varphi|$ and 
$\lambda^{-1}$, hence in 
$\epsilon^{-1}$, the computation time
is polynomial in $|\varphi|$ and
$\epsilon^{-1}$. We show that
for $v = \hat{v}^{\frac{1}{2\lambda^{-1}}}$, 
it holds
\begin{align*}
\countnsat(\varphi) \overset{(c)}{\leq} v \overset{(d)}{\leq} 
(1 + \lambda) \cdot \countnsat(\varphi).
\end{align*}
Inequality $(c)$ follows from Inequality $(a)$.
Inequality $(b)$ implies 
$v \leq (\frac{3}{2})^{\frac{1}{2\lambda^{-1}}}
\cdot \countnsat(\varphi)$.
Then, Inequality $(d)$ follows from 
$(\frac{3}{2})^{\frac{1}{2\lambda^{-1}}} < 1+\lambda$, which holds because:
\begin{align*}
(\frac{3}{2})^{\frac{1}{2\lambda^{-1}}} < 1+\lambda
\Leftrightarrow
(\frac{3}{2})^{\frac{\lambda}{2}} < 1+\lambda
\Leftrightarrow
\frac{\lambda}{2}\cdot \ln(\frac{3}{2}) < \ln(1+\lambda)
\end{align*}
To obtain the last equivalence, 
we take the natural logarithm on both sides of the 
inequality. 
The last inequality holds because of
$0 < \ln(\frac{3}{2}) < 1$
and 
$\frac{\lambda}{2} < 
\frac{\lambda}{1+\lambda} \leq
\ln(1+\lambda)$,
where 
$\frac{\lambda}{1+\lambda} \leq
\ln(1+\lambda)$ is 
the standard inequality for the natural logarithm~\cite{mitrinovic2013classical}. 

\subsection{Intractability in the General Case}
\label{sec:intrac_arbit_non_hierarchical}
The generalization of the intractability result for the basic 
non-hierarchical CQ $Q_{nh}$ in Eq.~\eqref{eq:basic_non_hierarchical} to 
arbitrary non-hierarchical Boolean CQs without self-joins
closely follows 
prior work~\cite{dalvi2007efficient, LivshitsBKS:LMCS:2021}:
We give a polynomial-time reduction from  
$\compbanz_{Q_{nh}}$
to $\compbanz_Q$
for any non-hierarchical Boolean CQ $Q$ without self-joins. 
From this, it follows: A polynomial-time algorithm for 
$\compbanz_Q$ implies 
a polynomial-time algorithm for 
$\compbanz_{Q_{nh}}$, 
which, as explained in Sec.~\ref{sec:intrac_basic_non_hierarchical},  implies that there is an $\fptas$ for $\countbis$. 

We explain the reduction. Consider a non-hierarchical Boolean CQ $Q$ without self-joins
The query $Q$ must contain 
three atoms 
$R(X, \bm X)$, $S(X,Y, \bm Z)$, and 
$T(Y, \bm Y)$ such that 
$X \notin \bm Y$ and $Y \notin \bm X$.
Given an input database $D_{nh}$ for $\compbanz_{Q_{nh}}$ 
containing three relations 
$R_{nh}$, 
$S_{nh}$, and $T_{nh}$,
we construct as follows an input database $D$ for $\compbanz_Q$. 
The values in the  $X$-column of $R_{nh}$ ($Y$-column of $T_{nh}$) are copied 
to the $X$-column of $R$ ($Y$-column of $T$).
The values in the  $X$-column of $S_{nh}$ are copied 
to each $X$-column of all relations besides $R$ in $D$.
Similarly, the values in the  $Y$-column of $S_{nh}$ are copied 
to each $Y$-column of all relations besides $T$ in $D$.
Partial facts, i.e., those for which only some columns are assigned to values, 
are completed using a fixed dummy value for all
columns with missing values.
The facts in $R$ and $T$ are set to be endogenous while 
all other facts in $D$ are set to be exogenous. 
Observe that we have a one-to-one mapping between 
the endogenous facts in $D_{nh}$ and those in $D$.
The Banzhaf value of each endogenous fact in $D_{nh}$
is the same as the Banzhaf value 
of the corresponding fact in $D$. Hence, a polynomial-time algorithm
for $\compbanz_Q$ implies a polynomial-time 
algorithm for $\compbanz_{Q_{nh}}$. 
\section{Missing Details in Section~\ref{sec:related}}
\label{app:shapley}
In this work we investigate the Banzhaf value 
as a measure to quantify the contribution of database facts to 
query results. 
Prior work considered the Shapley value to score facts in query answering~\cite{LivshitsBKS:LMCS:2021}. 
In this section we show that Banzhaf-based and Shapley-based ranking 
of facts can differ already for very simple queries and databases.

\paragraph{Shapley Value}
We recall the definition of the Shapley value of a variable in a Boolean function:

\begin{definition}[Shapley Value of Boolean Variable]
\label{def:shapley}
Given a Boolean function $\varphi$ over $\bm X$, 
the {\em Shapley value} of a variable $x \in \bm X$ in $\varphi$ is:

\begin{align}
\label{eq:shapley_value}
\shap(\varphi,x) \defeq \sum_{\bm Y \subseteq \bm X \setminus \{x\}} 
c_{\bm Y} \cdot 
 \big( \varphi[\bm Y \cup \{x\}] - \varphi[\bm Y] \big),
\end{align}
where $c_{\bm Y} = \frac{|\bm Y|! (|\bm X| - |\bm Y| - 1)!}{|\bm X|!}$.
\end{definition}
Observe that the Shapley value formula in Eq.~\eqref{eq:shapley_value} differs from the 
Banzhaf value formula in Eq.~\eqref{eq:banzhaf_value} in that 
each term $\varphi[\bm Y \cup \{x\}] - \varphi[\bm Y]$ in the former formula is
multiplied by the coefficient $c_{\bm Y}$.

Analogous to the case of Banzhaf values, the Shapley value of a database fact is defined as the Shapley value of that fact in the query lineage. 
Given a Boolean query $Q$, a database $D = (D_n,D_x)$, and an endogenous fact $f \in D_n$, let $v(f)$ be the variable associated to $f$. We define:
\begin{align*}
%\label{eq:shapley_fact}
\shap(Q,D,f) \defeq \shap(\varphi_{Q,D},v(f)),
\end{align*}
where $\varphi_{Q,D}$ is the lineage of $Q$ over $D$.

\paragraph{Critical Sets}
Both the Banzhaf and the Shapley value of a database fact $f$ can be expressed in terms of the number of fact sets for which the inclusion of $f$  turns the query result from $0$ to $1$. 
Consider a Boolean query $Q$, a database $D = (D_n,D_x)$, and an endogenous fact $f \in D_n$.
We call a set $D' \subseteq (D_n \setminus \{f\})$ {\em critical} for $f$ if  
$Q(D' \cup D_x) = 0$ and $Q(D' \cup D_x \cup \{f\}) = 1$. We denote 
by $\#_kC(Q,D,f)$
the number of critical sets of $f$ of size $k$. 
If $Q$ and $D$ are clear from the context, we use 
the abbreviation $\#_kC(f)$.
Observe that the Banzhaf value of  $f$ is
exactly the number of critical sets of $f$.
Hence, we can compute it 
by summing up the
numbers of critical sets of all possible sizes:
\begin{align}
\label{eq:banzhaf_critical_set}
\banz(Q,D,f) = &\ \sum_{k = 0}^{|D_n|-1} \#_kC(Q,D,f) 
\end{align}
We obtain from the formula above the formula for the Shapley value of $f$
by scaling each count $\#_kC(Q,D,f)$ by the coefficient 
$c_k$, which is equal to $c_{\bm Y}$ for any $\bm Y$ with
$|\bm Y| = k$:
\begin{align}
\label{eq:shapley_critical_set}
\shap(Q,D,f) = &\ \sum_{k = 0}^{|D_n|-1} c_k\cdot \#C_k(Q,D,f),
\end{align}
where $c_k = \frac{k! (|D_n| - k - 1)!}{|D_n|!}$.

\paragraph{Difference between Banzhaf-based and Shapley-based Ranking}
We give a query and a database such that the ranking of facts in the database  based on Banzhaf is different
from their ranking based on Shapley. 
Consider the query 
$Q = \exists X\exists Y\exists Z R(X) \wedge S(X,Y), T(X,Z)$ and 
the database consisting of the following three relations $R$, $S$, and $T$. All 18 facts in the database are assumed to be endogenous. 

\begin{center}
\begin{tikzpicture}
\node (R) at (0,1) {$R$};
\node (R_table) at (0,0) {
    \begin{tabular}{c}
         $X$ \\
        \hline
        $a_1$ \\
        $a_2$ \\ 
        \hline
    \end{tabular}
 };

\begin{scope}[xshift= 2cm, yshift= -0.58cm]
\node (S) at (0,1.6) {$S$};
\node (S_table) at (0,0) {
  \begin{tabular}{cc}
         $X$ &  $Y$ \\
        \hline
         $a_1$ & $b_1$ \\
         $a_1$ & $b_2$\\
         $a_1$ & $b_3$\\
         $a_2$ & $b_1$\\
         $a_2$ & $b_2$\\
        \hline
    \end{tabular}
 };
\end{scope}

\begin{scope}[xshift= 4.5cm, yshift= -1.8cm]
  \node (T) at (0,2.8) {$T$};
\node (T_table) at (0,0) {
   \begin{tabular}{cc}
        $X$ & $Z$ \\
        \hline
        $a_1$ & $b_1$ \\
         $a_1$ & $b_2$\\
        $a_1$ & $b_3$\\
        $a_2$ & $b_1$\\
        $a_2$ & $b_2$\\
        $a_2$ & $b_3$\\
        $a_2$ & $b_4$\\
        $a_2$ & $b_5$\\
        $a_2$ & $b_6$\\
        $a_2$ & $b_{7}$\\
        $a_2$ & $b_{8}$\\
        \hline
    \end{tabular}
 };
\end{scope}
 \end{tikzpicture}    
\end{center}

A set $D' \subseteq D \setminus \{R(a_1)\}$ is critical for 
the fact $R(a_1)$ if and only if the following conditions holds:
\begin{enumerate}
    \item $D' \cap \{S(a_1, b_i)| i \in [3]\} \neq \emptyset$
    \item $D' \cap \{T(a_1, b_i)| i \in [3]\} \neq \emptyset$
    \item $R(a_2) \not \in D'$ or $D' \cap \{S(a_2,b_i) | i \in [2]\} = \emptyset$ or $D' \cap \{T(a_2,b_i) | i \in [8]\} = \emptyset$
\end{enumerate}

A set $D' \subseteq D \setminus \{R(a_2)\}$ is critical for
the fact $R(a_2)$ if and only if the following conditions holds:
\begin{enumerate}
    \item $D' \cap \{S(a_2, b_i) \mid i \in [2]\} \neq \emptyset$
    \item $D' \cap \{T(a_2, b_i) \mid i \in [8]\} \neq \emptyset$
    \item $R(a_1) \not \in D'$ or $D' \cap \{S(a_1,b_i) | i \in [3]\} = \emptyset$ or  $D' \cap \{T(a_1,b_i) | i \in [3]\} = \emptyset$
\end{enumerate}

\begin{center}
\begin{tabular}{|c|c c| c c|}
    \hline
     $k$ & $\#_kC(R(a_1))$ & $\#_kC(R(a_2))$ & $c_k \cdot \#_kC(R(a_1))$ & $c_k \cdot \#_kC(R(a_2))$\\
     \hline
     0 & 0 & 0 & 0 & 0 \\
     1 & 0 & 0 & 0 & 0 \\
     2 & 9 & 16 & 0.0037 & 0.0065 \\
     3 & 117 & 176 & 0.0096 & 0.0144 \\
     4 & 708 & 924 & 0.0165 & 0.0216 \\
     5 & 2,502 & 2,936 & 0.0225 & 0.0264 \\
     6 & 5,968 & 6,430 & 0.0268 & 0.0289 \\
     7 & 10,262 & 10,326 & 0.0293 & 0.0295 \\
     8 & 13,129 & 12,526 & 0.03 & 0.0286 \\
     9 & 12,695 & 11,638 & 0.029 & 0.0266 \\
     10 & 9,329 & 8,317 & 0.0266 & 0.0238 \\
     11 & 5,191 & 4,553 & 0.0233 & 0.0204 \\
     12 & 2,156 & 1,883 & 0.0194 & 0.0169 \\
     13 & 649 & 572 & 0.0151 & 0.0134 \\     
     14 & 134 & 121 & 0.0109 & 0.0099 \\
     15 & 17 & 16 & 0.0069 & 0.0065 \\
     16 & 1 & 1 & 0.0033 & 0.0033 \\
     17 & 0 & 0 & 0 & 0 \\
     \hline
     \textbf{Total} & \textbf{62,867} & \textbf{60,435} & \textbf{0.2723} & \textbf{0.2766} \\
     \hline
\end{tabular}
\end{center}

The above table gives for each 
$k \in \{0, \ldots , 17\}$, 
the number 
$\#C_k(R(a_1))$ of critical sets  of size $k$ for $R(a_1)$ (second column),
the number $\#_kC(R(a_2))$ of critical sets of size $k$ for $R(a_2)$  (third column) and the values $c_k\cdot \#_kC(Q,D,a_1)$ and 
$c_k\cdot \#C(Q,D,a_2)$ (fourth and fifth column), where
$c_k = \frac{k! (17 - k)!}{18!}$
(the script computing these numbers is available in the repository of this work \cite{Article_git}). 
The numbers in the fourth and fifth column are rounded to four decimal digits.
By Eq.~\eqref{eq:banzhaf_critical_set}, the sum of the values in the second (third) column is the Banzhaf value of $R(a_1)$ ($R(a_2)$). By Eq.~\eqref{eq:shapley_critical_set}, the sum of the values in the fourth (fifth) column is the Shapley  value of $R(a_1)$ ($R(a_2)$).
 We observe that $\banz(Q,D,R(a_1)) > \banz(Q,D,R(a_2))$ while 
 $\shap(Q,D,R(a_1)) < \shap(Q,D,R(a_2))$.

\section{Missing Details in Section~\ref{sec:experiments}}
\label{app:topk}
We show the execution times of the variant of $\textsc{IchiBan}$ that decides the top-$k$ results with certainty. Table \ref{tab:rankingAlgorithms} presents a breakdown of the execution times and success rates for the different datasets.

\paragraph{Academic Dataset} On the Academic dataset, $\textsc{IchiBan}$ consistently outperforms both $\textsc{ExaBan}$ and $\textsc{AdaBan}0.1$. Specifically, for the tested values of $k$, $\textsc{IchiBan}$ demonstrates a mean execution time that is 13-25 times faster than $\textsc{ExaBan}$ and 5-9 times faster than $\textsc{AdaBan}0.1$.

\paragraph{TPC-H Dataset} 
On the TPC-H dataset,
$\textsc{IchiBan}$ outperforms 
$\textsc{ExaBan}$ and $\textsc{AdaBan}0.1$ only in case of $k = 1$.
For other values of $k$, 
$\textsc{IchiBan}$ shows significantly poorer performance. The mean execution time for these values of $k$ is approximately 2 times slower than $\textsc{ExaBan}$ and about 50 times slower than $\textsc{AdaBan}0.1$.

\paragraph{IMDB Dataset} On the IMDB dataset,
the performance of $\textsc{IchiBan}$ varies with the value of $k$. For $k =1$ and $k =3$, $\textsc{IchiBan}$ shows  faster running times and higher success rates than $\textsc{ExaBan}$ and $\textsc{AdaBan}0.1$.
For larger values of $k$, $\textsc{IchiBan}$'s performance gradually becomes worse. While it still outperforms $\textsc{ExaBan}$, it is  about 2-3 times slower than $\textsc{AdaBan}0.1$ and has lower success rate.

\paragraph{Performance Analysis} 
We attribute the variability in the performance of $\textsc{IchiBan}$ to the different properties of the datasets.
The good performance of $\textsc{IchiBan}$ for $k=1$ can be 
explained by the observation that in almost all of the 
lineages, there is a clear top-1 variable that appears in all 
or almost all of the clauses. Thus, the problem of 
identifying the top-1 variable is easy. 
For other values of $k$, we can still see a speedup compared 
to $\textsc{AdaBan}0.1$ on many of the lineages, which means
that in many cases we do not need a high precision for the
bounds in order to achieve a separation of the Banzhaf values.
In cases where $\textsc{IchiBan}$ performs poorly,
the reason for the bad performance is often 
the high number of ties, especially for  variables with small Banzhaf values. This means that the variant of $\textsc{IchiBan}$ that tries to decide top-k with certainty needs to expand the 
d-tree completely while repeatedly calculating bounds for the Banzhaf values after expansion steps. This results in 
higher execution times than for $\textsc{ExaBan}$, which calculates Banzhaf values only after the d-tree is expanded completely. 

\begin{table}
   \centering
   \resizebox{0.75\textwidth}{!}{%
   \begin{tabular}{| c | c | c | c c c c c c|}
       \hline
        \multirow{2}{*}{\textbf{Dataset}} & \multirow{2}{*}{\textbf{Algorithm}} & {\makecell{\textbf{Success} \\ \textbf{rate}}} &
        \multicolumn{6}{c |}{\textbf{Execution times [sec]}} \\
        & & & & & & & & \\
        & & &\textbf{Mean} &  \textbf{p50} & \textbf{p90} & \textbf{p95} & \textbf{p99} & \textbf{Max} \\
        \hline
        {\multirow{2}{*}{{\texttt{Academic}}}} 
        
       \revision{} & \revision{Top1} & \revision{100\%} & \revision{0.083} & \revision{0.001} &
       \revision{0.002} & \revision{0.004} & \revision{6.635} & 
       \revision{7.738}\\
        
       \revision{} & \revision{Top3} & \revision{99.99\%} & \revision{0.097} & \revision{0.001} &
       \revision{0.006} & \revision{0.035} & \revision{5.841} & 
       \revision{89.54}\\

       \revision{} & \revision{Top5} & \revision{99.99\%} & \revision{0.152} & \revision{0.001} &
       \revision{0.015} & \revision{0.148} & \revision{8.740} & 
       \revision{149.1}\\
        
       \revision{} & \revision{Top10} & \revision{99.99\%} & \revision{0.109} & \revision{0.001} &
       \revision{0.008} & \revision{0.031} & \revision{6.377} & 
       \revision{120.8}\\
        
        \hline

        {\multirow{2}{*}{{\texttt{IMDB}}}} 
       \revision{} & \revision{Top1} & \revision{100\%} & \revision{0.003} & \revision{0.001} &
       \revision{0.002} & \revision{0.003} & \revision{0.018} & 
       \revision{72.188}\\  
         
       \revision{} & \revision{Top3} & \revision{99.94\%} & \revision{0.405} & \revision{0.001} &
       \revision{0.007} & \revision{0.018} & \revision{0.276} & 
       \revision{3568}\\

       \revision{} & \revision{Top5} &   
       \revision{99.82\%} & \revision{1.033} & \revision{0.001} & \revision{0.016} &
       \revision{0.062} & \revision{4.652} & \revision{3330}\\
        
       \revision{} & \revision{Top10} & \revision{99.79\%} & \revision{1.863} & \revision{0.001} &
       \revision{0.024} & \revision{0.114} & \revision{5.752} & 
       \revision{3430}\\

       \hline
        
       {\multirow{2}{*}{{\texttt{TPC-H}}}} 
       \revision{} & \revision{Top1} & \revision{98.18\%} & \revision{0.329} & \revision{0.001} &
       \revision{1.687} & \revision{3.191} & \revision{3.626} & 
       \revision{3.746}\\  
         
       \revision{} & \revision{Top3} & \revision{91.52\%} & \revision{9.903} & \revision{0.001} &
       \revision{0.007} & \revision{86.03} & \revision{207.5} & 
       \revision{273.8}\\

       \revision{} & \revision{Top5} & \revision{91.52\%} & \revision{9.235} & \revision{0.001} &
       \revision{0.010} & \revision{79.88} & \revision{192.9} & 
       \revision{226.7}\\
        
       \revision{} & \revision{Top10} & \revision{91.52\%} & \revision{9.175} & \revision{0.001} &
        \revision{0.001} & \revision{83.69} & \revision{186.6} & 
        \revision{230.5}\\
        
        \hline
    \end{tabular}%
    }
    \caption{Top-k computation for Academic, IMDB and TPC-H datasets; execution times with respect to each lineage expression for which the algorithm succeeded within a timeout of 1 hour.}
   \label{tab:rankingAlgorithms}

\end{table}
\end{document}